\newtheoremstyle{sfname}%
{}{}%
{\slshape}{}%
{\bfseries\sffamily}{.}%
{ }%
{}
\theoremstyle{sfname}
\newtheorem{theorem}{Theorem}
\newtheorem{lemma}[theorem]{Lemma}
\newtheorem{definition}{Definition}
\newtheorem{observation}{Observation}
\newcommand{\mypar}[1]{\smallskip\noindent{\sffamily\bfseries #1}}
\newcommand{\etal}{\textit{et al.}\xspace}
\newcommand{\from}{\colon\xspace}
\newcommand{\eps}{\varepsilon\xspace}
\newcommand{\R}{\mathbb{R}\xspace}
\newcommand{\N}{\mathbb{N}\xspace}
\newcommand{\D}{\mathcal{D}\xspace}
\newcommand{\F}{\mathcal{F}\xspace}
\newcommand{\CP}{\mathit{CP}\xspace}
\newcommand{\BP}{\mathit{BP}\xspace}
\newcommand{\CQ}{\mathit{CQ}\xspace}
\newcommand{\BQ}{\mathit{BQ}\xspace}
\renewcommand{\Im}{\mathrm{Im}\xspace}
\newcommand{\simpl}{\mathrm{simpl}\xspace}
\newcommand{\yes}{YES\xspace}
\newcommand{\no}{NO\xspace}
\newcommand{\normInf}[1]{\ensuremath{\| #1 \|_\infty}}
\newcommand{\norm}[1]{\ensuremath{\| #1 \|}}
\newcommand{\f}{Fr\'echet\xspace}
\title{\sffamily\bfseries A Subquadratic $\boldsymbol{n^\eps}$-approximation\\ for the Continuous \f Distance}
\author{
Thijs van der Horst\thanks{Department of Information and Computing Sciences, Utrecht University, Netherlands and Department of Mathematics and Computer Science, TU Eindhoven, Netherlands; \href{mailto:t.w.j.vanderhorst@uu.nl}{t.w.j.vanderhorst@uu.nl}} \and
Marc van Kreveld\thanks{Department of Information and Computing Sciences, Utrecht University, Netherlands; \href{mailto:m.j.vankreveld@uu.nl}{m.j.vankreveld@uu.nl}} \and
Tim Ophelders\thanks{Department of Information and Computing Sciences, Utrecht University, Netherlands and Department of Mathematics and Computer Science, TU Eindhoven, Netherlands; \href{mailto:t.a.e.ophelders@uu.nl}{t.a.e.ophelders@uu.nl}} \and
Bettina Speckmann\thanks{Department of Mathematics and Computer Science, TU Eindhoven, Netherlands; \href{mailto:b.speckmann@tue.nl}{b.speckmann@tue.nl}}
}
\date{}
\begin{document}

\maketitle

\thispagestyle{empty} 

\begin{abstract}
    \noindent
    The \f distance is a commonly used similarity measure between curves.
    It is known how to compute the continuous \f distance between two polylines with $m$ and $n$ vertices in $\mathbb{R}^d$ in $O(mn (\log \log n)^2)$ time; doing so in strongly subquadratic time is a longstanding open problem.
    Recent conditional lower bounds suggest that it is unlikely that a strongly subquadratic algorithm exists.
    Moreover, it is unlikely that we can approximate the \f distance to within a factor $3$ in strongly subquadratic time, even if $d=1$.
    The best current results establish a tradeoff between approximation quality and running time.
    Specifically, Colombe and Fox (SoCG, 2021) give an $O(\alpha)$-approximate algorithm that runs in $O((n^3 / \alpha^2) \log n)$ time for any $\alpha \in [\sqrt{n}, n]$, assuming $m = n$.
    In this paper, we improve this result with an $O(\alpha)$-approximate algorithm that runs in $O((n + mn / \alpha) \log^3 n)$ time for any $\alpha \in [1, n]$, assuming $m \leq n$ and constant dimension $d$.
\end{abstract}

\newpage

\section{Introduction}

Measuring similarity is a fundamental data analysis task which is used, for example, to cluster data, search in databases, or construct phylogenetic trees. The similarity of two geometric shapes is generally defined via distance measures. Common distance measures are the Hausdorff distance, the area of symmetric difference, and the Fr\'echet distance. While the Hausdorff distance is defined for any two compact subsets of a space, the area of symmetric difference applies only to simple polygons in the plane. The Fr\'echet distance is most easily defined for curves, although generalizations to surfaces exist. Since the Fr\'echet distance respects the order along the two input curves, it is generally the measure of choice to determine the similarity of two curves.

Godau~\cite{godau91continuous_frechet} presented the first polynomial time algorithm for computing the continuous \f distance.
The algorithm runs in $O((mn^2 + m^2 n) \log mn)$ time, and was later improved by Alt and Godau~\cite{alt95continuous_frechet} into an algorithm with running time $O(mn \log mn)$.
Both algorithms rely on an efficient solution for the decision variant of the problem, and then using parametric search~\cite{megiddo83parametric_search} to solve the optimization problem.
Around the same time, the discrete version of the problem was studied by Eiter and Mannila~\cite{eiter94discrete_frechet}, who give an $O(mn)$ time algorithm.

Under the word RAM model of computation, these results have since been improved.
Agarwal~\etal~\cite{agarwal14discrete_frechet} gave an $O(mn \log \log n / \log n)$ time algorithm for the discrete problem, and Buchin~\etal~\cite{buchin17continuous_frechet} improved the complexity bound for the continuous problem to $O(mn (\log \log n)^2)$.
Unfortunately, there is little hope of improving these results further, at least significantly, as Bringmann~\cite{bringmann14hardness} showed that a \emph{strongly subquadratic} (i.e., $(mn)^{1-\Omega(1)}$) time algorithm would refute the \emph{Strong Exponential Time Hypothesis} (SETH).

The main focus has hence been on efficient \emph{approximation algorithms} and on specific classes of curves.
Aronov~\etal~\cite{aranov06frechet_revisited} studied the discrete \f distance between \emph{$\kappa$-bounded} curves and \emph{backbone} curves, which model some ``realistic'' families of curves.
They give a $(1+\eps)$-approximation algorithm running in strongly subquadratic time.
Later on, Driemel~\etal~\cite{driemel12realistic} gave efficient $(1+\eps)$-approximation algorithms for computing the continuous \f distance between various realistic curves.
These families of curves again include $\kappa$-bounded curves, but also \emph{$c$-packed} curves.
An improved algorithm for $c$-packed curves, which matches conditional lower bounds, was later given by Bringmann and K\"unnemann~\cite{bringmann17improved_cpacked}.
Gudmundsson~\etal~\cite{gudmundsson19long} give a $\sqrt{d}$-approximate algorithm for the case where the curves have sufficiently long edges (relative to their \f distance) that runs in linear time.
They give a near-linear time exact algorithm for this case as well.

When dealing with arbitrary curves, SETH again gives conditional lower bounds.
Bringmann~\cite{bringmann14hardness} not only gave a conditional lower bound for the exact problems, but also showed that a $1.001$-approximation algorithm running in strongly subquadratic time would refute the SETH.
This lower bound was later improved by Buchin~\etal~\cite{buchin19seth_says}, who prove that assuming SETH, an approximation factor of $3$ is the best we can hope for, even for curves in one dimension.

For computing the discrete \f distance between arbitrary curves, Bringmann and Mulzer~\cite{bringmann16approximate_discrete} gave an $O(\alpha)$-approximate algorithm with a running time of $O(n^2 / \alpha)$, for any $\alpha \in [1, n / \log n]$.
Chan and Rahmati~\cite{chan18improved_approximation} later gave an improved algorithm that returns an $O(\alpha)$-approximation in $O(n^2 / \alpha^2)$ time, for any $\alpha \in [1, \sqrt{n / \log n}]$.

Bringmann and Mulzer~\cite{bringmann16approximate_discrete} also gave an approximation algorithm for the continuous problem.
The running time of this algorithm is linear, but the approximation ratio is exponential (i.e., $2^{\Theta(n)}$).
This bound was improved only recently, by Colombe and Fox~\cite{colombe21continuous_frechet}.
They give the best result of prior work: an $O(\alpha)$-approximate algorithm that runs in $O((n^3 / \alpha^2) \log n)$ time, for any $\alpha \in [\sqrt{n}, n]$.

\mypar{Results.} We improve upon the result of Colombe and Fox~\cite{colombe21continuous_frechet}.
We give an $O(\alpha)$-approximate algorithm that runs in $O((n + mn / \alpha) \log^3 n)$ time, for any $\alpha \in [1, n]$, if the dimension $d$ of the input curves is constant.

After introducing the necessary preliminaries in Section~\ref{subsec:prelim}, we present an outline of both our algorithm and the paper in Section~\ref{subsec:outline}. Section~\ref{sec:algorithm} then presents our algorithm in detail; the technical Sections~\ref{sec:piecewise_monotone} and~\ref{sec:computing_exit_sets} contain further details for the two major subroutines used by our algorithm.

\subsection{Preliminaries}\label{subsec:prelim}

\mypar{Polygonal curves.}
A (polygonal) \emph{curve} is a piecewise-linear function $P \from [0, 1] \to \R^d$, connecting a sequence $p_1, \dots, p_m$ of $d$-dimensional points, which we refer to as \emph{vertices}.
In this work, we consider curves where the dimension $d$ is constant only.
We will parameterize the domain $[0,1]$ such that the points $p_i$ are uniformly spaced over the domain (i.e. $P((i-1) / (m-1)) = p_i$).
Note that the above definition allows for consecutive vertices to share the same position.
The linear interpolation between $p_i$ and $p_{i+1}$, whose image is equal to the directed line segment $\overline{p_i p_{i+1}}$, is called an \emph{edge}.
We denote by $P[x_1, x_2]$ the subcurve of $P$ over the domain $[x_1, x_2]$.
Similarly, we denote by $P(x_1, x_2)$, the subcurve of $P$ over the open domain $(x_1, x_2)$.
We proceed to define concepts like concatenation and matchings for curves over a closed domain; their adaptations to curves over open domains can be made in the natural way.

For a point $p \in \R^d$, we write $p^\ell$ to denote its $\ell^{\mathit{th}}$ coordinate.
We extend this notation to curves, denoting by $P^\ell \from [0, 1] \to \R$ the curve where $P^\ell(x) = P(x)^\ell$.
Let $P, Q$ be two polygonal curves with vertices $p_1, \dots, p_m$ and $q_1, \dots, q_n$, respectively.
If $p_m = q_1$, we denote by $P \circ Q$ the curve connecting $p_1, \dots, p_m = q_1, \dots, q_n$.

\mypar{\f distance.}
A \emph{reparameterization} is a monotonically increasing, continuous surjection $f \from [0, 1]$ $\to [0, 1]$.
Two orientation-preserving reparameterizations $f, g$ describe a \emph{matching} $(f, g)$ between two curves $P$ and $Q$, where any point $P(f(t))$ is matched to $Q(g(t))$.
A matching $(f, g)$ between $P$ and $Q$ is said to have \emph{cost}
\[
    \max_t \normInf{ P(f(t)) - Q(g(t)) }.
\]
It is common to use the Euclidean norm $\norm{ P(f(t)) - Q(g(t)) }_2$ instead, but for our purposes it is more convenient to use the $L_\infty$-norm. Since we aim for approximation factors between $O(1)$ and $O(n)$, and the norms are at most a factor $\sqrt{d}$ different, approximations using the one norm imply the same approximations for the other norm, as long as $d$ is considered constant.
We refer to a matching with cost at most $\delta$ as a \emph{$\delta$-matching}.
The \emph{\f distance} between $P$ and $Q$ is the minimum cost over all matchings.

\mypar{Free space diagram and matchings.}
Let the square $\D(P, Q) = [0, 1]^2$ define the \emph{parametric space} of $P \times Q$.
A point $(x, y) \in \D(P, Q)$ corresponds to the points $P(x)$ and $Q(y)$ on the two curves.
For $\delta \geq 0$, a point $(x, y) \in \D(P, Q)$ is \emph{$\delta$-close} if $\normInf{P(x) - Q(y)} \leq \delta$.
The \emph{$\delta$-free space} $\F_{\leq \delta}(P, Q)$ of $P$ and $Q$ is the subset of $\D(P, Q)$ containing all $\delta$-close points.

A point $z_2 = (x_2, y_2) \in \F_{\leq \delta}(P, Q)$ is \emph{$\delta$-reachable} from a point $z_1 = (x_1, y_1)$ if there exists a bimonotone path in $\F_{\leq \delta}(P, Q)$ from $z_1$ to $z_2$.
Points that are $\delta$-reachable from $(0, 0)$ are simply called $\delta$-reachable points.
Alt and Godau~\cite{alt95continuous_frechet} observe that the \f distance between $P[x_1, x_2]$ and $Q[y_1, y_2]$ is at most $\delta$ if and only if there is a bimonotone path in $\F_{\leq \delta}(P, Q)$ from $z_1$ to $z_2$.
We can therefore abuse terminology slightly and refer to a bimonotone path from $z_1$ to $z_2$ as a $\delta$-matching between $P[x_1, x_2]$ and $Q[y_1, y_2]$.

The \emph{column} of $P(x)$ is the set $\{x\} \times [0, 1]$.
A maximal vertical line segment $\{x\} \times [y_1, y_2]$ in $\delta$-free space is called a \emph{$\delta$-passage} in the column of $P(x)$.
Given a value $\alpha \geq 1$, an \emph{$(\alpha, \delta)$-exit set} of a point $(0, y)$ with respect to $P$ and $Q$ is a set $E_\alpha(y) \subseteq \{1\} \times [0, 1]$ that $(1)$ contains all points in the column of $P(1)$ that are  $\delta$-reachable from $(0,y)$, 
and $(2)$ for which all $\delta$-close points in $E_\alpha(y)$ are $\alpha\delta$-reachable from $(0,y)$.
An $(\alpha, \delta)$-exit set for $(0, 0)$ with respect to $P$ and $Q$ is simply called an $(\alpha, \delta)$-exit set with respect to $P$ and $Q$.

\subsection{Algorithmic outline}\label{subsec:outline}

Let $P, Q \from [0, 1] \to \R^d$ be our two input curves with $m$ and $n$ vertices, respectively, and let $\alpha \in [1, n]$ be the chosen parameter for the approximation. 
We describe an algorithm that solves the \emph{$O(\alpha)$-approximate decision problem}.
Here we are given a further parameter $\delta \geq 0$. 
Our algorithm answers \yes if $d_F(P, Q) \leq c \alpha \delta$ for some constant $c$, and \no if $d_F(P, Q) > \delta$.
That is, we either confirm that there is a $c \alpha \delta$-matching between $P$ and $Q$ or we assert that no $\delta$-matching exists.
We use the same procedure as Colombe and Fox~\cite{colombe21continuous_frechet} to turn this approximate decision algorithm into an approximation algorithm for the \f distance (with logarithmic overhead in running time and an arbitrarily small increase in approximation ratio).

\begin{figure}[b]
    \centering
    \includegraphics{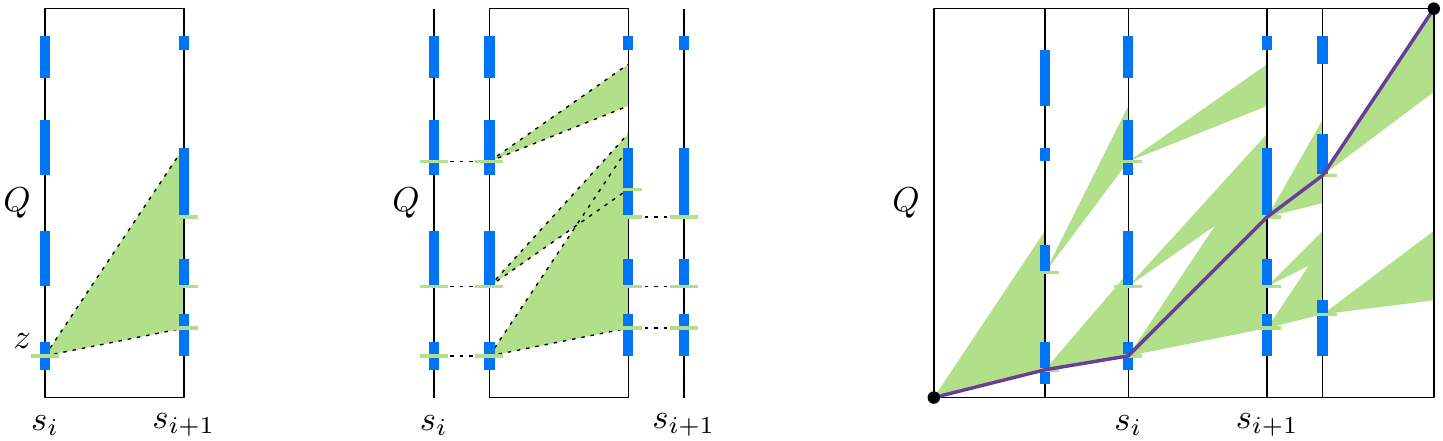}
    \caption{A visual summary of our algorithm; blue segments lie in $\delta$-free space, green regions lie in $O(\alpha \delta)$-free space.
        Vertical bars show sparse signature columns $P(s_i)$ and $P(s_{i+1})$. 
        The blue segments are $\delta$-passages; any matching passes through a blue segment in each column.
        The point $z$ is a point in $S(s_i)$ from where the algorithm explores to find an $(O(\alpha), \delta)$-exit set $E_\alpha(z)$ for $z$; the right side of the green triangle.
        In the middle figure the set $E_\alpha(S(s_i))$ is illustrated.
        The green ticks on the right column form the set $S(s_{i+1})$.
        The last figure shows the parts of free space that are explored by our algorithm, as well as an $O(\alpha \delta)$-matching in purple.
    }
    \label{fig:sparse_columns}
\end{figure}

Recall that a $\delta$-matching between $P$ and $Q$ represents a bimonotone path from $(0, 0)$ to $(1, 1)$ in the $\delta$-free space $\F_{\leq \delta}(P, Q)$.
Our decision algorithm searches for such a path.
The complexity of the free space can be as high as $\Theta(mn)$, so exploring it completely cannot structurally result in a subquadratic algorithm.
We therefore identify so-called \emph{sparse signature columns} in the $\delta$-free space $\F_{\leq \delta}(P, Q)$.
These columns correspond to \emph{signature vertices} that can in some sense be matched to only $O(n / \alpha)$ locations on the other curve.
Signature vertices were originally introduced by Driemel~\etal~\cite{driemel15clustering}; we show in Section~\ref{sec:algorithm} how to generalize their definition for our purposes.
Sparse signature columns have a limited number ($n/\alpha$) of $\delta$-reachable passages. 

Given two sparse signature columns, corresponding to signature vertices $P(s_i)$ and $P(s_{i+1})$, and a set $S(s_i)$ of $\delta$-close points in the column of $P(s_i)$, we compute an $(O(\alpha), \delta)$-exit set $E_\alpha(z)$ for each point $z \in S(s_i)$; Figure~\ref{fig:sparse_columns} shows this for one point $z$ on the left.
This computation is rather technical: it requires using a simplification of a part of $Q$ and replacing $P[s_i,s_{i+1}]$ by a monotone curve.
Once we have done this for all $z\in S(s_i)$, we compute the union $E_\alpha(S(s_i))$ of these exit sets, which contains \emph{all} points in column $P(s_{i+1})$ that are $\delta$-reachable from points in $S(s_i)$, and \emph{only} points that are $O(\alpha \delta)$-reachable from points in $S(s_i)$.
We construct the set $S(s_{i+1})$ by taking the bottom-most point of each connected component of the intersection between $E_\alpha(S(s_i))$ and the $\delta$-passages of column $P(s_{i+1})$; Figure~\ref{fig:sparse_columns} shows this in the middle. Once we have $S(s_{i+1})$, we can continue with the next part of $P$ between the sparse signature vertices $s_{i+1}$ and $s_{i+2}$.
At the end of this iterative construction, if $(1, 1)$ is an element of the final exit set that we compute, then there is a matching between $P$ and $Q$ of cost $O(\alpha \delta)$. Otherwise no matching of cost at most $\delta$ exists.
Figure~\ref{fig:sparse_columns} on the right shows an existing matching in purple.

We show in Section~\ref{sec:algorithm} how to compute the exit sets from any sparse signature column in $O(|P[s_i, s_{i+1}]| (n / \alpha) \log^2 n)$ time.
Together they contain $O(n / \alpha)$ $\delta$-passages, so $S(s_i)$ and $S(s_{i+1})$ have complexity $O(n / \alpha)$.
Constructing $S(s_{i+1})$ therefore takes $O(|P[s_i, s_{i+1}]| (n / \alpha) \log^2 n))$ time. This result depends on the technical Sections~\ref{sec:piecewise_monotone} and~\ref{sec:computing_exit_sets}. There we show that if $\F_{\leq \delta}(P[s_i, s_{i+1}], Q)$ contains no sparse signature columns, then we can compute an $(O(\alpha), \delta)$-exit set for a given point $z$ in time linear in the sizes of $P[s_i, s_{i+1}]$ and $Q$.
Using the simplification data structure described in Section~\ref{sec:algorithm}, we then show how to obtain an improved time bound that depends only logarithmically on the size of $Q$.
For any $z\in S(s_i)$, the exit set $E_\alpha(z)$ is a single vertical line segment, and thus has constant complexity.
We can therefore construct the union $E_\alpha(S(s_i))$ of all exit sets in $O(|S(s_i)| \log |S(s_i)|)$ time. With a single scan, we can now extract $S(s_{i+1})$.

Everything combined, we obtain an $O(\alpha)$-approximate decision algorithm for \f distance that runs in $O((mn / \alpha) \log^2 n + n \log n)$ time, after $O(n \log^3 n)$ preprocessing for the simplification data structure. It is built only once for all decision procedures.

\section{The algorithm}
\label{sec:algorithm}

In this section we present the main algorithm in greater detail.
We are given two curves $P$ and $Q$, with $m$ and $n$ vertices respectively.
To approximate $d_F(P, Q)$, we give an algorithm for the $O(\alpha)$-approximate decision variant of the problem, where we are given parameters $\delta \geq 0$ and $\alpha \in [1, n]$, and we report either \yes or \no, such that when the answer is \yes, we have $d_F(P, Q) \leq c \alpha \delta$ for some constant $c$, and when the answer is \no, we have $d_F(P, Q) > \delta$.
We then use the same procedure as Colombe and Fox~\cite{colombe21continuous_frechet} to turn this approximate decision algorithm into an approximate algorithm for \f distance (with logarithmic overhead in the running time, and an arbitrarily small increase in approximation ratio).
Throughout, we assume $\delta > 0$;
the case $\delta = 0$ can be handled exactly and in linear time, by checking equality of the curves.

\subsection{Sparse vertices}

Recall that a $\delta$-matching between $P$ and $Q$ represents a bimonotone path from $(0, 0)$ to $(1, 1)$ in $\delta$-free space $\F_{\leq \delta}(P, Q)$.
Our decision algorithm looks for such a path in $\delta$-free space.
However, since the complexity of this free space can be as high as $\Theta(mn)$, exploring the entire free space may take $\Omega(mn)$ time.
We therefore identify \emph{sparse signature columns} in free space.
We first repeat the (slightly reworded) definition of the \emph{$\delta$-signature} of a curve in $\R$ by Driemel~\etal~\cite{driemel15clustering}. Then we give an extension to $\R^d$ that we will use.

\begin{definition}[$\delta$-Signature~\cite{driemel15clustering}]
    Let $\delta > 0$ and $P \from [0, 1] \to \R$ be a curve.
    A \emph{$\delta$-signature} of $P$ is a curve $\Sigma \colon [0, 1] \to \R$, whose vertices $P(s_1), \dots, P(s_k)$ are defined by a series of values $0 = s_1 < \dots < s_k = 1$, with the following properties if $k>2$:
    \begin{enumerate}
        \item (non-degeneracy)~
        For all $2 \leq i \leq k-1$: $P(s_i) \notin \overline{P(s_{i-1}) P(s_{i+1})}$.
        \item (approximately direction-preserving)~
        For all $1\leq i\leq k-1$:\\
        If $P(s_i) < P(s_{i+1})$, then for all $s, s' \in [s_i, s_{i+1}]$ with $s < s'$, $P(s) - P(s') \leq 2\delta$.\\
        If $P(s_i) > P(s_{i+1})$, then for all $s, s' \in [s_i, s_{i+1}]$ with $s < s'$, $P(s') - P(s) \leq 2\delta$.
        \item (minimum edge length)\\
        For all $2 \leq i \leq k-2$:  $|P(s_{i+1}) - P(s_i)| > 2\delta$.\\
        $|P(s_{2}) - P(s_1)| > \delta$ and $|P(s_k)-P(s_{k-1})|>\delta$.
        \item (range)~
        For all $s \in [s_i, s_{i+1}]$ with $1\leq i\leq k-1$:\\
        If $2 \leq i \leq k-2$, then $P(s) \in \overline{P(s_i) P(s_{i+1})}$.\\
        If $i = 1$, then $P(s) \in \overline{P(s_1) P(s_{2})} \cup [P(s_1) - \delta, P(s_1) + \delta]$.\\
        If $i = k-1$, then $P(s) \in \overline{P(s_{k-2} P(s_{k-1})} \cup [P(s_{k-1}) - \delta, P(s_{k-1}) + \delta]$.
        \end{enumerate}
        
        If $k = 2$, then property 2 holds and the following version of property 4 for all $s\in[0,1]$:
        
        $P(s) \in \overline{P(0) P(1)} \cup [P(0) - \delta, P(0) + \delta] \cup [P(1) - \delta, P(1) + \delta]$.\medskip
        
    The vertices of $\Sigma$ are called \emph{$\delta$-signature vertices}.
\end{definition}

\begin{definition}[Higher dimensional $\delta$-signature]
    Let $\delta > 0$ and $1 \leq \ell \leq d$, and let $P \colon [0, 1] \to \R^d$ be a curve.
    An \emph{$(\ell, \delta)$-signature} of $P$ is a curve $\Sigma \from [0, 1] \to \R^d$, whose vertices $P(s_1), \dots, P(s_k)$ are defined by a series of values $0 = s_1 < \dots < s_k = 1$ that also define a $\delta$-signature of $P^\ell$.
    The vertices of $\Sigma$ are called \emph{$(\ell, \delta)$-signature vertices}.
\end{definition}

Driemel~\etal~\cite{driemel15clustering} remark that a $\delta$-signature of a curve in $\R$ always exists.
By extension, it follows that an $(\ell, \delta)$-signature of a curve in $\R^d$ always exists as well, for any $1 \leq \ell \leq d$.

Let $1 \leq \ell \leq d$, and let $\Sigma \from [0, 1] \to \R^d$ be an $(\ell, \delta)$-signature of $P$, defined by the values $0 = s_1 < \dots < s_k = 1$.
Let $Q[y_1, y_2]$ be a subcurve of $Q$ that contains at least one vertex of $Q$.
If $Q[y_1, y_2]$ is maximal such that $Q[y_1, y_2]^\ell \subseteq [P(s_i)^\ell - \delta, P(s_i)^\ell + \delta]$, then we call the passage in column $P(s_i)$ corresponding to $Q[y_1, y_2]$ a \emph{candidate $\delta$-passage}.
As we show in the following lemma, any $\delta$-matching between $P$ and $Q$ intersects a candidate $\delta$-passage in each signature vertex column.

\begin{lemma}
\label{lem:matching_signatures}
    Let $1 \leq \ell \leq d$ and $\delta > 0$.
    Let $P, Q \from [0, 1] \to \R^d$ be two curves with $d_F(P, Q) \leq \delta$.
    Any $\delta$-matching between $P$ and $Q$ intersects the column of any signature vertex of $P$ in a candidate $\delta$-passage.
\end{lemma}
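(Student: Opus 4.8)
The plan is to fix a signature vertex $P(s_i)$, consider an arbitrary $\delta$-matching $(f,g)$ between $P$ and $Q$, and show that the point where this matching crosses the column $\{s_i\}\times[0,1]$ lies on a candidate $\delta$-passage, i.e.\ on a maximal subcurve $Q[y_1,y_2]$ of $Q$ containing a vertex of $Q$ and satisfying $Q[y_1,y_2]^\ell\subseteq[P(s_i)^\ell-\delta,P(s_i)^\ell+\delta]$. The starting observation is immediate: since $(f,g)$ has cost at most $\delta$, any point $(s_i,y)$ on the matching's trace is $\delta$-close, so in particular $|Q(y)^\ell-P(s_i)^\ell|\le\normInf{Q(y)-P(s_i)}\le\delta$. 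Thus $Q(y)^\ell\in[P(s_i)^\ell-\delta,P(s_i)^\ell+\delta]$, which means $y$ lies in \emph{some} maximal interval $[y_1,y_2]$ with $Q[y_1,y_2]^\ell$ contained in that $\delta$-slab. The only thing left to check is that this maximal interval $[y_1,y_2]$ actually contains a vertex of $Q$ — only then does it qualify as a candidate $\delta$-passage by the definition given just before the lemma.

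The crux is therefore: \emph{why must the maximal slab interval around $y$ contain a vertex of $Q$?} Here I would use the signature properties, and the case analysis depends on whether $s_i$ is an interior signature vertex ($2\le i\le k-1$) or an endpoint ($i\in\{1,k\}$). For an interior vertex, the non-degeneracy property says $P(s_i)^\ell\notin\overline{P(s_{i-1})^\ell P(s_{i+1})^\ell}$, so $P(s_i)^\ell$ is a strict local extremum of $P^\ell$ among the signature values; combined with the minimum-edge-length property, the two incident signature edges each have $\ell$-th coordinate span exceeding $2\delta$ (or $>\delta$ near the ends), so as one moves along $Q$ the value $Q(\cdot)^\ell$ must exit the slab $[P(s_i)^\ell-\delta,P(s_i)^\ell+\delta]$ on both temporal sides of $y$ — otherwise $Q$ could be matched monotonically only to a short piece of $P$ around $s_i$, contradicting that $(f,g)$ is a valid reparameterization matching all of $P$ to all of $Q$. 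So the slab interval $[y_1,y_2]$ is bounded strictly inside $(0,1)$, hence $Q$ on $[y_1,y_2]$ is not a single sub-edge reaching to an endpoint; a maximal slab interval whose endpoints are genuine crossings of the slab boundary (rather than $0$ or $1$) must contain at least one breakpoint of the piecewise-linear function $Q^\ell$, i.e.\ a vertex of $Q$, because on a single edge $Q^\ell$ is affine and an affine function cannot re-enter a slab after leaving it, so maximality on a vertex-free subcurve would force an endpoint. The endpoint cases $i=1$ and $i=k$ are easier: there $y_1=0$ or $y_2=1$ respectively (the matching starts at $(0,0)$ and ends at $(1,1)$), and the range property for the first/last signature edge, together with minimum edge length $|P(s_2)^\ell-P(s_1)^\ell|>\delta$ (resp.\ at the other end), again forces the slab interval to be short enough on the free side that it must contain $q_1$ (resp.\ $q_n$) or another vertex; in the worst case one invokes that $Q[0,y_2]$ contains the vertex $q_1=Q(0)$ by definition.

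I expect the main obstacle to be the middle part of the argument: cleanly ruling out a vertex-free maximal slab interval in the interior case. The danger is a degenerate scenario where $Q^\ell$ runs along a single long edge that stays inside the slab for a sub-interval whose maximality is "accidental'' — but this cannot happen, since an affine function restricted to any sub-interval attains its containment in the slab on a \emph{closed} sub-interval whose endpoints, if interior to the edge, are not local obstructions to extending, so maximality across the whole curve would push at least one endpoint to a vertex or to $0/1$; the $0/1$ possibility is exactly what the signature edge-length and non-degeneracy properties exclude for interior $s_i$ (by the monotonicity/exit argument sketched above). To make this rigorous I would set up a short claim: \emph{for any maximal subcurve $Q[y_1,y_2]$ with $Q[y_1,y_2]^\ell\subseteq[c-\delta,c+\delta]$ and $0<y_1\le y_2<1$, the open subcurve $Q(y_1,y_2)$ contains a vertex of $Q$}, proven by contradiction using affineness on a single edge. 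Then the lemma follows by combining this claim with the slab-exit consequence of the signature properties. The rest — verifying that the matching's trace meets the column $\{s_i\}\times[0,1]$ in a single well-defined point or interval, and that that point/interval lies within $[y_1,y_2]$ — is routine from bimonotonicity of the path in the free-space diagram.
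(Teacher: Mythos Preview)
Your overall strategy---reduce to the one-dimensional curve $Q^\ell$, locate the maximal slab interval $[y_1,y_2]\ni y$, and argue it must contain a vertex of $Q$---is essentially the paper's route. But the argument you give for the key step is wrong.

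You assert that ``a maximal slab interval whose endpoints are genuine crossings of the slab boundary (rather than $0$ or $1$) must contain at least one breakpoint of the piecewise-linear function $Q^\ell$\,'', and you justify this by saying an affine function cannot re-enter a slab after leaving it. That justification is beside the point: an affine edge that passes \emph{monotonically} through the slab---say from $\sigma_i^\ell-2\delta$ to $\sigma_i^\ell+2\delta$---produces a maximal slab interval strictly interior to $(0,1)$ with both endpoints on the slab boundary and \emph{no} vertex anywhere in it. So your stated claim is false, and ``exits the slab on both temporal sides of $y$'' is not enough to force a vertex.

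What you are missing is the one piece of the signature machinery that the paper actually uses here: because $\sigma_i^\ell$ is a strict local extremum among the signature values and the incident signature edges have length $>2\delta$, the \emph{range} property forces $P^\ell$ on $[s_{i-1},s_{i+1}]$ to lie on one side of $\sigma_i^\ell$, and hence $Q^\ell$ on $[y_{i-1},y_{i+1}]$ (with $y_{i\pm 1}$ matched to $\sigma_{i\pm 1}$) lies entirely on one side of $\sigma_i^\ell+\delta$ (or of $\sigma_i^\ell-\delta$). This is the crucial ``one-sidedness'' fact: it means the slab interval $[y_1,y_2]$ is entered and exited from the \emph{same} side, i.e.\ $Q(y_1)^\ell=Q(y_2)^\ell$. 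Only then does affineness bite---an affine function equal at both endpoints is constant, contradicting maximality unless a vertex intervenes. You mention the extremum property in passing but never convert it into this one-sided bound, and without it the monotone-edge counterexample stands. The paper phrases the conclusion slightly differently (the edge on which $\sigma_i$ is matched has a vertex in the slab), but that is equivalent to your formulation; the substantive gap is the missing one-sidedness argument.
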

\begin{proof}
    Let $(f, g)$ be a $\delta$-matching between $P$ and $Q$.
    Let $s_1, \dots, s_k$ be the values defining an $(\ell, \delta)$-signature of $P$, and let $\sigma_i = P(s_i)$.
    We prove that $(f, g)$ matches every vertex $\sigma_i$ to a point on an edge $e$ of $Q$ that has a vertex with $x^\ell$-coordinate within $[\sigma_i^\ell - \delta, \sigma_i^\ell + \delta]$.
    From this it immediately follows that $(f, g)$ intersects column $P(s_i)$ in a candidate $\delta$-passage, namely the one corresponding to (a curve containing) $e$.
    
    Observe that $(f, g)$ is a $\delta$-matching between $P^\ell$ and $Q^\ell$.
    By definition of $(\ell, \delta)$-signature, $P^\ell(s_i)$ is a signature vertex of $P^\ell$.
    Recall that $P^\ell(s_i) = P(s_i)^\ell = \sigma_i^\ell$ by construction.
    We show that $\sigma_i^\ell$ must be matched to a point on some edge $e^\ell$ of $Q^\ell$ that has a vertex in $[\sigma_i^\ell - \delta, \sigma_i^\ell + \delta]$.
    This means that $\sigma_i$ is matched to a point on an edge $e$ of $Q$ that is projected onto $e^\ell$, and which must therefore have a vertex with $x^\ell$-coordinate in $[\sigma_i^\ell - \delta, \sigma_i^\ell + \delta]$.
    
    Abusing notation slightly, we write $\sigma_i = \sigma_i^\ell$, as well as $P^\ell = P$ and $Q^\ell = Q$.
    We follow the proof of Lemma~3.5 of Driemel~\etal~\cite{driemel15clustering}, who show that if $d_F(P^\ell, Q^\ell) \leq \delta$, then every signature vertex $\sigma_i$ must have a vertex of $Q$ close to it, and these vertices appear on $Q$ in the order of $i$.
    We augment their proof to prove the statement in the theorem.
    
    We say that $Q$ \emph{visits} a vertex $\sigma_i$ of $\Sigma$ when it comes within distance $\delta$ of $\sigma_i$.
    For all $i$, let $Q(y_i)$ be a point matched to $\sigma_i$ by $(f, g)$.
    Note that $Q$ visits $\sigma_{i-1}$, $\sigma_i$ and $\sigma_{i+1}$ in order, as they are also in this order along $P$.
    For $3 \leq i \leq k-2$, the definition of $\delta$-signature gives us that $|\sigma_i - \sigma_{i-1}| > 2\delta$ and $|\sigma_{i+1} - \sigma_i| > 2\delta$.
    Also, it gives us that $\sigma_i \notin \overline{\sigma_{i-1} \sigma_{i+1}}$.
    This means that $Q$ must change direction between visiting $\sigma_{i-1}$ and $\sigma_{i+1}$.
    
    Observe that $Q$ cannot move more than distance $\delta$ away from the edge $\overline{\sigma_{i-1} \sigma_i}$ before it has visited both $\sigma_i$ and $\sigma_{i+1}$, in order.
    This means that $Q[y_{i-1}, y_{i+1}]$ either lies left of $\sigma_i + \delta$ or right of $\sigma_i - \delta$, depending on the positions of $\sigma_{i-1}$, $\sigma_i$ and $\sigma_{i+1}$ relative to each other.
    This shows that any edge of $Q$ that comes close to $\sigma_i$, must have a vertex that lies within distance $\delta$ of $\sigma_i$.
    In particular, this includes any edge that $\sigma_i$ is matched to.
    This completes the proof for $i \in \{3, \dots, k-2\}$.
    
    For $i = 2$, note that by the definition of the Fr\'echet distance, $Q(0)$ must be matched to $\sigma_1 = P(0)$.
    Like before, $Q$ has to visit $\sigma_2$ and $\sigma_3$ in order.
    Either $|Q(0) - \sigma_2| \leq \delta$, in which case the statement is true immediately, or $|Q(0) - \sigma_2| > \delta$.
    In the latter case, because $|\sigma_2 - \sigma_3| > 2\delta$, the proof is analogous to the case where $3 \leq i \leq k-2$.
    The case $i = k-1$ is symmetric.
    
    Finally, by the definition of the Fr\'echet distance, $Q(0)$ and $Q(1)$ must be matched to $\sigma_1 = P(0)$ and $\sigma_k = P(1)$, respectively.
\end{proof}

We say that an $(\ell, \delta)$-signature vertex $\sigma$ is \emph{$\beta$-sparse} if there are $\beta$ candidate $\delta$-passages in its column.
We call $\beta$ the \emph{sparsity} of $\sigma$.
As we show in the following lemma, we can efficiently compute the sparsity of a signature vertex using a range tree (using fractional cascading to improve query times).
We also show how to efficiently compute the candidate $\delta$-passages in the column of $\sigma$.

\begin{lemma}
\label{lem:sparse_data_structure}
    Let $1 \leq \ell \leq d$ and $\delta > 0$.
    Let $P, Q \from [0, 1] \to \R^d$ be two curves, where $Q$ has $n$ vertices.
    With $O(n \log n)$ time preprocessing we can preprocess $Q$ into a data structure of $O(n \log n)$ size that, given an $(\ell, \delta)$-signature vertex $\sigma$ of $P$, computes the sparsity $\beta$ of $\sigma$ in $O(\log n)$ time.
    In $O(\log n + \beta \log \beta)$ additional time, the data structure returns all $\beta$ candidate $\delta$-passages in the column of $\sigma$.
\end{lemma}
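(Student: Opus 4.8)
The plan is to reduce the problem to one-dimensional searching on the coordinate curve $Q^\ell$, after a structural characterization of candidate $\delta$-passages. Write $c = \sigma^\ell$ and $B = [c-\delta,\, c+\delta]$, and call a vertex $q_j$ of $Q$ \emph{in-band} if $q_j^\ell \in B$. I claim the candidate $\delta$-passages in the column of $\sigma$ are in bijection with the maximal runs of consecutive in-band vertices $q_a,\dots,q_b$. Since $Q^\ell$ is piecewise linear, the edge between two in-band vertices stays inside $B$; hence the vertices contained in any maximal interval $[y_1,y_2]$ with $Q^\ell([y_1,y_2])\subseteq B$ form one consecutive, maximal in-band run, and conversely the interval spanned by a maximal run $q_a,\dots,q_b$ extends exactly to the points where the edges $\overline{q_{a-1}q_a}$ and $\overline{q_b q_{b+1}}$ cross $\partial B$ (or to $0$, resp.\ $1$, when $a=1$, resp.\ $b=n$). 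A maximal in-band interval containing no vertex --- which can occur on the interior of a single edge passing through $B$ --- is correctly excluded, matching the ``contains at least one vertex'' requirement. Each passage's $y$-interval comes from a single linear interpolation, so it can be produced in $O(1)$ time from its run; thus the sparsity $\beta$ is exactly the number of maximal in-band runs.

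For the counting query, note that $\beta$ is the number of in-band vertices whose predecessor is not in-band (with $q_1$ having no predecessor), so $\beta = \#\{\,j : q_j^\ell \in B\,\} - \#\{\,2 \le j \le n : q_{j-1}^\ell \in B \text{ and } q_j^\ell \in B\,\}$. The first term is a one-dimensional range count over the multiset $\{q_1^\ell,\dots,q_n^\ell\}$, answered in $O(\log n)$ by binary search in a sorted array built in preprocessing. For the second term, a consecutive pair $q_{j-1},q_j$ lies entirely in $B$ precisely when $c$ lies in the interval $I_j = [\max(q_{j-1}^\ell,q_j^\ell)-\delta,\ \min(q_{j-1}^\ell,q_j^\ell)+\delta]$, which is nonempty exactly when the pair can contribute. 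So the second term is a stabbing count over the precomputed nonempty intervals $\{I_j\}$: it equals their number, minus the number whose right endpoint is below $c$, minus the number whose left endpoint is above $c$ --- two more binary searches in sorted endpoint arrays. Hence $\beta$ is computed in $O(\log n)$, and all these sorted arrays take $O(n\log n)$ preprocessing time and $O(n)$ space.

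To report the passages I would find the run starts and run ends and pair them. A vertex $q_j$ starts an in-band run iff $q_j^\ell \in B$ and $q_{j-1}^\ell \notin B$ (plus the special case $j=1$ with $q_1^\ell \in B$, checked in $O(1)$); viewing the consecutive pair as the point $(q_{j-1}^\ell, q_j^\ell)$, this says the point lies in $\big((-\infty, c-\delta)\cup(c+\delta,\infty)\big) \times [c-\delta, c+\delta]$, the union of two three-sided ranges. Storing the $n-1$ points $(q_{j-1}^\ell, q_j^\ell)$ in a two-dimensional range tree with fractional cascading lets us report the points in each three-sided range in $O(\log n + (\text{output size}))$ time, and the total output size is at most $\beta$. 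A symmetric query gives the run ends. We sort the at most $\beta$ starts and at most $\beta$ ends by index in $O(\beta\log\beta)$ time; since the runs are disjoint and ordered along $Q$, the $i$-th start pairs with the $i$-th end, and for each pair we compute its passage $[y_1,y_2]$ in $O(1)$ from the stored vertex coordinates. The range tree uses $O(n\log n)$ space and $O(n\log n)$ construction time, so the whole structure meets the stated bounds, with query time $O(\log n + \beta + \beta\log\beta) = O(\log n + \beta\log\beta)$.

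The data-structuring itself is standard; the main obstacle is getting the two reductions exactly right. The first is the bijection between candidate $\delta$-passages and maximal in-band runs, including the behaviour at the endpoints of $Q$ and the need to exclude vertex-free in-band subintervals; this is the step where a careful argument is required. The second is the observation that the reporting query must output the $O(\beta)$ runs rather than the possibly much larger set of in-band vertices, which is precisely why the reporting goes through three-sided range queries on consecutive-vertex pairs instead of a plain one-dimensional range report. Some care is also needed with the open/closed conventions for a vertex lying exactly on $\partial B$: these are fixed so that such a vertex counts as in-band, consistent with the (closed) definition of candidate $\delta$-passage.
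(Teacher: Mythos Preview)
Your proof is correct and follows essentially the same skeleton as the paper: the same bijection between candidate $\delta$-passages and maximal runs of in-band vertices, the same characterization of run starts (resp.\ ends) via consecutive-vertex pairs $(q_{j-1}^\ell,q_j^\ell)$, and the same reporting mechanism (two-dimensional range tree on these pairs, followed by sorting the $O(\beta)$ output indices in $O(\beta\log\beta)$ time and pairing starts with ends). The one genuine difference is in how you answer the \emph{counting} query. The paper simply reuses the same two-dimensional range tree and issues two counting queries for run starts; you instead introduce a separate one-dimensional scheme, computing $\beta$ as (number of in-band vertices) minus (number of consecutive in-band pairs), with the latter reduced to interval stabbing via the precomputed intervals $I_j$. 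Your route is more elementary in isolation, but it bakes $\delta$ into the preprocessing (the intervals $I_j$ depend on $\delta$); the lemma fixes $\delta$ in advance so this is fine, but the paper's range tree is $\delta$-independent. Since you build the two-dimensional range tree anyway for reporting, the paper's choice to also use it for counting is the more economical one, though both are within the stated bounds.
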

\begin{proof}
    Let $q_1, \dots, q_n$ be the $n$ vertices of $Q$, in their order along $Q$.
    Recall that a candidate $\delta$-passage in the column of a signature vertex $\sigma$ corresponds to a maximal subcurve $Q[y_1, y_2]$ of $Q$ that contains at least one vertex of $Q$ and for which $Q[y_1, y_2]^\ell \subseteq R$, where $R = [\sigma^\ell - \delta, \sigma^\ell + \delta]$.
    Note that each such subcurve is uniquely identified by a vertex $q_i$ on $Q$ with $q_i^\ell \in R$ and $q_{i-1} \notin R$ or $i = 1$.
    We can thus count the number of candidate $\delta$-passages in the column of $\sigma$ by counting the number of vertices with the above property.
    Symmetrically, each subcurve corresponding to a candidate $\delta$-passage is uniquely identified by a vertex $q_i$ on $Q$ with $q_i^\ell \in R$ and $q_{i+1} \notin R$ or $i = n$.
    This means that we can report the first and last vertices in the subcurves defining the candidate $\delta$-passages, and construct the passages from them.
    
    We store the vertices of $Q$ in $d$ range trees $T_\ell$ for two-dimensional orthogonal range searching, for $\ell = 1, \dots, d$.
    The tree $T_\ell$ stores the points $(q_{i-1}^\ell, q_i^\ell)$ for $i = 2, \dots, n$, as well as the point $(-\infty, q_1^\ell)$.
    Together with the point $(q_{i-1}^\ell, q_i^\ell)$ (or $(-\infty, q_1^\ell)$), we store the vertex $q_i$ itself, as well as its index $i$.
    This index is used to report the candidate passages.
    Constructing these trees takes $O(n \log n)$ time and they use $O(n \log n)$ space.
    
    To compute the sparsity $\beta$ of an $(\ell, \delta)$-signature vertex $\sigma$, we query $T_\ell$ with the ranges $(-\infty, \sigma^\ell - \delta) \times R$ and $R \times (\sigma^\ell + \delta, \infty)$, count the number of points inside the ranges and add up the reported counts.
    This query returns $\beta$ in $O(\log n)$ time.
    
    We can report all first vertices on the $\beta$ subcurves defining candidate $\delta$-passages in $O(\log n + \beta)$ time using the above range trees.
    With an analogous construction of the trees, we can construct $d$ range trees $T'_\ell$ that report all last vertices on these subcurves in $O(\log n + \beta)$ time.
    The vertices are reported together with their indices.
    Hence we can sort the reported vertices according to their order along $Q$ in $O(\beta \log \beta)$ time.
    We can compute a suitable representation of the candidate $\delta$-passages from these sorted vertices with a single scan of the vertices, taking $O(\beta)$ additional time.
    Thus we can report the $\beta$ candidate $\delta$-passages in the column of $\sigma$ in $O(\log n + \beta \log \beta)$ time.
\end{proof}

\subsection{The algorithm}

We now describe the decision algorithm.
First we identify $(n / \alpha)$-sparse signature vertices.
Let $\Sigma_\ell$ be an $(\ell, \delta)$-signature of $P$, for $\ell = 1, \dots, d$.
Let $0 = s_1 < \dots < s_{k+1} = 1$ be the maximal set of unique values such that for all $2 \leq i \leq k$, $P(s_i)$ is an $(n / \alpha)$-sparse signature vertex.
We split $P$ into the subcurves $P_1 \circ \cdots \circ P_k = P$, such that $P_i = P[s_i, s_{i+1}]$.
In Theorem~\ref{thm:splitting_P}, we show that the subcurves $P_i$ can be constructed in $O((m + n) \log n)$ time.

Theorem~\ref{thm:splitting_P} returns a set of $(\ell, \delta)$-signatures $\Sigma^\ell(P_i)$ of $P_i$ as well, for all $\ell$ and $i$.
These signatures have the following useful property.
For an $(\ell, \delta)$-signature $\Sigma$ with vertices $\sigma_i$, let $R(\Sigma) = \bigcup_i [\sigma_i^\ell - \delta, \sigma_i^\ell + \delta]$ be its \emph{$\delta$-footprint}.
Refer to Figure~\ref{fig:footprint} for an illustration.
The \emph{size} of the $\delta$-footprint is its total length $\lVert R(\Sigma) \rVert$.
The signatures $\Sigma^\ell(P_i)$ returned by the algorithm of Theorem~\ref{thm:splitting_P} all have a $\delta$-footprint of size at most $4\alpha\delta + 4\delta$.
In Section~\ref{sec:computing_exit_sets} we use this property to transform the curves $P_i$ and $Q$ into new curves, whose free space is in a sense an $O(\alpha)$-approximation for the free space between $P_i$ and $Q$.

\begin{figure}
    \centering
    \includegraphics{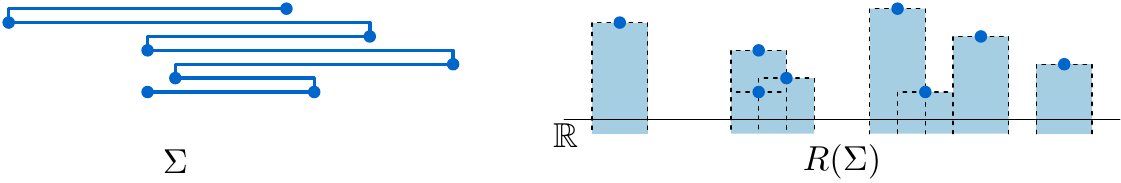}
    \caption{The footprint $R(\Sigma)$ of a signature curve $\Sigma$.
    The blue region on the right side is the footprint of $\Sigma$, shown on the real line.}
    \label{fig:footprint}
\end{figure}

\begin{theorem}
\label{thm:splitting_P}
    Let $\delta > 0$, and let $P, Q \from [0, 1] \to \R^d$ be two curves, with $m$ and $n$ vertices, respectively.
    Let $\alpha \in [1, n]$.
    In $O((m + n) \log n)$ time, we can split $P$ into subcurves $P_1 \circ \dots \circ P_k = P$, such that the vertices joining consecutive subcurves are $(n / \alpha)$-sparse signature vertices.
    For all $\ell$ and $i$, we also compute an $(\ell, \delta)$-signature of $P_i$ with a $\delta$-footprint of size at most $4\alpha\delta + 4\delta$.
\end{theorem}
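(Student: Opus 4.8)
The plan is to first identify the split points $s_2,\dots,s_k$ using the data structure of Lemma~\ref{lem:sparse_data_structure}, and then, for each resulting piece, read off its signature from the corresponding piece of $\Sigma_\ell$. \emph{Finding the pieces.} For each $\ell \in \{1,\dots,d\}$, compute a $\delta$-signature $\Sigma_\ell$ of $P^\ell$; by Driemel~\etal\ such a signature exists and can be computed in $O(m)$ time, so all $d = O(1)$ of them together cost $O(m)$. Build the data structure of Lemma~\ref{lem:sparse_data_structure} once, in $O(n\log n)$ time, and query it with each of the $O(dm)=O(m)$ signature vertices to obtain its sparsity, spending $O(\log n)$ per query. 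Let $0 = s_1 < \dots < s_{k+1} = 1$ be the sorted sequence obtained by merging, over all $\ell$, the parameters of the $(n/\alpha)$-sparse vertices of $\Sigma_\ell$, together with the values $0$ and $1$; since the $\Sigma_\ell$ are themselves sorted, this merge (with duplicate removal) takes $O(m)$ time. Splitting $P$ at these parameters into $P_1 \circ \dots \circ P_k = P$ with $P_i = P[s_i,s_{i+1}]$ takes another $O(m)$ time, so this phase runs in $O((m+n)\log n)$ time.

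\emph{Signatures of the pieces.} For each $i$ and $\ell$ I would produce an $(\ell,\delta)$-signature of $P_i$ in $O(|P_i|)$ time, for $O(m)$ time overall (every edge of $P$ lies in a single $P_i$). The natural choice is the restriction of $\Sigma_\ell$ to the parameter interval $[s_i,s_{i+1}]$, with $P^\ell(s_i)$ and $P^\ell(s_{i+1})$ taken as its first and last vertices. Its interior vertices are then signature vertices of $P^\ell$ lying strictly inside $(s_i,s_{i+1})$; since we split at \emph{every} $(n/\alpha)$-sparse signature vertex of $P$, none of them is $(n/\alpha)$-sparse, i.e.\ each has sparsity more than $n/\alpha$. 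The delicate point is that $s_i$ is typically a signature vertex only for some \emph{other} coordinate, so it lies in the interior of an edge of $\Sigma_\ell$; consequently the relaxed boundary conditions (properties~3 and~4) of the $\delta$-signature definition need not hold verbatim for the first and last edges of the restriction, and I would repair this by, if necessary, inserting one extra ``turn'' vertex at each end. By the approximately-direction-preserving property, the edge of $\Sigma_\ell$ containing $s_i$ cannot double back on itself by more than $2\delta$, so such a vertex has $\ell$-coordinate within $2\delta$ of $P^\ell(s_i)$ (respectively $P^\ell(s_{i+1})$); in particular every vertex of $\Sigma^\ell(P_i)$ other than the interior, non-sparse ones lies within distance $2\delta$ of an endpoint of $P_i^\ell$.

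\emph{The footprint bound.} It remains to bound $\norm{R(\Sigma^\ell(P_i))}$. First, non-sparse values are scarce: for $v \in \R$ let $\gamma(v)$ be the number of vertices of $Q$ whose $\ell$-coordinate lies in $[v-\delta,v+\delta]$; then the sparsity of any signature vertex with $\ell$-coordinate $v$ is at most $\gamma(v)$, because distinct candidate $\delta$-passages contain disjoint nonempty sets of such vertices. Each vertex of $Q$ contributes to $\gamma(v)$ for $v$ in an interval of length $2\delta$, so $\int_\R \gamma(v)\,dv = 2\delta n$, and Markov's inequality gives that $\mathcal{N} = \{\, v : \gamma(v) > n/\alpha \,\}$ has measure at most $2\alpha\delta$; hence every non-$(n/\alpha)$-sparse signature vertex has $\ell$-coordinate in $\mathcal{N}$. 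Now write $R(\Sigma^\ell(P_i))$ as the disjoint union of its connected components. By the previous paragraph the $\delta$-bands around the $O(1)$ non-interior vertices of $\Sigma^\ell(P_i)$ cover only an $O(\delta)$-measure subset, so it suffices to bound $\bigcup \{\, [\sigma^\ell-\delta,\sigma^\ell+\delta] : \sigma \text{ an interior vertex of } \Sigma^\ell(P_i) \,\}$, all of whose defining vertices are non-sparse. Split this set into its components $F$. Each $F$ is connected and covered by $2\delta$-wide bands centred at points of $\mathcal{N}$, so (greedily choosing band centres more than $2\delta$ apart) it contains $\Omega(\norm{F}/\delta)$ pairwise disjoint such bands; each is the band of a non-sparse vertex, hence contains more than $n/\alpha$ vertices of $Q$, and over all components $F$ — which are disjoint — each vertex of $Q$ is counted at most once. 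Therefore $\sum_F \Omega(\norm{F}/\delta)\cdot(n/\alpha) \le n$, so $\sum_F \norm{F} = O(\alpha\delta)$; a careful accounting of the constants, together with the $O(\delta)$ contribution of the non-interior vertices, yields the stated bound $4\alpha\delta + 4\delta$.

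The step I expect to be the main obstacle is the second one: verifying rigorously that the restriction of $\Sigma_\ell$ to $[s_i,s_{i+1}]$, after the boundary repair, is genuinely an $(\ell,\delta)$-signature of $P_i$ whose interior vertices are signature vertices of $P$ — and hence inherit non-sparsity from the choice of split points — and controlling exactly how much the repair can add to the footprint. Once that structural statement is pinned down, the measure estimate for $\mathcal{N}$ and the component-counting argument are routine.
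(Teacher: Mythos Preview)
Your approach is essentially the paper's: compute the $d$ signatures, test each signature vertex for $(n/\alpha)$-sparsity with Lemma~\ref{lem:sparse_data_structure}, split at the sparse ones, take the restricted signatures on each piece, and bound the footprint by a greedy pigeonhole over non-sparse centres. Two differences are worth noting.

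First, you are more careful than the paper about the restriction step. The paper simply asserts that ``because the endpoints of each $P_i$ are signature vertices, the subcurve $\Sigma^\ell(P_i)$ of $\Sigma^\ell(P)$ \dots\ is an $(\ell,\delta)$-signature of $P_i$'', without addressing the case where $s_i$ is a sparse vertex of $\Sigma^{\ell'}$ but lies in the interior of an edge of $\Sigma^\ell$. Your proposed boundary repair is a reasonable way to make this rigorous; just note that it adds up to two extra vertices, so your count of ``non-interior'' vertices becomes four rather than two, and you must check that the constant still comes out.

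Second, your footprint argument is the same greedy counting as the paper's, but your choice to take \emph{disjoint} bands (centres $>2\delta$ apart) only yields $\Omega(\|F\|/\delta)$ bands per component and hence $\sum_F\|F\|=O(\alpha\delta)$, not the exact $4\alpha\delta+4\delta$. The paper instead sweeps the footprint picking the rightmost $\delta$-close centre at each uncovered point; this \emph{covers} the footprint (so $L\le 2\delta|I'|$) while guaranteeing each vertex of $Q$ lies in at most \emph{two} selected bands, whence $(|I'|-2)(n/\alpha+1)\le 2n$ and $L\le 4\alpha\delta+4\delta$ exactly. Your Markov bound on $|\mathcal N|$ is correct but unused --- the footprint is the $\delta$-fattening of a subset of $\mathcal N$, and that fattening is not controlled by $|\mathcal N|$ alone. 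If you want the stated constant, switch to the ``covered at most twice'' selection.
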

\begin{proof}
    First we show that the subcurves $P_i$ can be constructed in $O((m + n) \log n)$ time.
    Using the result of Driemel~\etal~\cite{driemel15clustering}, we can compute a $\delta$-signature of $P^\ell$, for any $\ell$, in $O(m)$ time.
    This algorithm is easily extended to return an $(\ell, \delta)$-signature $\Sigma^\ell(P)$ of $P$ in $O(m)$ time.
    Thus, we can compute a set of $(\ell, \delta)$-signatures of $P$, for $\ell = 1, \dots, d$, in $O(m)$ time total.
    
    We construct the data structure of Lemma~\ref{lem:sparse_data_structure} on $Q$.
    This takes $O(n \log n)$ time, and the data structure uses $O(n)$ space.
    For each $\ell$, we query the data structure with each vertex $\sigma$ of $\Sigma^\ell(P)$ to obtain the sparsity of $\sigma$.
    Computing the sparsity of all signature vertices takes $O(m \log n)$ time in total.
    From here, we can construct the subcurves $P_i$ in a single scan of $P$.
    The total time taken to construct the subcurves is $O((m + n) \log n)$.
    
    Because the endpoints of each $P_i$ are signature vertices, the subcurve $\Sigma^\ell(P_i)$ of $\Sigma^\ell(P)$ whose endpoints correspond to those of $P_i^\ell$ is an $(\ell, \delta)$-signature of $P_i$.
    We argue that the $\delta$-footprint of each $\Sigma^\ell(P_i)$ has size at most $4\alpha\delta + 4\delta$, completing the proof.
    
    Consider a signature $\Sigma^\ell(P_i)$ and let $L$ be the size of its $\delta$-footprint.
    Let $I$ be the set indexing the vertices of $\Sigma^\ell(P_i)$.
    We will select a subset $I'$ of $I$ such that $L\leq 2\delta|I'|$, and each vertex of $Q$ is $\delta$-close to at most two signature vertices indexed by $I'$.
    One can construct such a subset $I'$ by initializing $I'$ to be the empty set and sweeping over $R(\Sigma^\ell(P_i))$ from left to right:
    whenever we sweep over a point $p$ that is not close to any signature vertex of $I'$, insert the index of the rightmost signature vertex that is $\delta$-close to $p$ into $I'$.
    This way, each point (and in particular each vertex of $Q$) is $\delta$-close to at most two signature vertices indexed by $I'$.
    Because at most two signature vertices (namely the start and end of $P_i$) are $(n/\alpha)$-sparse, all but two points of $I'$ are $\delta$-close to more than $n/\alpha$ points of $Q$.
    By the pigeonhole principle we have $(|I'|-2) (n/\alpha + 1) \leq 2n$, and hence $|I'|\leq 2\alpha+2$.
    Because $L\leq 2\delta|I'|$, we obtain $L\leq 2\delta|I'|\leq 2\delta(2\alpha+2)=4\alpha\delta+4\delta$.
\end{proof}

We give the main algorithm in the following theorem.
This algorithm uses a procedure for computing exit sets as a black box.

\begin{theorem}
\label{thm:black_box_decision}
    Let $\delta > 0$.
    Given an algorithm that computes a constant-complexity $(O(\alpha), \delta)$-exit set with respect to $P$ and $Q$ in $O(T(m, n))$ time, there is an algorithm for the $O(\alpha)$-approximate decision problem that takes $O((n / \alpha) (\sum_{i} T(m_i, n) + \log n) + (m + n) \log n)$ time, where $m_i$ is the complexity of $P_i$.
\end{theorem}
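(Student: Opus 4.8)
The plan is to carry out the left-to-right sweep over the $(n/\alpha)$-sparse signature columns of $\F_{\leq\delta}(P, Q)$ outlined in Section~\ref{subsec:outline} (cf.\ Figure~\ref{fig:sparse_columns}), invoking the given black box on one subcurve pair at a time. First I would apply Theorem~\ref{thm:splitting_P} to split $P = P_1 \circ \cdots \circ P_k$ at $(n/\alpha)$-sparse signature vertices $P(s_2), \dots, P(s_k)$ in $O((m+n)\log n)$ time -- obtaining along the way the low-$\delta$-footprint signatures of the pieces that the exit-set algorithm may require -- and to build the data structure of Lemma~\ref{lem:sparse_data_structure} on $Q$. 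If $(0,0)$ or $(1,1)$ is not $\delta$-close, I report \no at once, which is correct since a $\delta$-matching matches endpoints to endpoints; otherwise I initialise $S(s_1) = \{(0,0)\}$.

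I then iterate over $i = 1, \dots, k$. For each point $z = (s_i, y) \in S(s_i)$ I run the given algorithm on the pair $(P_i, Q[y,1])$, which has $m_i$ and at most $n$ vertices, and translate the returned constant-complexity $(O(\alpha),\delta)$-exit set affinely into a constant-complexity set $E_\alpha(z) \subseteq \{s_{i+1}\}\times[0,1]$; since every bimonotone path out of $z$ stays in the strip $[s_i, s_{i+1}]\times[y,1]$, this $E_\alpha(z)$ contains every point of the column of $P(s_{i+1})$ that is $\delta$-reachable from $z$ and has all of its $\delta$-close points $O(\alpha\delta)$-reachable from $z$. I form $E_\alpha(S(s_i)) = \bigcup_z E_\alpha(z)$, intersect it with the $\delta$-passages of the column of $P(s_{i+1})$ (using the at most $n/\alpha$ candidate $\delta$-passages reported by Lemma~\ref{lem:sparse_data_structure}), and let $S(s_{i+1})$ collect the bottom-most point of each connected component of that intersection; since a sparse column has only $O(n/\alpha)$ $\delta$-reachable passages and each $E_\alpha(z)$ is a single segment, a counting argument keeps $|S(s_{i+1})| = O(n/\alpha)$. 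After processing $P_k$, I answer \yes iff $(1,1) \in E_\alpha(S(s_k))$.

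Correctness follows from two invariants, both by induction on $i$. Soundness: every point of $S(s_i)$ is $O(\alpha\delta)$-reachable from $(0,0)$. A new point $w \in S(s_{i+1})$ is $\delta$-close and lies in $E_\alpha(w')$ for some $w' \in S(s_i)$, so $w$ is $O(\alpha\delta)$-reachable from $w'$, which is $O(\alpha\delta)$-reachable from $(0,0)$; concatenating the two bimonotone paths gives one of the same cost $O(\alpha\delta)$ -- the approximation factor does \emph{not} compound over the $k$ pieces, because a matching's cost is a maximum, not a sum. Hence a \yes answer exhibits $(1,1)$ as a $\delta$-close point inside some $E_\alpha(w)$ with $w \in S(s_k)$, so it is $O(\alpha\delta)$-reachable from $w$ and thence from $(0,0)$, i.e.\ $d_F(P,Q) \leq c\alpha\delta$. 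Completeness: assume $d_F(P,Q) \leq \delta$ and fix a $\delta$-matching; by Lemma~\ref{lem:matching_signatures} it crosses the column of each signature vertex $P(s_i)$ in a candidate $\delta$-passage, at a point $z_i^*$, with $z_1^* = (0,0)$ and $z_{k+1}^* = (1,1)$. I claim $S(s_i)$ always contains a point $w_i$ in the same $\delta$-passage as $z_i^*$ and no higher than $z_i^*$. Given this for $i$: going up that $\delta$-passage from $w_i$ to $z_i^*$ and then following the $\delta$-matching across the strip $[s_i, s_{i+1}]\times[0,1]$ is a $\delta$-close bimonotone path, so $z_{i+1}^* \in E_\alpha(w_i) \subseteq E_\alpha(S(s_i))$; since $z_{i+1}^*$ is $\delta$-close and lies in a candidate $\delta$-passage of its column, the connected component of the intersection containing $z_{i+1}^*$ contributes its bottom-most point $w_{i+1}$ to $S(s_{i+1})$, and $w_{i+1}$ lies in the same $\delta$-passage as $z_{i+1}^*$ and no higher than it, closing the induction. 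Taking $i = k$ gives $(1,1) = z_{k+1}^* \in E_\alpha(S(s_k))$, so the algorithm answers \yes.

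For the running time, the dominant term is the black-box calls: for each $P_i$ there are $|S(s_i)| = O(n/\alpha)$ of them, each costing $O(T(m_i,n))$, for a total of $O((n/\alpha)\sum_i T(m_i,n))$. The split and the range tree are built once in $O((m+n)\log n)$ time, and the remaining per-subcurve bookkeeping -- one $O(\log n)$ query of Lemma~\ref{lem:sparse_data_structure} and $O((n/\alpha)\log n)$ to merge the $O(n/\alpha)$ single-segment exit sets, intersect them with the candidate $\delta$-passages, and scan out $S(s_{i+1})$ -- sums over the $k \leq m$ subcurves into lower-order terms, giving the stated $O((n/\alpha)(\sum_i T(m_i,n) + \log n) + (m+n)\log n)$ bound. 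The steps I expect to be the main obstacle are the two ``delicate'' invariants of the sweep: (i) showing that $|S(s_{i+1})|$ stays $O(n/\alpha)$ rather than growing with $i$, which forces the extraction step to collapse each candidate $\delta$-passage to a single representative and uses the sparsity of the column; and (ii) the rerouting used in the completeness proof, which requires the connected component of $E_\alpha(S(s_i))$ containing the matching's crossing point $z_{i+1}^*$ to lie within a single $\delta$-passage, so that ``going up to $z_{i+1}^*$'' is genuinely a $\delta$-close path -- a property that ultimately relies on the exit sets returned by the black box being well-behaved single segments, as established in the later sections.
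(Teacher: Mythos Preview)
Your proposal is correct and follows essentially the same approach as the paper: split $P$ via Theorem~\ref{thm:splitting_P}, build the range structure of Lemma~\ref{lem:sparse_data_structure}, sweep left to right maintaining the sets $S(s_i)$, call the black box once per point of $S(s_i)$, union the returned segments, intersect with the candidate $\delta$-passages, and keep bottom-most representatives. Your explicit soundness/completeness invariants are in fact more detailed than the paper's own proof, which gives only the algorithmic description and defers the correctness intuition to the outline; the paper does not spell out the rerouting argument you give for completeness, nor the ``cost is a max, not a sum'' observation that keeps the approximation factor from compounding.
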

\begin{proof}
    First compute the subcurves $P_i = P[s_i, s_{i+1}]$ as above with the algorithm of Theorem~\ref{thm:splitting_P}, taking $O((m + n) \log n)$ time.
    We solve the decision problem on $P$ and $Q$ iteratively by computing an $(O(\alpha),\delta)$-exit set at $P(s_{i+1})$ given an $(O(\alpha),\delta)$-exit set at $P(s_i)$.
    For subcurve $P_i = P[s_i, s_{i+1}]$, let $S(s_i)$ be a subset of $O(\alpha \delta)$-reachable points in the column of $P(s_i)$ that are all $\delta$-close.
    Initially, $S(0) = \{(0, 0)\}$ or $S(0) = \emptyset$, depending on whether $(0, 0)$ is $\delta$-close or not.
    Compute an $(O(\alpha), \delta)$-exit set $E_\alpha(z)$ for each point $z \in S(s_i)$ in total $O(|S(s_i)| T(m_i, n))$ time.
    Let $E_\alpha(S(s_i))$ be the union of all computed exit sets.
    Since the resulting exit sets each have constant complexity, their total complexity is $O(|S(s_i)|)$ and hence we can compute $E_\alpha(S(s_i))$ in $O(|S(s_i)| \log |S(s_i)|)$ time.
    
    Report the $O(n / \alpha)$ candidate $\delta$-passages of column $P(s_{i+1})$ in $O(\log n + (n / \alpha) \log (n / \alpha))$ time using the data structure of Lemma~\ref{lem:sparse_data_structure}.
    We extract the bottom-most point of each connected component of the intersection between $E_\alpha(S(s_i))$ and the reported passages.
    This takes $O(|E_\alpha(S(s_i))|) = O(n / \alpha)$ extra time.
    The set $S(s_{i+1})$ is the set containing all these points.
    
    Computing $E_\alpha(S(s_i))$ for all $i$ takes $O(\sum_i |S_i| (T(m_i, n) + \log |S_i|) = O((n / \alpha) (\sum_i T(m_i, n) + \log n)$ time.
    Once $E_\alpha(S(s_k))$ is computed, return \yes if $(1, 1) \in E_\alpha(S(s_k))$, and \no otherwise.
    The claimed running time comes from computing the subcurves $P_i$ and the exit sets $E_\alpha(S(s_i))$.
\end{proof}

We show how to compute an $(O(\alpha), \delta)$-exit set efficiently in Sections~\ref{sec:piecewise_monotone} and~\ref{sec:computing_exit_sets}.
In Section~\ref{sec:piecewise_monotone} we generalize the result of Gudmundsson~\etal~\cite{gudmundsson19long} on curves with a large minimum edge length, to work on \emph{quasi-piecewise $(>4\delta)$-monotone} curves.
This gives a linear-time algorithm for computing a $(3, \delta)$-exit set when one curve is quasi-piecewise $(>4\delta)$-monotone.
In Section~\ref{sec:computing_exit_sets} we approximate two given curves $P$ and $Q$ by two curves $P^*$ and $Q^*$, such that $P^*$ is quasi-piecewise $(>4\delta)$-monotone.
This algorithm implies the following linear-time algorithm for computing exit sets between $P$ and $Q$:

\begin{restatable}{lemma}{lemmaFreeSpaceTraversal}
\label{lem:free_space_traversal}
    Let $\delta > 0$ and $P, Q \from [0, 1] \to \R^d$ be two curves with $m$ and $n$ vertices, respectively.
    Given an $(\ell, \delta)$-signature of $P$ for all $\ell$, whose $\delta$-footprints have a maximum size of $L$, we can compute a constant-complexity $(O(L / \delta), \delta)$-exit set of a point in $O(m + n)$ time.
\end{restatable}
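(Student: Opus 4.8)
The plan is to derive this lemma by composing the two subroutines developed in the following sections, so the proof is short modulo those results. First, using the supplied $(\ell,\delta)$-signatures of $P$ (for all $\ell$), I would apply the transformation of Section~\ref{sec:computing_exit_sets} to obtain in $O(m+n)$ time two curves $P^\ast$ and $Q^\ast$ such that $P^\ast$ is quasi-piecewise $(>4\delta)$-monotone. Informally, between consecutive signature vertices $P^\ast$ is straightened so that each coordinate is monotone within that signature edge's $\delta$-footprint, and $Q^\ast$ is the corresponding simplification of $Q$; the total size $L$ of the footprints is exactly the parameter that controls how much this straightening distorts distances.

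Next, I would establish the two properties that justify calling the transformed free space an $O(L/\delta)$-approximation of $\F_{\leq\delta}(P,Q)$. For completeness, every point that is $\delta$-reachable in $\F_{\leq\delta}(P,Q)$ maps to a $\delta$-reachable point in $\F_{\leq\delta}(P^\ast,Q^\ast)$: by Lemma~\ref{lem:matching_signatures} a $\delta$-matching meets each signature-vertex column inside a candidate $\delta$-passage, i.e.\ stays within the footprints, and on the footprints the straightening does not destroy any bimonotone path. For bounded soundness, within a single footprint all curve positions lie within distance $O(L)$ of one another, so any bimonotone path in $\F_{\leq\delta}(P^\ast,Q^\ast)$ corresponds to a bimonotone path in $\F_{\leq O(L)}(P,Q)$; hence a $\delta$-reachable point of $\F_{\leq\delta}(P^\ast,Q^\ast)$ pulls back to an $O(L)=O(L/\delta)\cdot\delta$-reachable point of $\F_{\leq O(L)}(P,Q)$.

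Finally, since $P^\ast$ is quasi-piecewise $(>4\delta)$-monotone, the algorithm of Section~\ref{sec:piecewise_monotone} computes a constant-complexity $(3,\delta)$-exit set $E^\ast$ of the image of the given point with respect to $P^\ast$ and $Q^\ast$ in $O(m+n)$ time. Pulling $E^\ast$ back to the column of $P(1)$ gives a constant-complexity set $E$; completeness of the transformation guarantees $E$ contains all $\delta$-reachable points of that column, while composing the factor $3$ from the exit-set algorithm with the $O(L/\delta)$ loss from the transformation shows every $\delta$-close point of $E$ is $O(L/\delta)\cdot\delta$-reachable from the given point with respect to $P$ and $Q$. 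Thus $E$ is the desired $(O(L/\delta),\delta)$-exit set, and every step runs in $O(m+n)$ time.

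The main obstacle is the second step: setting up the correspondence between bimonotone paths in the two free space diagrams precisely enough to argue that it loses nothing in the complete direction and costs only an additive $O(L)$ in the sound direction, in particular handling the behaviour at signature vertices and the portions of $Q$ that leave the footprints. This is exactly the technical content I would defer to Section~\ref{sec:computing_exit_sets}; the rest of the argument is just bookkeeping on approximation factors and running times.
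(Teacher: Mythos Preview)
Your proposal is correct and matches the paper's approach: combine Lemma~\ref{lem:exit_sets} (the transformation of Section~\ref{sec:computing_exit_sets} producing quasi-piecewise $(>4\delta)$-monotone $P^\ast$ and companion $Q^\ast$ with the completeness/soundness properties you state) with Theorem~\ref{thm:monotone_decision_algorithm} (the $(3,\delta)$-exit-set algorithm of Section~\ref{sec:piecewise_monotone}). One small remark: the transformation is pointwise on the shared parameter domain, so no ``pulling back'' is needed---the exit set computed for $P^\ast,Q^\ast$ lives in the same column $\{1\}\times[0,1]$ and is directly an exit set for $P,Q$; also, the actual straightening is organized around the bridge/compact decomposition rather than literally ``between consecutive signature vertices,'' but since you defer those details to Section~\ref{sec:computing_exit_sets} this does not affect the argument.
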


Plugging the above result into the black box of Theorem~\ref{thm:black_box_decision} with $T(m, n) = O(m + n)$, we obtain an $O(L / \delta) = O(\alpha)$-approximate decision algorithm that takes $O((n / \alpha) (\sum_i (m_i + n) + \log n) + (m + n) \log n) = O(mn^2 / \alpha + n \log n)$ time.
As this is super-quadratic, we show in the next section how to lower this time complexity to strongly subquadratic (for $\alpha = \Omega(n^\eps)$).

\subsection{Achieving subquadratic running time}

The dominating factor in the running times of the subproblems comes from the linear dependency on $n$ in Lemma~\ref{lem:free_space_traversal}.
To reduce the running time, we use simplifications $\simpl(Q)_i$ of $Q$ that have complexities roughly equal to those of the subcurves $P_i$.
To obtain a simplification that bounds the approximation ratio sufficiently, we use the data structure of Driemel and Har-Peled~\cite[Section~6.2]{driemel13jaywalking}.
We refer to this data structure as the \emph{simplified subcurve tree}.

The simplified subcurve tree $T(Q)$ can be constructed in $O(n \log^3 n)$ time and uses $O(n \log n)$ space.
Given a subcurve $Q'$ of $Q$ and a parameter $k \in \N$, we can in $O(k \log n)$ time extract a simplification $\simpl_k(Q')$ of $Q'$ with $O(k \log n)$ vertices, as well as a value $\tau(Q', k)$ that upper bounds the distance $d_F(Q', \simpl_k(Q'))$ between $Q'$ and the simplification.
Let $\rho(Q,k)$ be the \f distance from $Q$ to the closest curve with at most $k$ vertices.
Driemel and Har-Peled~\cite{driemel13jaywalking} show that $d_F(Q', \simpl_k(Q'))\leq\tau(Q', k)\leq 11\rho(Q',k)$.\footnote{
Driemel and Har-Peled~\cite{driemel13jaywalking} prove this property only for the case where $Q'$ is stored in a single node of $T(Q)$.
However, it follows from the proof of Lemma~6.8~\cite{driemel13jaywalking} that this property holds for general subcurves of $Q$ as well.
}

Let a subcurve $P_i$ have $m_i$ vertices.
We can directly apply the simplified subcurve tree on $Q$ with parameter $k = m_i$ to obtain a curve with $O(m_i \log n)$ vertices.
However, the induced error $\tau(Q, k)$ will generally be too large.
To obtain meaningful bounds on the induced error, we instead opt to compute a simplification $\simpl_k(Q_z)$ of a subcurve $Q_z = Q[y, y^*]$ that depends on the point $z = (0, y)$ for which we compute an exit set, such that the \f distance between $Q_z$ and $\simpl_k(Q_z)$ is not too large (relative to $\delta$).
To not lose relevant information on the free space between $P_i$ and $Q$, $Q_z$ must contain every subcurve $Q[y, y']$ that has distance $\delta$ to $P_i$, so that the points that are $\delta$-reachable from $z$ will be represented in the free space between $P_i$ and $Q_z$.
In the following theorem, show how to compute the simplification $\simpl_k(Q_z)$ of such a curve in $O(k \log^2 n)$ time, given $T(Q)$.

\begin{figure}
    \centering
    \includegraphics{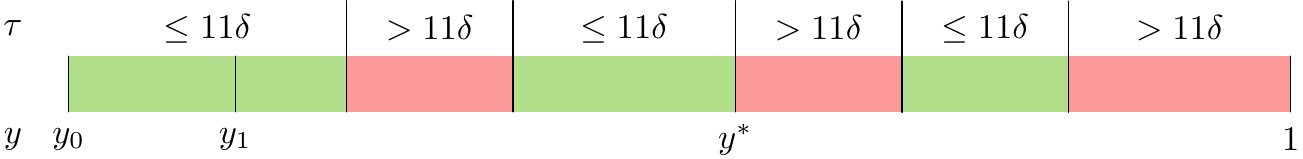}
    \caption{The values of $\tau(Q[y_0, y], k)$ for $y \in [y_0, 1]$.
    The green regions are contained in $T_{11\delta}(y_0, k)$, the red regions are not.
    The supremum $y^*$ of any connected component is at least $y_1$.
    }
    \label{fig:tau_simplifications}
\end{figure}
\begin{theorem}
\label{thm:simplification}
    The simplified subcurve tree $T(Q)$ supports $O(k \log^2 n)$ time queries with parameters $y_0 \in [0, 1]$, $\delta \geq 0$ and $k \in \N$ to retrieve a curve $\simpl_k(Q[y_0, y^*])$ for some $y^* \geq y_0$ that has $\tau(Q[y_0, y^*], k) \leq 11\delta$ and the following property.
    Let $y_1$ be the maximum value such that $\rho(Q[y_0, y_1], k) \leq \delta$, then $y^*\geq y_1$.
\end{theorem}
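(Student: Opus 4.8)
The plan is to reduce the problem to a binary search over the value $y^*$, using the monotonicity of $\tau$ along prefixes of $Q$ together with the structure of the simplified subcurve tree $T(Q)$. First I would note that $\tau(Q[y_0, y], k)$ is monotonically non-decreasing in $y$ (a longer subcurve cannot be approximated strictly better than a prefix of it by a curve of the same complexity), so the set $\{y \ge y_0 : \tau(Q[y_0, y], k) \le 11\delta\}$ is a (closed) interval $[y_0, y^*]$, and it suffices to find its right endpoint $y^*$. The inequality $\tau(Q[y_0, y_1], k) \le 11 \rho(Q[y_0, y_1], k) \le 11\delta$ for the $y_1$ in the statement then shows immediately that $y_1 \le y^*$, which is the required property; so the entire content is the efficient computation of $y^*$ and the extraction of $\simpl_k(Q[y_0, y^*])$.

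For the algorithmic part I would walk down $T(Q)$ from the root, maintaining a current left boundary $y_0$ and greedily extending as far right as the error budget $11\delta$ permits. The simplified subcurve tree decomposes any subcurve $Q[y_0, y']$ into $O(\log n)$ canonical nodes, each storing a simplification whose induced error is controlled; the key subroutine I need is: given $y_0$ and $\delta$, find the rightmost $y^*$ such that the canonical decomposition of $Q[y_0, y^*]$, reassembled into a single simplified curve of the promised $O(k\log n)$ complexity, still has total \f error at most $11\delta$. I would implement this by a walk along the right spine relative to $y_0$: at each node on the path from the leaf containing $y_0$ toward the root, I test whether the whole sibling subtree to its right can be appended without exceeding the budget (one $\tau$-evaluation, $O(\log n)$ time), and if so append it and continue upward; the first time appending an entire sibling would overshoot, I descend into that sibling and recurse on its children, again appending maximal prefixes. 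This is the standard fractional-cascading-free "segment tree range maximization" pattern, visiting $O(\log n)$ nodes and performing one $O(k\log n)$-sized query — costing $O(k\log n)$ time — at each, for a total of $O(k \log^2 n)$, matching the claimed bound. The concatenation of the $O(\log n)$ retrieved pieces is the output $\simpl_k(Q[y_0, y^*])$, and by construction $\tau(Q[y_0,y^*],k) \le 11\delta$.

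The main obstacle I anticipate is making precise the claim that the reassembled simplification of $Q[y_0, y^*]$ has error bounded by $11\delta$ and complexity $O(k\log n)$ when $Q[y_0,y^*]$ is a \emph{general} subcurve rather than one stored in a single node — exactly the subtlety flagged in the footnote citing Lemma~6.8 of Driemel and Har-Peled~\cite{driemel13jaywalking}. I would handle this by invoking that footnote directly: the guarantee $d_F(Q', \simpl_k(Q')) \le \tau(Q',k) \le 11\rho(Q',k)$ holds for arbitrary subcurves $Q'$, so each canonical piece I extract already satisfies it, and concatenating pieces that share endpoints does not increase the \f distance beyond the maximum of the pieces' errors (a standard gluing argument for \f matchings of concatenated curves, which the preliminaries already set up via $P \circ Q$). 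A secondary subtlety is ensuring the search correctly identifies the \emph{supremum} $y^*$ rather than stopping early: since $\tau$ is monotone and piecewise-continuous in $y$ as the decomposition changes at node boundaries, the greedy walk can only stop at a node boundary or strictly inside the last descended leaf-range, and in the latter case I binary-search for $y^*$ within that single edge of $Q$ using the explicit formula for the error contributed by a partial edge — a constant-time-per-step computation, absorbed into the $O(k\log^2 n)$ budget.
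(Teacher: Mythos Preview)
Your central assumption—that $\tau(Q[y_0,y],k)$ is monotone non-decreasing in $y$—is unjustified and, as the paper makes explicit, false. The parenthetical reason you give (``a longer subcurve cannot be approximated strictly better than a prefix of it by a curve of the same complexity'') is an argument about $\rho$, the \emph{optimal} approximation error. But $\tau$ is a data-structure-specific upper bound satisfying only $\rho\le\tau\le 11\rho$; it is computed from whatever canonical decomposition of $Q[y_0,y]$ the tree produces, and that decomposition changes discretely and non-monotonically as $y$ moves across vertex boundaries. The paper's Figure~\ref{fig:tau_simplifications} is drawn precisely to show that $T_{11\delta}(y_0,k)=\{y:\tau(Q[y_0,y],k)\le 11\delta\}$ can have \emph{several} connected components. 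So the set is not an interval, there is no well-defined ``right endpoint'', and your greedy right-spine walk can stop at a local obstruction well to the left of $y_1$.

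The paper sidesteps monotonicity entirely. It does a plain bisection over the vertices of $Q$, using two facts: (i) $\tau$ only changes at vertex parameters, and (ii) whenever a probe returns $\tau>11\delta$ one has $\rho>\delta$ and hence the probe point is to the right of $y_1$, so it is safe to go left; whenever a probe returns $\tau\le 11\delta$ that point is recorded and one goes right. This invariant guarantees the final recorded $y^*$ lies in $T_{11\delta}$ and satisfies $y^*\ge y_1$, even though $T_{11\delta}$ is not an interval. Each of the $O(\log n)$ probes is a single $O(k\log n)$ query to the tree, giving $O(k\log^2 n)$. Your tree-walk is both more elaborate and, because it relies on the false monotonicity, incorrect; you should replace it with this non-monotone bisection argument.
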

\begin{proof}
    Let $T_{11\delta}(y_0, k) = \{ y \in [y_0, 1] \mid \tau(Q[y_0, y], k) \leq 11\delta \}$.
    Note that $[y_0, y_1] \subseteq T_{11\delta}(y_0, k)$, since $\tau(Q[y_0, y], k) \leq 11\rho(Q[y_0, y], k) \leq 11\rho(Q[y_0, y_1]) \leq 11\delta$ for any $y \in [y_0, y_1]$.
    Hence, if $y^*$ is the supremum of a connected component of $T_{11\delta}(y_0, k)$, then $y^*\geq y_1$.
    Since $\tau(Q[y_0, y'], k) \leq 11\delta$ for any $y' \in T_{11\delta}(y_0, k)$, we have $d_F(Q[y_0, y'], \simpl_k(Q[y_0, y'])) \leq 11\delta$, so $\simpl_k(Q[y_0, y^*])$ has \f distance at most $11\delta$ to $Q[y_0, y^*]$.
    See Figure~\ref{fig:tau_simplifications}.
    
    We now show how to compute the maximum $y^*$ of some connected component of $T_{11\delta}(y_0, k)$ using a search procedure.
    For $y \geq y_0$, we can query $T(Q)$ with the subcurve $Q[y_0, y]$ and parameter $k$ to obtain a simplification $\simpl_k(Q[y_0, y])$ of $Q[y_0, y]$, as well as a value $\tau := \tau(Q[y_0, y], k)$.
    If $\tau > 11\delta$, then $11\delta < \tau \leq 11 \rho(Q[y_0, y],k)$ and hence $\rho(Q[y_0, y], k) > \delta$, from which it follows that $y > y_1$.
    If instead $\tau \leq 11\delta$, then $d_F(Q[y_0, y], \simpl_K(Q[y_0, y])) \leq \tau \leq 11\delta$, and thus $y \in T_{11\delta}(y_0, k)$.
    In the first case, we search for a lower value of $y$.
    In the latter case, we remember the value $y$ and search for a higher value of $y$ (possibly in a different connected component of $T_{11\delta}(y_0, k)$).
    
    When continuously varying $y$, the value of $\tau(Q[y_0,t],k)$ changes only when $y$ indexes a vertex.
    This follows from the construction of the simplifications, see Theorem~6.9 of Driemel and Har-Peled~\cite{driemel13jaywalking}.
    Hence we can restrict the search to values $y$ that define vertices of $Q$.
    Bisection search on the vertices performs $O(\log n)$ queries on $T(Q)$ and computes an edge of $Q$ whose indexing interval $Y$ contains a valid value $y^*$ in $O(k \log^2 n)$ time.
    Since $\tau(Q[y_0, t], k)$ changes only when $y$ indexes a vertex, we get that $y^*$ indexes a vertex of $Q$ where $\tau(Q[y_0, y^*], k) > 11\delta$ and $\tau(Q[y_0, y^* - \eps], k) \leq 11\delta$ for an arbitrarily small $\eps > 0$.
    We take a small enough $\eps' > 0$ such that $Q(y^* - \eps')$ lies on an edge incident to $Q(y^*)$, and return $\simpl_k(Q[y_0, y^*)) := \simpl_k(Q[y_0, y^* - \eps'], k) \circ Q[y^* - \eps', y^*)$.
\end{proof}

We augment Lemma~\ref{lem:free_space_traversal} to use the above result on the simplified subcurve tree.
Given a $\delta$-close point $z = (s_i, y)$ in the column of $P(s_i)$, we first extract a curve $\simpl_k(Q_z) := \simpl_k(Q[y, y^*])$ from $T(Q)$ according to Theorem~\ref{thm:simplification}.
We then construct $(\ell, 12\delta)$-signatures of $P_i$ that have a small $12\delta$-footprint.
Note that the result of Driemel~\etal~\cite[Lemma~7.1]{driemel15clustering} gives a method to construct $(\ell, 12\delta)$-signatures of $P_i$ that use subsets of the vertices of $\Sigma^\ell(P_i)$.
This process takes $O(m_i \log m_i)$ time per signature, after $O(m_i \log m_i)$ preprocessing~\cite[Lemma~7.5]{driemel15clustering}.
The maximum size of the $12\delta$-footprints of these signatures is at most $12$ times as high as the maximum size of the $\delta$-footprints of the original signatures, since the new signatures use a subset of the vertices.
Thus, their $12\delta$-footprints have size at most $12L = O(\alpha \delta)$.

We apply Lemma~\ref{lem:free_space_traversal} on the curves $P_i$ and $\simpl(Q_z)$, with the point $z' = (0, 0)$ and value $\delta' = 12\delta$.
This gives an $(O(\alpha), \delta')$-exit set $E_\alpha(z')$ for $z'$ in $O(m_i \log n)$ time.
Let $(s_{i+1}, y')$ be a point in the column of $P(s_{i+1})$ that is $\delta$-reachable from $z$.
By Theorem~\ref{thm:simplification} we have we have $y' \leq y^*$ and hence that $d_F(Q[y, y'], Q_z[0, y'']) \leq 11\delta$ for some $y'' \in [0, 1]$.
By the triangle inequality, the point $(1, y'')$ is therefore a $12\delta$-reachable point in $\F_{\leq 12\delta}(P_i, \simpl_k(Q_z))$.
Every point in $\F_{\leq \delta}(P_i, Q)$ that is $\delta$-reachable from $z$ thus corresponds to a point in $\F_{\leq 12\delta}(P_i, \simpl_k(Q_z))$ that is $12\delta$-reachable from $z'$.
Note that it follows from the triangle inequality that every point in $\F_{\leq 12\delta}(P_i, \simpl_k(Q_z))$ that is $12\delta$-reachable from $z'$ corresponds to a point in $\F_{\leq \delta}(P_i, Q)$ that is $23\delta$-reachable from $z$.
Thus we have that $E_\alpha(z')$ corresponds to an $(O(\alpha), \delta)$-exit set for $z$.

Note that the set $E_\alpha(z')$ is represented as a single vertical line segment.
This implies that $E_\alpha(z)$ can be represented by a single vertical line segment as well.
We obtain the following result:

\begin{lemma}
    Let $\delta > 0$ and $P, Q \from [0, 1] \to \R^d$ be two curves with $m$ and $n$ vertices, respectively.
    Given an $(\ell, \delta)$-signature of $P$ for all $\ell$, whose $\delta$-footprints have a maximum size of $L$, we can compute a constant-complexity $(O(L / \delta), \delta)$-exit set of a point in $O(m \log^2 n)$ time.
\end{lemma}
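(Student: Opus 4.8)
The plan is to assemble the pieces developed in this section into a single augmented query algorithm. Starting from a $\delta$-close point $z=(s_i,y)$ in the column of $P(s_i)$, I would first invoke Theorem~\ref{thm:simplification} on the simplified subcurve tree $T(Q)$ with parameters $y_0=y$, the value $11\delta$ disguised through the statement's $\delta$-argument, and $k=m_i$, obtaining a curve $\simpl_k(Q_z)=\simpl_k(Q[y,y^*])$ with $O(m_i\log n)$ vertices in $O(m_i\log^2 n)$ time. The defining property I need from that theorem is twofold: $d_F(Q[y,y^*],\simpl_k(Q_z))\le 11\delta$, and $y^*\ge y_1$ where $y_1$ is the largest value with $\rho(Q[y,y_1],k)\le\delta$. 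The second part is what guarantees that $Q_z$ is ``long enough'': any subcurve $Q[y,y']$ within \f distance $\delta$ of $P_i$ (which has only $m_i$ vertices) satisfies $\rho(Q[y,y'],m_i)\le\delta$, hence $y'\le y_1\le y^*$, so no relevant reachability information is lost by truncating $Q$ at $y^*$.

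Next I would produce signatures of $P_i$ at the inflated scale $12\delta$. Using \cite[Lemma~7.1]{driemel15clustering}, I can build $(\ell,12\delta)$-signatures of $P_i$ whose vertex sets are subsets of the vertices of the given $(\ell,\delta)$-signatures $\Sigma^\ell(P_i)$; by \cite[Lemma~7.5]{driemel15clustering} this costs $O(m_i\log m_i)$ per $\ell$ after $O(m_i\log m_i)$ preprocessing, which is absorbed into $O(m_i\log^2 n)$. Since the new signature uses a subset of the old vertices, its $12\delta$-footprint is contained in a $12$-fold thickening of the old $\delta$-footprint, so its size is at most $12L=O(\alpha\delta)$. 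Then I apply Lemma~\ref{lem:free_space_traversal} to the curve pair $(P_i,\simpl_k(Q_z))$ with starting point $z'=(0,0)$ and scale $\delta'=12\delta$; since $\simpl_k(Q_z)$ has $O(m_i\log n)$ vertices, this runs in $O(m_i\log n)$ time and returns a constant-complexity $(O(L/\delta'),\delta')=(O(\alpha),12\delta)$-exit set $E_\alpha(z')$, represented as a single vertical segment in the column of $P(s_{i+1})$.

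The remaining work is the correspondence argument showing that $E_\alpha(z')$, transported back to the column of $P(s_{i+1})$ in the original free space, is an $(O(\alpha),\delta)$-exit set for $z$. For the ``contains all $\delta$-reachable points'' direction: if $(s_{i+1},y')$ is $\delta$-reachable from $z$ in $\F_{\le\delta}(P_i,Q)$, then $d_F(P_i[\,\cdot\,],Q[y,y'])\le\delta$, so by the length guarantee $y'\le y^*$, and by $d_F(Q[y,y'],Q_z[0,y''])\le 11\delta$ (restricting the simplification to a prefix) plus the triangle inequality, $(1,y'')$ is $12\delta$-reachable from $z'$ in $\F_{\le 12\delta}(P_i,\simpl_k(Q_z))$, hence lies in $E_\alpha(z')$. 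For the ``only $O(\alpha\delta)$-reachable points'' direction: any point of $E_\alpha(z')$ is $12\delta$-reachable from $z'$, hence corresponds via the triangle inequality ($12\delta+11\delta$) to a point in $\F_{\le\delta}(P_i,Q)$ that is $23\delta$-reachable from $z$, and $23\delta=O(\alpha\delta)$ as required. Since both $E_\alpha(z')$ and the back-transported segment have constant complexity, the claimed constant-complexity $(O(L/\delta),\delta)$-exit set in $O(m\log^2 n)$ time follows. The main obstacle is the first correspondence direction: one must be careful that the prefix-of-simplification argument is legitimate — that $\simpl_k(Q_z)$ really does contain a subcurve at \f distance $\le 11\delta$ from every relevant prefix $Q[y,y']$, which is exactly why Theorem~\ref{thm:simplification} is phrased in terms of $\rho$ of prefixes rather than of $Q_z$ as a whole, and why $k=m_i$ rather than some smaller value is the right choice.
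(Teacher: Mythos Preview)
Your proposal is correct and follows essentially the same approach as the paper: extract $\simpl_k(Q_z)$ via Theorem~\ref{thm:simplification} with $k=m_i$, build $(\ell,12\delta)$-signatures of $P_i$ via \cite{driemel15clustering}, apply Lemma~\ref{lem:free_space_traversal} at scale $12\delta$, and then run the two-direction correspondence argument (the $11\delta$/$12\delta$/$23\delta$ triangle-inequality bookkeeping) to conclude. Your final remark about why $k=m_i$ and the prefix-$\rho$ formulation of Theorem~\ref{thm:simplification} are essential is exactly the point the paper relies on as well.
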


Plugging the above lemma into the black box of Theorem~\ref{thm:black_box_decision} with $T(m, n) = O(m \log^2 n)$ gives our main result: an $O(L / \delta) = O(\alpha)$-approximate decision algorithm that takes $O((m + n) \log n + (n / \alpha) (\sum_i m_i \log^2 n) = O((mn / \alpha) \log^2 n + n \log n)$ time, given $T(Q)$.
We plug this decision algorithm into the optimization procedure of Colombe and Fox~\cite{colombe21continuous_frechet}, to turn it into an algorithm for computing the \f distance.
This does come at the cost of a logarithmic factor in running time, as well as a constant factor in the approximation ratio.
Together with the $O(n \log^3 n)$ preprocessing time to construct $T(Q)$, we obtain the following result:

\begin{theorem}
    Let $P, Q \from [0, 1] \to \R^d$ be two curves, with $m$ and $n$ vertices respectively, and let $\delta > 0$ and $\alpha \in [1, n]$ be parameters.
    Then we can compute an $O(\alpha)$-approximation to $d_F(P, Q)$ in $O((n + mn / \alpha) \log^3 n)$ time.
\end{theorem}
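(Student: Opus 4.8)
The plan is to assemble the final theorem directly from the two ingredients already established in this section: the approximate decision algorithm obtained by plugging the last lemma into Theorem~\ref{thm:black_box_decision}, and the optimization-from-decision reduction of Colombe and Fox~\cite{colombe21continuous_frechet}. First I would invoke Theorem~\ref{thm:black_box_decision} with the exit-set subroutine of the last lemma, for which $T(m_i, n) = O(m_i \log^2 n)$. Since $\sum_i m_i = O(m)$ (the subcurves $P_i$ partition $P$) and the number of sparse signature columns is $O(n/\alpha)$ by the sparsity argument underlying Theorem~\ref{thm:splitting_P}, the cost $\sum_i |S(s_i)|(T(m_i,n) + \log n)$ telescopes to $O((n/\alpha)(\sum_i m_i \log^2 n + \log n)) = O((mn/\alpha)\log^2 n + (n/\alpha)\log n)$, and adding the $O((m+n)\log n)$ term for constructing the $P_i$ yields a running time of $O((mn/\alpha)\log^2 n + n\log n)$ for the $O(\alpha)$-approximate decision problem, once $T(Q)$ is available. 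The correctness of this decision procedure — that a \yes answer certifies $d_F(P,Q) \le c\alpha\delta$ and a \no answer certifies $d_F(P,Q) > \delta$ — is exactly what the iterative exit-set construction of Theorem~\ref{thm:black_box_decision} guarantees, using that each $E_\alpha(S(s_i))$ contains all $\delta$-reachable points and only $O(\alpha\delta)$-reachable points in the corresponding column, and that by Lemma~\ref{lem:matching_signatures} any $\delta$-matching must pass through a candidate $\delta$-passage in each signature column.

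Next I would add the $O(n\log^3 n)$ preprocessing needed to build the simplified subcurve tree $T(Q)$ of Driemel and Har-Peled~\cite{driemel13jaywalking}; this is done once, so it is additive and not multiplied by the number of decision calls. Then I would apply the reduction of Colombe and Fox: a critical-value / exponential-plus-binary-search scheme that calls the approximate decision procedure $O(\log n)$ times (technically, $O(\log \tfrac{1}{\eps'} + \log n)$ times for an arbitrarily small blow-up $\eps'$ in the constant), each time with a different value of $\delta$, and returns an $O(\alpha)$-approximation to $d_F(P,Q)$. Multiplying the per-call decision cost $O((mn/\alpha)\log^2 n + n\log n)$ by the $O(\log n)$ calls gives $O((mn/\alpha)\log^3 n + n\log^2 n)$, and adding the one-time preprocessing $O(n\log^3 n)$ yields the claimed $O((n + mn/\alpha)\log^3 n)$ total. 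The approximation ratio degrades only by a constant factor through this reduction, so the output is still an $O(\alpha)$-approximation.

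Strictly speaking, almost every nontrivial step here has already been carried out earlier in the excerpt, so the main "obstacle" is simply bookkeeping: one must check that the constant $c$ in the decision guarantee composes correctly through the optimization wrapper, that $\alpha \in [1,n]$ ensures $n/\alpha \ge 1$ so that the $O(n/\alpha)$ bound on the number of sparse columns is not vacuous, and that the $\log^2 n$ versus $\log^3 n$ factors are tracked consistently (the extra logarithmic factor comes entirely from the $O(\log n)$ decision calls in the optimization step, layered on top of the $\log^2 n$ already present per call). I would also note that since $d$ is constant, replacing the $L_\infty$-norm matchings with Euclidean ones changes the approximation ratio by at most the constant $\sqrt{d}$, as observed in the preliminaries, so the result holds for the standard (Euclidean) \f distance as well. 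No genuinely new argument is required; the theorem is the packaging of the pieces.
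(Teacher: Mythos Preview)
Your proposal is correct and follows essentially the same route as the paper: plug the $O(m\log^2 n)$ exit-set lemma into Theorem~\ref{thm:black_box_decision} to get an $O((mn/\alpha)\log^2 n + n\log n)$ decision procedure (given $T(Q)$), add the one-time $O(n\log^3 n)$ construction of $T(Q)$, and apply the Colombe--Fox optimization wrapper at the cost of one extra $\log n$ factor and a constant in the approximation ratio. One small inaccuracy: the number of sparse signature columns is not bounded by $O(n/\alpha)$ (it can be as large as $m$); the $(n/\alpha)$ factor in the running time comes instead from $|S(s_i)| = O(n/\alpha)$ for each $i$, which together with $\sum_i m_i = O(m)$ gives the bound you state.
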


\section{Handling piecewise monotone curves}
\label{sec:piecewise_monotone}

In this section we present an algorithm for computing a $(3,\delta)$-exit set between a quasi-piecewise $(>4\delta)$-monotone curve $P$ (see Definition~\ref{def:monotone_curves}) and an arbitrary curve $Q$ in $O(m+n)$ time.

\begin{restatable}[Piecewise Monotone Curves]{definition}{definitionMonotoneCurves}
\label{def:monotone_curves}
    Let $P \from [0, 1] \to \R^d$ be a curve.
    $P$ is \emph{monotone} if its projection on each of the $d$ coordinate axes is monotonically increasing or decreasing.
    $P$ is \emph{$(>\delta)$-monotone} if moreover $\normInf{P(0) - P(1)} > \delta$.
    $P$ is \emph{piecewise $(>\delta)$-monotone} if it is composed of $(>\delta)$-monotone curves.
    $P$ is \emph{quasi-piecewise $(>\delta)$-monotone} if it is the composition of a piecewise $(>\delta)$-monotone curve with a monotone curve (so the last piece is not necessarily $(>\delta)$-monotone).
\end{restatable}

The following two observations indicate that monotone curves behave similar to line segments.

\begin{observation}
\label{obs:monotone_in_cube}
    Any $\delta$-ball of the $L_\infty$ norm intersects any monotone curve in at most one subcurve.
So if $P$ is monotone, any horizontal line intersects the $\delta$-freespace of $P$ and $Q$ in a convex set.
\end{observation}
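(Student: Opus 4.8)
The plan is to reduce both sentences to the elementary fact that the preimage of a closed interval under a continuous monotone function $\R \to \R$ is again a closed interval. First I would observe that a $\delta$-ball of the $L_\infty$ norm centered at a point $c \in \R^d$ is exactly the axis-aligned box $B = \prod_{\ell=1}^d [c^\ell - \delta,\, c^\ell + \delta]$. Consequently, for a monotone curve $P \from [0,1] \to \R^d$, the set of parameters mapped into $B$ satisfies
\[
    P^{-1}(B) = \bigcap_{\ell=1}^d (P^\ell)^{-1}\bigl([c^\ell - \delta,\, c^\ell + \delta]\bigr).
\]

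Next I would argue that each factor $(P^\ell)^{-1}([c^\ell - \delta, c^\ell + \delta])$ is an interval: by the definition of a monotone curve, $P^\ell$ is continuous and monotonically increasing or decreasing, so the preimage of a closed interval is a (possibly empty or degenerate) closed subinterval of $[0,1]$. A finite intersection of intervals is again an interval, hence $P^{-1}(B)$ is an interval; that is, $P$ meets $B$ in a single subcurve $P[x_1, x_2]$, or not at all. This establishes the first sentence.

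For the second sentence I would fix a horizontal line $\{y\} \times [0,1]$ and set $q = Q(y)$. A point $(x, y)$ lies in $\F_{\leq \delta}(P, Q)$ precisely when $\normInf{P(x) - q} \leq \delta$, i.e.\ when $x \in P^{-1}(B)$ for the $L_\infty$ $\delta$-ball $B$ centered at $q$. By the preceding paragraph this set of $x$-values is an interval, hence a convex subset of $[0,1]$, which is exactly the claim (and note that only monotonicity of $P$, not of $Q$, is used).

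I do not expect a genuine obstacle here; the one point deserving a little care is that the curve definition in the preliminaries permits weak monotonicity (constant stretches of $P^\ell$), but the argument tolerates this since the preimage of a closed interval under a weakly monotone continuous map is still a closed interval.
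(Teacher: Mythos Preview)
Your argument is correct and is exactly the natural justification the paper has in mind; the paper itself states this only as an observation and gives no proof, so there is nothing to compare against beyond noting that your decomposition into coordinate projections and use of the preimage-of-an-interval fact is the intended reasoning. One small notational slip: in the paper's conventions a horizontal line in $\D(P,Q)$ is $[0,1]\times\{y\}$ (the column of $P(x)$ is defined as $\{x\}\times[0,1]$), not $\{y\}\times[0,1]$ as you wrote, though your subsequent computation with a point $(x,y)$ shows you meant the right thing.
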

\begin{observation}
\label{obs:monotone_moving_closer}
    If $P$ is monotone and $x \leq x'$, then $\normInf{P(x) - P(1)} \geq \normInf{P(x') - P(1)}$.
\end{observation}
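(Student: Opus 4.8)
The plan is to reduce the statement to a coordinate-by-coordinate claim, which is where the hypothesis ``monotone'' actually has content. By Definition~\ref{def:monotone_curves}, $P$ being monotone means that each coordinate function $P^\ell \from [0,1] \to \R$, for $\ell = 1, \dots, d$, is either monotonically increasing or monotonically decreasing. Fix such an $\ell$ and fix $x \le x'$; since $x' \in [0,1]$ we in fact have $x \le x' \le 1$. If $P^\ell$ is increasing then $P^\ell(x) \le P^\ell(x') \le P^\ell(1)$, and if it is decreasing then $P^\ell(x) \ge P^\ell(x') \ge P^\ell(1)$. In either case $P^\ell(x')$ lies (weakly) between $P^\ell(x)$ and $P^\ell(1)$, and hence $|P^\ell(x') - P^\ell(1)| \le |P^\ell(x) - P^\ell(1)|$. (When $P^\ell$ is constant on $[x,1]$ this is just the equality $0 = 0$, so the degenerate case needs no separate treatment.) To finish, recall that $\normInf{P(y) - P(1)} = \max_{1 \le \ell \le d} |P^\ell(y) - P^\ell(1)|$ for every $y \in [0,1]$, so taking the maximum over $\ell$ of the per-coordinate inequalities just established gives
\[
    \normInf{P(x') - P(1)} = \max_\ell |P^\ell(x') - P^\ell(1)| \le \max_\ell |P^\ell(x) - P^\ell(1)| = \normInf{P(x) - P(1)},
\]
which is exactly the asserted inequality.

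There is no genuine obstacle here: the only point requiring a moment's care is that ``monotone'' is a \emph{per-coordinate} condition, so the argument must be run one coordinate at a time and then recombined through the $\max$ in the definition of the $L_\infty$ norm; no earlier result in the paper is needed. The observation is simply the monotone-curve analogue of the elementary fact that, along a directed line segment, the $L_\infty$-distance to the terminal point decreases monotonically with the parameter, and it is this monotonicity that lets the free-space exploration for quasi-piecewise $(>4\delta)$-monotone curves in Section~\ref{sec:piecewise_monotone} treat each monotone piece essentially like an edge.
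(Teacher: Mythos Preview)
Your proof is correct. The paper states this as a bare observation with no proof at all, so your coordinate-by-coordinate argument simply spells out what the authors leave implicit; there is nothing to compare.
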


We now generalize the definition of the \emph{longest $\delta$-prefix}~\cite{gudmundsson19long}.
\begin{definition}[Longest $\delta$-Prefix]
    The \emph{longest $\delta$-prefix} of a curve $Q$ with respect to a curve $P$, is the largest subcurve $Q[0, y]$ of $Q$ for which $d_F(P, Q[0, y]) \leq \delta$.
\end{definition}
Note that the longest $\delta$-prefix of $Q$ with respect to $P$ exists if and only if $d_F(P,Q[0,y])\leq\delta$ for some $y\in[0,1]$.

\subsection{The structure of an exit set}
For the lemmas in this section, let $P,Q \from [0, 1] \to \R^d$ be two curves, where $P$ is a quasi-piecewise $(>4\delta)$-monotone curve with $k$ pieces $P[x_0,x_1],\dots,P[x_{k-1},x_k]$.
If $d_F(P[0,x_i],Q[0,y])\leq\delta$ for some $y\in[0,1]$, define $y_i$ such that $Q[1,y_i]$ is the longest $\delta$-prefix of $Q$ with respect to $P[0,x_i]$.
    \begin{observation}\label{obs:no_exit}
        If $y_k$ is undefined, then the empty set is a $(1,\delta)$-exit set.
    \end{observation}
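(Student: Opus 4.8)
The plan is to unwind the definition of a $(1,\delta)$-exit set and invoke the Alt--Godau characterization of $\delta$-reachability recalled in the preliminaries. By convention an $(\alpha,\delta)$-exit set ``with respect to $P$ and $Q$'' is an exit set for the point $(0,0)$, so I must produce a set $E \subseteq \{1\}\times[0,1]$ that $(1)$~contains every point in the column of $P(1)$ that is $\delta$-reachable from $(0,0)$, and $(2)$~has the property that every $\delta$-close point of $E$ is $\delta$-reachable from $(0,0)$. I claim $E = \emptyset$ works: condition $(2)$ is vacuous for the empty set, so the whole argument reduces to showing that no point in the column of $P(1)$ is $\delta$-reachable from $(0,0)$.

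The key step is to translate $\delta$-reachability of a point $(1,y)$ into a statement about the \f distance. By definition, $(1,y)$ is $\delta$-reachable from $(0,0)$ if and only if there is a bimonotone path in $\F_{\leq\delta}(P,Q)$ from $(0,0)$ to $(1,y)$, which by Alt and Godau holds if and only if $d_F(P[0,1], Q[0,y]) \leq \delta$; since $P[0,1] = P$, the column of $P(1)$ contains a $\delta$-reachable point if and only if some prefix $Q[0,y]$ of $Q$ has $d_F(P, Q[0,y]) \leq \delta$. Now observe that, because the pieces of $P$ partition $[0,1]$, we have $x_k = 1$ and hence $P[0,x_k] = P$; by the way $y_k$ was defined, $y_k$ exists precisely when $d_F(P, Q[0,y]) \leq \delta$ for some $y \in [0,1]$ (and then $Q[0,y_k]$ is the longest such prefix). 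So ``$y_k$ is undefined'' is exactly the statement that no prefix of $Q$ lies within \f distance $\delta$ of $P$, i.e.\ that no point of the column of $P(1)$ is $\delta$-reachable from $(0,0)$. Hence $\emptyset$ satisfies both conditions and is a $(1,\delta)$-exit set.

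I do not anticipate a real obstacle. The only things to keep straight are that the exit set here is for the fixed reference point $(0,0)$ (not an arbitrary $(0,y)$), and that since $E = \emptyset$ the $\alpha$-reachability requirement $(2)$ is vacuous, so the bound $\alpha = 1$ is immaterial --- any $\alpha \geq 1$ would serve. Note also that the argument does not use the quasi-piecewise $(>4\delta)$-monotonicity of $P$; that hypothesis is merely inherited from the surrounding setup.
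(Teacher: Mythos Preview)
Your argument is correct and is precisely the intended one: the paper states this as an observation without proof, and your unpacking of the definitions (vacuity of condition~(2) for $E=\emptyset$, plus the Alt--Godau equivalence to see that $y_k$ undefined means no $(1,y)$ is $\delta$-reachable from $(0,0)$) is exactly the reasoning that makes it immediate.
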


We prove in Lemma~\ref{lem:qpm_exit} that a specific $(3,\delta)$-exit set can be represented as an empty set or an interval.
For this, we first show in Lemma~\ref{lem:pm_exit} that if $P$ has the stronger property that it is piecewise $(>4\delta)$-monotone, then a specific $(1,\delta)$-exit set can be represented as an empty set, or an interval.
We then show it Lemmas~\ref{lem:qpm_compute_exit} and \ref{lem:pm_compute_exit} how to compute the respective intervals in linear time.
To compute the exit sets for piecewise $(>4\delta)$-monotone curves, we take advantage of the property (shown below) that any $\delta$-reachable point $(x_{i+1},y)$ is reachable by a monotone path from $(x_i,y_i)$.
    \begin{lemma}\label{lem:pm_exit}
        Let $P$ be piecewise $(>4\delta)$-monotone.
        Assume that $y_k$ is defined and let $y_k^-\in[y_{k-1},y_k]$ be the minimum value such that $y_k^-\geq y_{k-1}$ and $\normInf{P(1)-Q(y_k^-)}\leq\delta$.
        Then
        \begin{enumerate}
            \item[(a)] any $\delta$-reachable point $(1,y)$ has $y\geq y_{k-1}$,
            \item[(b)] for $y\in[y_{k-1},y_k]$, if $(1,y)$ is $\delta$-close, then $(1,y)$ is $\delta$-reachable, and
            \item[(c)] for $y\in[y_k^-,y_k]$, $(1,y)$ is $3\delta$-reachable.
        \end{enumerate}
        Note that properties (a) and (b) simply state that $\{x_k\}\times[y_k^-,y_k]$ is a $(1,\delta)$-exit set.
    \end{lemma}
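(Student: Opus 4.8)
The plan is to induct on the pieces of $P$, establishing along the way the auxiliary claim that every $\delta$-reachable point $(x_i, y)$ in a piece boundary column satisfies $y \ge y_{i-1}$, and that such a point is reachable by a monotone path from $(x_i, y_i)$ when it is $\delta$-close. For part (a), I would argue as follows: since $P[x_{i-1}, x_i]$ is $(>4\delta)$-monotone, Observation~\ref{obs:monotone_in_cube} says each column (horizontal line) meets the $\delta$-free space of this piece in a convex set, and Observation~\ref{obs:monotone_moving_closer} controls how the distance to the endpoint $P(x_i)$ behaves as we move along the piece. Combined with the fact that $\normInf{P(x_{i-1})-P(x_i)} > 4\delta$, any bimonotone path entering the column of $P(x_i)$ must have already ``used up'' the part of $Q$ needed to realize $d_F(P[0,x_i], Q[0,y]) \le \delta$; this forces $y \ge y_{i-1}$ because $Q[1, y_{i-1}]$ is by definition the longest $\delta$-prefix for $P[0, x_{i-1}]$, and a $\delta$-reachable point at column $x_i$ yields a $\delta$-matching of $P[0,x_i]$ with a prefix of $Q$ that is at least as long, hence its $Q$-coordinate is at least $y_{i-1}$. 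Iterating this across pieces gives (a) for $i = k$.

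For part (b), let $y \in [y_{k-1}, y_k]$ with $(1,y)$ $\delta$-close. By definition of $y_{k-1}$ there is a $\delta$-matching between $P[0, x_{k-1}]$ and $Q[0, y_{k-1}]$, i.e.\ $(x_{k-1}, y_{k-1})$ is $\delta$-reachable. It then suffices to show $(1, y)$ is $\delta$-reachable from $(x_{k-1}, y_{k-1})$ using only the last piece $P[x_{k-1}, x_k]$. Here I would use the monotonicity of that piece: the $\delta$-free space of $P[x_{k-1}, x_k]$ restricted to $Q[y_{k-1}, y_k]$ is ``well-behaved'' in that every horizontal slice is an interval (Observation~\ref{obs:monotone_in_cube}), and since $Q[1, y_k]$ is the \emph{longest} $\delta$-prefix for $P[0, x_k]$ while $(1,y)$ with $y \le y_k$ is $\delta$-close, one can steer a monotone path up from $(x_{k-1}, y_{k-1})$ to $(x_k, y)$ by increasing $y$-coordinate and $x$-coordinate together, staying $\delta$-close throughout. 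The key subpoint is that we never get ``stuck'': Observation~\ref{obs:monotone_moving_closer} guarantees that advancing along $P$ toward $P(1)$ only decreases the distance to $P(1)$, so once we are in the slab between $y_{k-1}$ and $y_k$ we can always make progress. This is the step I expect to require the most care, since it is exactly the place where the monotone (as opposed to merely piecewise-monotone) structure of the last piece is exploited, and where the definition of longest $\delta$-prefix must be invoked precisely.

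For part (c), the point is that for $y$ below $y_k^-$ we may not be able to reach $(1,y)$ within factor $\delta$, but we can within factor $3\delta$. I would argue that $(1, y_k^-)$ is $\delta$-reachable by part (b) (note $y_k^- \in [y_{k-1}, y_k]$ and is $\delta$-close by its defining property), and then for any $y \in [y_k^-, y_k]$, the subcurve $Q[y_k^-, y]$ has both endpoints within $\delta$ of $P(1)$, so by the triangle inequality its entire image lies within $2\delta$ of $P(1)$ only after we also use that $Q$ cannot wander far — actually the cleaner route is: $\normInf{P(1) - Q(y)} \le \normInf{P(1) - Q(y_k^-)} + \normInf{Q(y_k^-) - Q(y)}$, and bounding $\normInf{Q(y_k^-) - Q(y)}$ by noting that $Q[y_k^-, y]$ must be $\delta$-matchable to the single point $P(1)$ up to the endpoint behavior, giving a $3\delta$ bound; hence staying in the column $\{x_k\}$ and moving straight up from $(1, y_k^-)$ to $(1, y)$ is a monotone $3\delta$-path. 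Finally, the concluding remark that $\{x_k\} \times [y_k^-, y_k]$ is a $(1,\delta)$-exit set is immediate: part (a) plus (b) show it contains exactly the $\delta$-reachable $\delta$-close points of that column (all of which have $y \in [y_{k-1}, y_k]$, and the $\delta$-close ones among those with $y < y_k^-$ do not exist by definition of $y_k^-$ as the \emph{minimum} such value — any $\delta$-close point has $y \ge y_k^-$), and the $\delta$-reachability requirement of an exit set is met by (b) while the $\alpha\delta = \delta$-reachability of all $\delta$-close points in the set is also (b).
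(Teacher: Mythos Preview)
Your plan has two genuine gaps, both stemming from not exploiting the global $\delta$-path $\pi$ from $(0,0)$ to $(1,y_k)$ that exists by definition of~$y_k$.

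\textbf{Part (a).} Your justification via longest prefixes points the wrong way. From a $\delta$-reachable point $(x_k,y)$ you get a $\delta$-matching of $P[0,x_k]$ with $Q[0,y]$; restricting to column $x_{k-1}$ gives $d_F(P[0,x_{k-1}],Q[0,y'])\le\delta$ for some $y'\le y$, which says only $y'\le y_{k-1}$ (an \emph{upper} bound), not $y\ge y_{k-1}$. The actual argument is a triangle-inequality contradiction that crucially needs property~(c) in the induction: if $y\in[y_{k-1}^-,y_{k-1}]$, then by the inductive (c) the point $(x_{k-1},y)$ is $3\delta$-close, so $\normInf{P(x_k)-Q(y)}\ge\normInf{P(x_{k-1})-P(x_k)}-\normInf{P(x_{k-1})-Q(y)}>4\delta-3\delta=\delta$, contradicting that $(x_k,y)$ is $\delta$-close. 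You never invoke~(c) inductively, and without it the $4\delta$ gap alone does not close the argument.

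\textbf{Part (c).} The assertion that ``$Q[y_k^-,y]$ must be $\delta$-matchable to the single point $P(1)$'' is false: along $\pi$ the subcurve $Q[y_k^-,y]$ is matched to a portion of the last monotone piece $P[x_{k-1},x_k]$, not to $P(1)$ alone, and that portion can have diameter up to~$2\delta$. The paper obtains the $3\delta$ bound differently: the path $\pi$ crosses row $y$ at some $(x,y)$ with $x\in[x^-,x_k]$, where $(x^-,y_k^-)$ is the point of $\pi$ on row $y_k^-$; then $(x,y)$, $(x,y_k^-)$, and $(x_k,y_k^-)$ are all $\delta$-close (the middle one by Observation~\ref{obs:monotone_in_cube} applied to the horizontal segment from $(x^-,y_k^-)$ to $(x_k,y_k^-)$), and two triangle inequalities give $\normInf{P(x_k)-Q(y)}\le 3\delta$. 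The same path-based idea also streamlines your part~(b): rather than steering a new path from $(x_{k-1},y_{k-1})$ and worrying about getting stuck, use that $\pi$ must cross row $y$ at some $(x,y)$ with $x\in[x_{k-1},x_k]$ (since $\pi$ cannot pass through $\{x_{k-1}\}\times(y_{k-1},1]$), and then append the horizontal segment to $(x_k,y)$ via Observation~\ref{obs:monotone_in_cube}.
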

    \begin{proof}
        If $y_k$ is defined, then so is~$y_{k-1}$.
        We prove the properties (a-c) by induction on the number of pieces $k$.
        The proof for the base case $k=1$ is a special case of $k>0$, so we first prove the case $k>1$ assuming that the lemma holds for curves with fewer than $k$ pieces.
        
        Let $[y^*,y_{k-1}]$ be the exit set obtained by applying the lemma to first $k-1$ pieces of $P$.
        Let $(1,y)$ be any $\delta$-reachable point.
        Because there is no $\delta$-reachable point below $(x_{k-1},y^*)$ and matchings are monotone, we have $y\geq y^*$.
        Assume for a contradiction that $y\in[y^*,y_{k-1}]$.
        By induction, $(x_{k-1},y)$ is $3\delta$-reachable, so $(x_{k-1},y)$ is $3\delta$-close.
        By $(>4\delta)$-monotonicity, we have $\normInf{P(x_{k-1})-P(x_k)}>4\delta$, so by triangle inequality,
        $$\normInf{P(x_k)-Q(y)}\geq\normInf{P(x_{k-1})-P(x_k)}-\normInf{P(x_{k-1})-Q(y)}>4\delta-\delta\geq 3\delta,$$
        contradicting that $(1,y)$ is $\delta$-close, and hence that it is $\delta$-reachable, proving property (a) for $k>1$. 
        For $k=1$, for $y\leq y_{k-1}=y_0$, because $(x_{k-1},y)$ lies on the path that $\delta$-reaches $(x_0,y_0)$, we have that $(x_{k-1},y)$ is $3\delta$-close.
        Now the rest of the proof of (a) for $k>1$ also applies to~$k=1$.
        
        We now prove properties (b) and (c).
        These are trivial for $k=1$, so let $k>1$.
        Because $Q[0,y_k]$ is a $\delta$-prefix with respect to $P[1,x_k]$, there exists a monotone path $\pi$ in $\delta$-free space from $(0,0)$ to $(1,y_k)$.
        For any $y\in[y_{k-1},y_k]$, because $\pi$ does not pass through $\{x_{k-1}\}\times(y_{k-1},1]$, there exists a point $(x,y)$ on $\pi$ with $x\in[x_{k-1},x_k]$.
        
        By Observation~\ref{obs:monotone_in_cube}, if $\normInf{P(1)-Q(y)}\leq\delta$, then the horizontal segment connecting $(x,y)$ to $(x_k,y)$ lies in $\delta$-free space, which together with the part of $\pi$ until $(x,y)$ shows that $(x_k,y)$ is $\delta$-reachable, proving property (b).
        
        For property (c), let $h^-$ be a horizontal segment in $\delta$-free space that ends at $(x_k,y_k^-)$ and starts at a point $(x^-,y_k^-)$ on $\pi$.
        For any $y\in[y_k^-,y_k]$, there exists a point $(x,y)$ on $\pi$ with $x\in[x^-,x_k]$.
        Now $(x,y)$, $(x,y_k^-)$, $(x_k,y_k^-)$ all lie in $\delta$-free space, so by triangle inequality $(x_k,y)$ lies in $3\delta$-free space.
        Hence, by Observation~\ref{obs:monotone_in_cube}, the segment between $(x,y)$ and $(x_k,y)$ lies in $3\delta$-free space, so it is $3\delta$-reachable using the part of $\pi$ up to $(x,y)$ and the horizontal segment, proving property (c).
    \end{proof}

    For the following lemma, we define $y_{k-2}=0$ for the special case that $k=1$.
    \begin{lemma}\label{lem:qpm_exit}
        Let $P$ be quasi-piecewise $(>4\delta)$-monotone.
        Assume that $y_{k-1}$ is defined.
        Let $y_{k-1}^-\in[y_{k-2},y_{k-1}]$ be the minimum value such that $y_{k-1}^-\geq y_{k-2}$ and $\normInf{P(x_{k-1})-Q(y_{k-1}^-)}\leq\delta$.
        Let $$y^+=\begin{cases}
            y_k&\text{if $y_k$ is defined and $y_k>y_{k-1}$,}\\
            y_{k-1}&\text{otherwise.}
        \end{cases}$$
        Then $\{x_k\}\times[y_{k-1}^-,y^+]$ is a $(3,\delta)$-exit set.
    \end{lemma}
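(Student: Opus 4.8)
The plan is to reduce Lemma~\ref{lem:qpm_exit} to Lemma~\ref{lem:pm_exit} by splitting off the final, possibly-non-$(>4\delta)$-monotone piece $P[x_{k-1},x_k]$ of $P$. Write $P' = P[0,x_{k-1}]$, which is genuinely piecewise $(>4\delta)$-monotone with $k-1$ pieces, so Lemma~\ref{lem:pm_exit} applies to $P'$ and its longest-$\delta$-prefix values. Since $y_{k-1}$ is defined, Lemma~\ref{lem:pm_exit} applied to $P'$ tells us that $\{x_{k-1}\} \times [y_{k-1}^-, y_{k-1}]$ is a $(1,\delta)$-exit set at column $x_{k-1}$, where $y_{k-1}^-$ is exactly the quantity named in the statement; in particular all $\delta$-reachable points in column $x_{k-1}$ lie in $[y_{k-1}^-, y_{k-1}]$, and every $\delta$-close point there is $\delta$-reachable (and every point of $[y_{k-1}^-,y_{k-1}]$ is $3\delta$-reachable). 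For $k=1$ this collapses to the base-case bookkeeping using the convention $y_{k-2}=0$, and $P'$ is the single point $P(0)$.

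Next I would propagate reachability across the last monotone piece. First I would show the containment: any $\delta$-reachable point $(x_k,y)$ has $y \ge y_{k-1}^-$ — this is immediate from monotonicity of matchings together with the column-$x_{k-1}$ bound just established. For the upper bound $y \le y^+$, I split on whether $y_k$ is defined and exceeds $y_{k-1}$. If it is not, I would argue as in Lemma~\ref{lem:pm_exit}(a): a $\delta$-reachable point $(x_k,y)$ with $y > y_{k-1}$ would force $(x_{k-1},y)$ to be $3\delta$-reachable (hence $3\delta$-close) while $d_F(P[0,x_k],Q[0,y]) \le \delta$ would contradict maximality of $y_k$ (or its being undefined); note here we only get $3\delta$-closeness at column $x_{k-1}$, but that is what we need since the last piece is only monotone, not $(>4\delta)$-monotone — wait, that argument used $(>4\delta)$-monotonicity. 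Instead, in the "otherwise" case the correct statement is that $y_k$ being undefined-or-$\le y_{k-1}$ means $d_F(P[0,x_k],Q[0,y']) > \delta$ for all $y' > y_{k-1}$, so no point of column $x_k$ strictly above $y_{k-1}$ can be $\delta$-reachable, giving $y \le y_{k-1} = y^+$. If instead $y_k > y_{k-1}$, then $y^+ = y_k$ and I must show no $\delta$-reachable point lies above $y_k$; this again follows from the definition of $y_k$ as the longest-$\delta$-prefix endpoint for $P[0,x_k]$, since a $\delta$-reachable $(x_k,y)$ means $d_F(P[0,x_k],Q[0,y]) \le \delta$.

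Finally I would establish the $3\delta$-reachability of the whole segment $\{x_k\} \times [y_{k-1}^-, y^+]$, which is the heart of why this is a $(3,\delta)$-exit set. Pick $y \in [y_{k-1}^-, y^+]$ that is $\delta$-close. Using a $3\delta$-reachable path to $(x_{k-1}, y_{k-1}^-)$ (guaranteed by Lemma~\ref{lem:pm_exit}(c) applied to $P'$, or the base-case path for $k=1$), I then traverse the last monotone piece: because $P[x_{k-1},x_k]$ is monotone, Observation~\ref{obs:monotone_in_cube} says every horizontal line meets $\F_{\le 3\delta}(P[x_{k-1},x_k],Q)$ in a convex set, and Observation~\ref{obs:monotone_moving_closer} lets me control how the distance to $P(x_k)$ evolves as we move along the piece; combining a short horizontal move inside the free space with a monotone walk along the piece (exactly the construction in the proof of Lemma~\ref{lem:pm_exit}(b–c), now at scale $3\delta$ and using that $y \le y^+ \le y_k$ when $y_k$ is defined, or that the $\delta$-prefix path $\pi$ for $P[0,x_{k-1}]$ reaches $(x_{k-1},y_{k-1})$ otherwise) shows $(x_k,y)$ is $3\delta$-reachable. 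The main obstacle I anticipate is the bookkeeping in the "otherwise" branch and the $k=1$ special case: when $y_k$ is undefined one has no $\delta$-prefix path through the last piece to lean on, so the $3\delta$-reachability of $[y_{k-1}^-, y_{k-1}]$ must be extracted purely from the inductive $3\delta$-path to column $x_{k-1}$ plus monotonicity of the last piece, and one must be careful that $y_{k-1}$ itself is $\delta$-reachable (not merely $3\delta$-reachable) at column $x_{k-1}$ so that the endpoint of the exit interval is legitimately included. Once these edge cases are handled, properties (a) contains all $\delta$-reachable points, and (b) all $\delta$-close points of the interval are $3\delta$-reachable, together give that $\{x_k\}\times[y_{k-1}^-,y^+]$ is a $(3,\delta)$-exit set.
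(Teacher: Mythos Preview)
Your plan is correct and follows essentially the same approach as the paper: reduce to Lemma~\ref{lem:pm_exit} on $P'=P[0,x_{k-1}]$ to control column $x_{k-1}$, then cross the final monotone piece via Observation~\ref{obs:monotone_in_cube}. The paper's execution is cleaner---it leaves property~(1) implicit (it is immediate from the definition of $y_k$ and the exit set at $x_{k-1}$) and for property~(2) splits crisply at $y_{k-1}$: for $y\le y_{k-1}$ it uses $3\delta$-reachability of $(x_{k-1},y)$ from Lemma~\ref{lem:pm_exit}(c) followed by a single horizontal segment, and for $y>y_{k-1}$ (forcing $y^+=y_k$) it uses the $\delta$-path $\pi$ to $(x_k,y_k)$, which already gives $\delta$-reachability rather than merely $3\delta$.
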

    \begin{proof}
        Let $y\in[y_{k-1}^-,y^+]$ and let $(1,y)$ be $\delta$-close.
        It suffices to show that $(1,y)$ is $3\delta$-reachable.
        First consider the case that $y\leq y_{k-1}$, then by Lemma~\ref{lem:pm_exit}, the point $(x_{k-1},y)$ is $3\delta$-reachable (or in the special case that $k=1$, it is even $\delta$-reachable), so if $(x_k,y)$ is $\delta$-close, by Observation~\ref{obs:monotone_in_cube} we can $3\delta$-reach it via the horizontal segment between $(x_{k-1},y)$ and $(x_k,y)$.
        
        Now consider the case that $y>y_{k-1}$, then because $y^+\geq y$, so $y+\neq y_{k-1}$ and hence $y^+=y_k$.
        Let $\pi$ be a monotone path in $\delta$-freespace from $(0,0)$ to $(1,y_k)$.
        Because $\pi$ does not pass through $\{x_{k-1}\}\times(y_{k-1},1]$, $\pi$ contains a point $(x,y)$ with $x\in[x_{k-1},x_k]$.
        So by Observation~\ref{obs:monotone_in_cube}, $(1,y)$ is $\delta$-reachable via $\pi$ and the horizontal segment between $(x,y)$ and $(1,y)$.
    \end{proof}

    \begin{figure}
        \centering
        \includegraphics{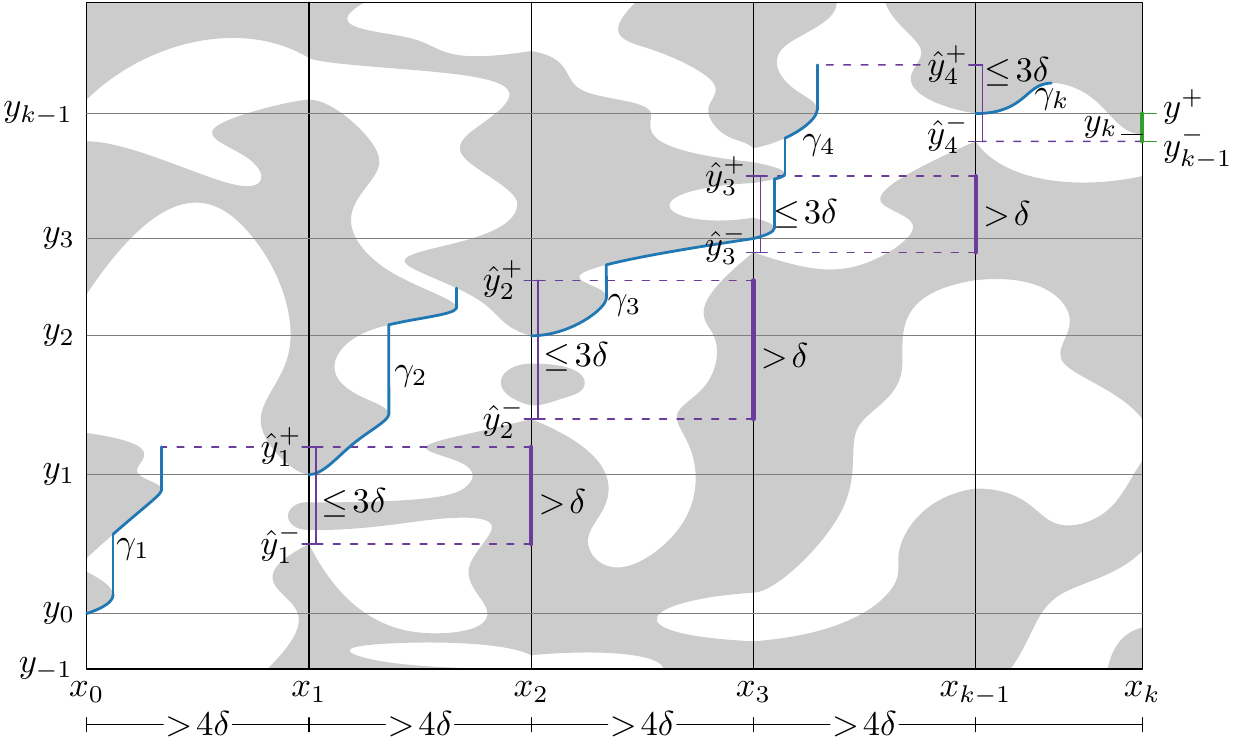}
        \caption{An illustration of the algorithm for a quasi-piecewise monotone curve $P$.}
        \label{fig:greedy_algorithm}
    \end{figure}
    We show how to compute the exit sets of Observation~\ref{obs:no_exit} and Lemmas~\ref{lem:pm_exit} and~\ref{lem:qpm_exit} in linear time.
    The main idea of Lemma~\ref{lem:pm_exit} is to greedily walk through the free space to incrementally compute exit sets of $Q$ with respect to $P[0,x_i]$ for increasing $i$, until either the exit set is empty, or $i=k$.
    The algorithm for quasi-piecewise $(>4\delta)$-monotone curves differs from that of piecewise $(>4\delta)$-monotone curves only in the last step.
    Figure~\ref{fig:greedy_algorithm} illustrates the general algorithm for a quasi-piecewise $(>4\delta)$-monotone curve, where the part of the figure left of $x_{k-1}$ coincides with the algorithm for a piecewise $(>4\delta)$-monotone curve.
    To bound the running time, we linearly bound the running time for iteration $i$ by the involved parts of $P$ and $Q$, and show that any part of $P$ and $Q$ is involved in at most a constant number of iterations.
    
    \begin{lemma}\label{lem:pm_compute_exit}
        Let $P$ be piecewise $(>4\delta)$-monotone.
        We can in $O(m+n)$ time compute the $(1,\delta)$-exit set $\{x_k\}\times[y_k^-,y_k]$ defined in the statement of Lemma~\ref{lem:pm_exit}, or return the empty set if it does not exist.
    \end{lemma}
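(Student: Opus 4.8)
The plan is to compute the values $y_1,\dots,y_k$ one piece of $P$ at a time and then read off $y_k^-$ with a single extra scan of $Q$. I maintain the invariant that after processing the $i$-th piece we know $y_i$ (the parameter for which $Q[0,y_i]$ is the longest $\delta$-prefix of $Q$ with respect to $P[0,x_i]$), or we have detected that $y_i$ is undefined --- in which case $y_k$ is undefined as well, since a $\delta$-matching witnessing $d_F(P[0,x_k],Q[0,y])\le\delta$ restricts to one witnessing $d_F(P[0,x_i],Q[0,y'])\le\delta$ for some $y'\le y$ --- and we stop and return the empty set, which is a $(1,\delta)$-exit set by Observation~\ref{obs:no_exit}. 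The base case is the degenerate first piece $P[0,x_0]=P(0)$: here $d_F(P(0),Q[0,y])\le\delta$ exactly when $Q[0,y]$ stays in the $L_\infty$-ball of radius $\delta$ around $P(0)$, so $y_0$ is found by scanning $Q$ from its start until it first leaves that ball (and $y_0$ is undefined when $\normInf{P(0)-Q(0)}>\delta$).

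For the step from $y_{i-1}$ to $y_i$ I run a greedy traversal of the free space of the monotone piece $P_i=P[x_{i-1},x_i]$ against $Q$, started from the single point $(x_{i-1},y_{i-1})$, and take $y_i$ to be the largest $y$ with $(x_i,y)$ reachable from $(x_{i-1},y_{i-1})$ by a bimonotone path in $\delta$-free space. Starting from this one point loses nothing: any bimonotone path from $(0,0)$ reaching $(x_i,y)$ enters column $x_{i-1}$ at a $\delta$-reachable point $(x_{i-1},y')$, so $y'\le y_{i-1}$ by definition of $y_{i-1}$; since also $y\ge y_{i-1}$ by Lemma~\ref{lem:pm_exit}(a) applied to the first $i$ pieces, the portion of the path inside the strip $[x_{i-1},x_i]\times[0,1]$ crosses the line of height $y_{i-1}$ at some point $(x'',y_{i-1})$, and since both $(x_{i-1},y_{i-1})$ and $(x'',y_{i-1})$ lie in the $\delta$-free space of $P_i$ and $Q$, which by Observation~\ref{obs:monotone_in_cube} is convex along that row, the horizontal segment between them is free; hence $(x_i,y)$ is reachable from $(x_{i-1},y_{i-1})$ as well. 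Because every row of the strip's free space is a single interval, the reachable set can be propagated column by column in a single interleaved pass over the edges of $P_i$ and the relevant edges of $Q$, as in the longest-$\delta$-prefix computation of Gudmundsson~\etal~\cite{gudmundsson19long}; $y_i$ is the top of the reachable interval in column $x_i$, and once that interval becomes empty it stays empty (a bimonotone path to a later column of the strip would pass through the current one), so we stop early and declare $y_i$ (hence $y_k$) undefined.

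The $Q$-coordinate is non-decreasing along a bimonotone path, so the traversal for piece $i$ reads $Q$ only within $Q[y_{i-1},y_i]$, apart from the $O(1)$ edges needed to certify that the reachable interval cannot rise above $y_i$. As $y_0\le y_1\le\dots\le y_k$, these ranges have pairwise disjoint interiors inside $Q$, and each edge of $P$ lies in exactly one piece, so the work over all $k$ iterations sums to $O(m+n)$. Finally, keeping also the value $y_{k-1}$, we obtain $y_k^-$ --- the least $y\ge y_{k-1}$ with $\normInf{P(1)-Q(y)}\le\delta$ --- by one more scan of $Q$ upward from $y_{k-1}$: such a $y$ exists and is at most $y_k$ because $(x_k,y_k)$ is $\delta$-reachable, hence $\delta$-close, and $y_k\ge y_{k-1}$. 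We return $\{x_k\}\times[y_k^-,y_k]$, which by Lemma~\ref{lem:pm_exit} is the required $(1,\delta)$-exit set.

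The step I expect to be the main obstacle is the $O(m+n)$ bound rather than correctness: one has to argue that the monotone-strip traversal for piece $i$ never has to revisit $Q$ below $y_{i-1}$ (which is exactly why starting from the topmost reachable point of the previous column, instead of from its whole exit interval $[y_{i-1}^-,y_{i-1}]$, is enough) and never has to look more than a constant number of $Q$-edges above $y_i$; together with the monotonicity of $y_0\le\dots\le y_k$, this collapses the naive $O(k(m+n))$ estimate to $O(m+n)$.
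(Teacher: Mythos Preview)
Your correctness argument is sound --- in particular, the use of Lemma~\ref{lem:pm_exit}(a) together with row-convexity (Observation~\ref{obs:monotone_in_cube}) to justify restarting each piece from the single point $(x_{i-1},y_{i-1})$ is clean and matches the paper's reasoning. The gap is exactly where you suspected: the claim that the traversal for piece $i$ ``never has to look more than a constant number of $Q$-edges above $y_i$'' is false, and without a replacement for it your $O(m+n)$ bound does not follow.

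Here is a concrete obstruction already in one dimension. Take $P_i(x)=10x$, $\delta=1$, and let $Q$ rise linearly from $0$ to $9$ on $[0,0.09]$, then drop to $8$ and oscillate inside $[7.5,8.5]$ with $\Theta(n)$ vertices on $[0.1,0.9]$. The only $\delta$-close point in column $x_i=1$ is at $y_i=0.09$, and it is reachable. But the up-first greedy walk, when it arrives at $(0.8,0.09)$, finds $(0.8,0.09+\varepsilon)$ still free (because $Q(0.09+\varepsilon)\in[7,9]$) and so keeps climbing instead of turning right; it then traverses the entire oscillating stretch of $Q$ at column $0.8$ before getting stuck near $y\approx 0.9$. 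Thus piece $i$ reads $\Theta(n)$ edges of $Q$ above $y_i$, not $O(1)$, and if piece $i{+}1$ must restart from $(x_i,y_i)=(1,0.09)$ it will re-read them. The same problem arises if you phrase the traversal as ``propagating the reachable interval column by column'': the top of that interval at an interior column can sit far above $y_i$.

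The paper closes this gap with an extra idea you are missing. Before launching the walk it computes $\hat y_i^+$, the top of the maximal interval of \emph{$3\delta$-close} points in column $x_i$ starting at $\hat y_i^-$, and caps the walk at height $\hat y_i^+$. The $(>4\delta)$-monotonicity of the pieces then gives $y_{i+1}>\hat y_i^+$: since $\normInf{P(x_i)-P(x_{i+1})}>4\delta$ and every point of $\{x_i\}\times[\hat y_i^-,\hat y_i^+]$ is $3\delta$-close, no point of $\{x_{i+1}\}\times[\hat y_i^-,\hat y_i^+]$ is $\delta$-close. Hence the $Q$-ranges explored for pieces $i$ and $i{+}2$ are disjoint, and each $Q$-vertex is charged at most twice. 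Your argument never invokes the $(>4\delta)$ separation between consecutive breakpoints, and that is precisely the ingredient needed to bound the overshoot.
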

    \begin{proof}
        Given $[y_{i-1}^-,y_{i-1}]$ as defined by Lemma~\ref{lem:pm_exit} (where $y_0^-=0$), let $\hat y_i^-$ be the minimum value in $[y_{i-1},1]$ such that $\normInf{P(x_i)-Q(\hat y_i^-)}\leq\delta$.
        If no such $\hat y_i^-$ exists, we return the empty set as exit set.
        
        The notation in this proof is illustrated in Figure~\ref{fig:greedy_algorithm}.
        Let $\hat y_i^+\geq\hat y_i^-$ be the maximum value such that $(x_i,y)$ is $3\delta$-close for all $y\in[\hat y_i^-,\hat y_i^+]$.
        By Lemma~\ref{lem:pm_exit}, the vertical segment from $(x_i,\hat y_i^-)$ to $(x_i,\hat y_i^+)$ contains all the $\delta$-reachable points at $x_i$.
        To figure out if any of them are actually reachable, we greedily construct a monotone walk $\gamma_i$ starting at the point $(x_{i-1},y_{i-1})$.
        Let $\gamma_i$ greedily walk upwards through the $\delta$-freespace, advancing rightwards only when needed, and stopping only when it reaches some $(x,y_i^+)$, some $(x_i,y)$, or when we can no longer extend it.
        Suppose that this walk ends at $(x^*,y^*)$.
        Then for all $y\leq y^*$, if $(x_i,y)$ is $\delta$-close, then $(x_i,y)$ is $\delta$-reachable using Observation~\ref{obs:monotone_in_cube}.
        We now show that if $y>y^*$, then $(x_i,y)$ is not $\delta$-reachable.
        Indeed, if it were $\delta$-reachable by a path $\pi$, then $\pi$ contains a point $(x,y^*)$ with $x\in[x^*,x_i]$.
        If $x>x^*$, then Observation~\ref{obs:monotone_in_cube} contradicts that $\gamma_i$ stopped at $(x^*,y^*)$ because it can be extended to the right.
        Similarly if $x=x^*$, then $\gamma_i$ can be extended by following~$\pi$.
        Hence, if $\hat y_i^-\leq y^*$, then $\hat y_i^-=y_i^-$ and $y^*=y_i$, as defined with respect to $P[0,x_i]$ by Lemma~\ref{lem:pm_exit}.
        On the other hand, if $\hat y_i^->y^*$, then we correctly return the empty exit set.
        Iterating this procedure for all $i$ (or until we return the empty exit set for some $i$), we compute the required exit set.
        
        To help analyze the running time, we show that if $y_{i+1}$ is defined, then $y_{i+1}>\hat y_i^+$.
        Indeed, by definition, $\{x_i\}\times[\hat y_i^-,\hat y_i^+]$ lies in $3\delta$-freespace, and because by $(>\delta)$-monotonicity, $P(x_i)$ and $P(x_{i+1})$ differ by $4\delta$, $\{x_{i+1}\}\times[\hat y_i^-,\hat y_i^+]$ does not intersect $\delta$-freespace, so $y_{i+1}>\hat y_i^*$.
        The computation of $\hat y_i^-$ and $\hat y_i^+$ takes time linear in the number of vertices of $Q[y_{i-1},\hat y_i^+]$.
        The computation of $\gamma_i$ takes time linear in the number of vertices of $P$ and $Q$ it passes through.
        $\gamma_i$ passes through vertices of $P[x_{i-1},x_i]$ only, so the contribution of vertices of $P$ over all paths is linear.
        Similarly, because $y_{i+1}>\hat y_i^+$, the computations for $i$ and $i'$ charge different vertices of $Q$ if $|i-i'|\geq 2$, so every vertex is charged at most a constant number of times, bounding the running time by $O(n+m)$.
    \end{proof}
    
    If the last piece of $P$ is not $(>4\delta)$-monotone, then it is possible that $y_k<y_{k-1}$ (as in Figure~\ref{fig:greedy_algorithm}), so the last step of the algorithm for quasi-piecewise $(>4\delta)$-monotone curves is slightly more involved.
    
    \begin{lemma}\label{lem:qpm_compute_exit}
        Let $P$ be quasi-piecewise $(>4\delta)$-monotone.
        We can in $O(m+n)$ time compute the $(3,\delta)$-exit set $\{x_k\}\times[y_{k-1}^-,y^+]$ defined in the statement of Lemma~\ref{lem:qpm_exit}, or return the empty set if it does not exist.
    \end{lemma}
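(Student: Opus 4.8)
The plan is to invoke Lemma~\ref{lem:pm_compute_exit} on all but the last piece of $P$, and then extend the resulting interval across the last, merely monotone, piece with one additional greedy walk. Let $P' = P[x_0, x_{k-1}]$, the concatenation of the $k-1$ pieces $P[x_0,x_1],\dots,P[x_{k-2},x_{k-1}]$, which for $k\geq 2$ is piecewise $(>4\delta)$-monotone. First I would run the algorithm of Lemma~\ref{lem:pm_compute_exit} on $P'$ and $Q$. Matching up the relevant quantities (the endpoint of $P'$ is $P(x_{k-1})$, and its ``last piece'' values are $y_{k-2}$, $y_{k-1}^-$, $y_{k-1}$), its output is exactly the interval $[y_{k-1}^-, y_{k-1}]$, and it returns the empty set precisely when $y_{k-1}$ is undefined, in which case we return the empty set as well. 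The degenerate case $k=1$, where $P'$ is the single point $P(0)$, is handled directly: scan $Q$ from the start for the longest prefix $Q[0,y_0]$ contained in the $\delta$-ball of $P(0)$, set $y_0^- = 0$, and return the empty set if $Q(0)$ is not $\delta$-close to $P(0)$. Either way this step runs in $O(m+n)$ time and, when it does not return the empty set, leaves us with $y_{k-1}^-$, $y_{k-1}$, and the point $(x_{k-1}, y_{k-1})$; the latter is $\delta$-reachable, since the walk $\gamma_{k-1}$ of Lemma~\ref{lem:pm_compute_exit} (or, for $k=1$, the vertical prefix $\{0\}\times[0,y_0]$), followed by the horizontal segment ending at $(x_{k-1}, y_{k-1})$ — which lies in $\delta$-free space by Observation~\ref{obs:monotone_in_cube} — is a bimonotone $\delta$-path to it.

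It then remains to compute $y^+$, that is, to decide whether $y_k$ is defined with $y_k > y_{k-1}$ and, if so, to determine $y_k$. The structural fact I would establish first is that any monotone $\delta$-path from $(0,0)$ to a point $(x_k, y)$ with $y > y_{k-1}$ can be rerouted to pass through $(x_{k-1}, y_{k-1})$: such a path crosses column $x_{k-1}$ at a $\delta$-reachable point, hence at a height $y' \leq y_{k-1}$ (since the interval $[y_{k-1}^-, y_{k-1}]$ contains all $\delta$-reachable points of that column), so it passes through some $(x, y_{k-1})$ with $x \in [x_{k-1}, x_k]$, and by Observation~\ref{obs:monotone_in_cube} applied to the monotone last piece the horizontal segment from $(x_{k-1}, y_{k-1})$ to $(x, y_{k-1})$ lies in $\delta$-free space; concatenating with the tail of the path gives a $\delta$-path from $(x_{k-1}, y_{k-1})$ to $(x_k, y)$. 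Hence $y_k > y_{k-1}$ if and only if some point $(x_k, y)$ with $y > y_{k-1}$ is $\delta$-reachable from $(x_{k-1}, y_{k-1})$ within the strip of the last piece, and in that case $y_k$ is the largest such $y$.

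To carry this out, I would run the same greedy monotone walk as in the proof of Lemma~\ref{lem:pm_compute_exit}, now confined to the single strip $[x_{k-1},x_k]\times[0,1]$ of the last piece: starting at $(x_{k-1}, y_{k-1})$, move up while inside $\delta$-free space, move right only when stuck, and continue moving up inside column $x_k$ once it is reached. Let $(x^*, y^*)$ be where the walk halts. If $x^* = x_k$ and $y^* > y_{k-1}$, then (by the argument of Lemma~\ref{lem:pm_compute_exit}, using Observation~\ref{obs:monotone_in_cube} to shortcut horizontally) $y^* = y_k$ and we set $y^+ = y^*$; otherwise no $\delta$-path from $(x_{k-1}, y_{k-1})$ reaches column $x_k$ above $y_{k-1}$, so $y_k$ is undefined or at most $y_{k-1}$ and we set $y^+ = y_{k-1}$. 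We then output $\{x_k\} \times [y_{k-1}^-, y^+]$. Since this walk visits only vertices of the last piece of $P$ and of $Q$, and is performed a single time, the total running time together with the call to Lemma~\ref{lem:pm_compute_exit} is $O(m+n)$.

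The step I expect to be the main obstacle is verifying the correctness of the greedy walk on the last piece. For the $(>4\delta)$-monotone pieces handled by Lemma~\ref{lem:pm_compute_exit}, $(>4\delta)$-monotonicity ensures that the reachable part of one column lies strictly below that of the next, which is what makes starting the walk at the top of the previous column safe; the last piece is only monotone, so $y_k < y_{k-1}$ is possible, and a priori starting the walk at the topmost reachable point $(x_{k-1}, y_{k-1})$ could skip a higher exit — this is exactly what the rerouting fact above rules out. Moreover, when the walk climbs inside column $x_k$, whose vertical $\delta$-free space may be disconnected, one must show that any gap that stops the walk also stops every monotone $\delta$-path from $(x_{k-1}, y_{k-1})$; this follows from the same greedy-extension argument as in Lemma~\ref{lem:pm_compute_exit} (a higher reachable point would force the walk to extend either upward or rightward) combined with Observation~\ref{obs:monotone_in_cube}, but the careful handling of open and closed endpoints, as elsewhere in the paper, is where the care lies.
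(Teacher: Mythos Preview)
Your proposal is correct and follows essentially the same approach as the paper: apply Lemma~\ref{lem:pm_compute_exit} to the first $k-1$ pieces to obtain $[y_{k-1}^-,y_{k-1}]$, then run one more greedy walk $\gamma_k$ from $(x_{k-1},y_{k-1})$ across the last piece to decide whether $y_k>y_{k-1}$ and, if so, to compute it. The paper's proof is considerably terser---it simply asserts that ``the algorithm for computing $\gamma_k$ from Lemma~\ref{lem:pm_compute_exit} will find it''---whereas you spell out the rerouting argument (any $\delta$-path to $(x_k,y)$ with $y>y_{k-1}$ can be made to pass through $(x_{k-1},y_{k-1})$) that justifies why starting the final walk at the top of the previous exit set is safe even without $(>4\delta)$-monotonicity of the last piece.
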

    \begin{proof}
        Using Lemma~\ref{lem:pm_compute_exit}, compute the $(1,\delta)$-exit set $\{x_{k-1}\}\times[y_{k-1}^-,y_{k-1}]$ with respect to the piecewise $(>4\delta)$-monotone curve $P[0,x_{k-1}]$ and $Q$ (or return the empty set if it does not exist).
        It remains to compute $y^+$, for which we recall the definition $$y^+=\begin{cases}
            y_k&\text{if $y_k$ is defined and $y_k>y_{k-1}$,}\\
            y_{k-1}&\text{otherwise.}
        \end{cases}$$
        If $y_k$ is defined and $y_k\geq y_{k-1}$, then the algorithm for computing $\gamma_k$ from Lemma~\ref{lem:pm_compute_exit} will find it, in which case $y^+=y_k$.
        If it is not defined or $y_k<y_{k-1}$, then the algorithm for computing $\gamma_k$ will return the empty set, in which case $y^+=y_{k-1}$.
        The computations of $y_{k-1}$ and $\gamma_k$ both take $O(m+n)$ time.
    \end{proof}
    
    We summarize Lemma~\ref{lem:qpm_compute_exit} in Theorem~\ref{thm:monotone_decision_algorithm}.
    \begin{theorem}
    \label{thm:monotone_decision_algorithm}
        Let $\delta > 0$ and $P, Q \from [0, 1] \to \R^d$ be two curves, where $P$ is quasi-piecewise $(>4\delta)$-monotone.
        Let $m$ and $n$ be the number of vertices of $P$ and $Q$, respectively.
        We can compute a $(3, \delta)$-exit set with respect to $P$ and $Q$ in $O(m + n)$ time.
    \end{theorem}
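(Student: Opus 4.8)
The plan is to obtain the theorem by assembling the structural and algorithmic lemmas already proved for quasi-piecewise $(>4\delta)$-monotone curves, so almost no new work is needed. Write $P$ as the composition of the monotone pieces $P[x_0,x_1],\dots,P[x_{k-1},x_k]$ furnished by its quasi-piecewise $(>4\delta)$-monotone structure, with $P[x_0,x_1],\dots,P[x_{k-2},x_{k-1}]$ all $(>4\delta)$-monotone and $P[x_{k-1},x_k]$ merely monotone. Recall the prefix values $y_i$: whenever $d_F(P[0,x_i],Q[0,y])\leq\delta$ for some $y$, the subcurve $Q[1,y_i]$ is the longest $\delta$-prefix of $Q$ with respect to $P[0,x_i]$.

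First I would split off the case where $y_{k-1}$ is undefined. Then $y_k$ is undefined as well, because restricting any $\delta$-matching between $P[0,x_k]$ and a prefix of $Q$ to $P[0,x_{k-1}]$ would witness that $y_{k-1}$ is defined; so Observation~\ref{obs:no_exit} applies and the empty set is a $(1,\delta)$-exit set, hence also a $(3,\delta)$-exit set. In every remaining case $y_{k-1}$ is defined, and I would invoke Lemma~\ref{lem:qpm_exit} verbatim: with $y_{k-1}^-$ the minimum value in $[y_{k-2},y_{k-1}]$ for which $\normInf{P(x_{k-1})-Q(y_{k-1}^-)}\leq\delta$ (using the convention $y_{k-2}=0$ when $k=1$), and with
\[
    y^+=\begin{cases} y_k & \text{if $y_k$ is defined and $y_k>y_{k-1}$,}\\ y_{k-1} & \text{otherwise,}\end{cases}
\]
the vertical segment $\{x_k\}\times[y_{k-1}^-,y^+]$ is a $(3,\delta)$-exit set with respect to $P$ and $Q$. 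This object is a single vertical segment (or the empty set), so it has constant complexity, as required.

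For the running time I would appeal directly to Lemma~\ref{lem:qpm_compute_exit}, which computes exactly this segment, or correctly returns the empty set when it does not exist, in $O(m+n)$ time. Unpacking its proof, the algorithm first runs the greedy free-space walk of Lemma~\ref{lem:pm_compute_exit} on the piecewise $(>4\delta)$-monotone prefix $P[0,x_{k-1}]$ --- processing the pieces left to right, maintaining the interval $[y_{i-1}^-,y_{i-1}]$ and charging each piece's cost to the vertices of $P$ and $Q$ it touches, where the spacing bound $y_{i+1}>\hat y_i^+$ keeps the charging $O(1)$ per vertex --- and then performs one further greedy walk across the last monotone piece to decide whether $y_k>y_{k-1}$ and thereby fix $y^+$.

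There is little genuine obstacle here; the one thing to be careful about is consistent handling of the degenerate configurations --- $k=1$ (absorbed by the convention $y_{k-2}=0$), a last piece that is not $(>4\delta)$-monotone and so may have $y_k<y_{k-1}$, the case $y_{k-1}$ defined but $y_k$ undefined (for which $y^+=y_{k-1}$), and the two separate situations in which the empty set must be returned --- making sure in each case that the reported object is a bona fide $(3,\delta)$-exit set and not merely an interval that happens to contain the $\delta$-reachable points. Checking these against Lemmas~\ref{lem:qpm_exit} and~\ref{lem:qpm_compute_exit} completes the proof.
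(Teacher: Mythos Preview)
Your proposal is correct and matches the paper's approach exactly: the paper itself simply states that Theorem~\ref{thm:monotone_decision_algorithm} summarizes Lemma~\ref{lem:qpm_compute_exit}, and you have faithfully unpacked that summary by combining Observation~\ref{obs:no_exit}, Lemma~\ref{lem:qpm_exit}, and Lemma~\ref{lem:qpm_compute_exit}. The only addition is your explicit treatment of the degenerate cases, which is sound and, if anything, more careful than the paper's one-line justification.
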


\section{Constructing piecewise monotone curves with small error}
\label{sec:computing_exit_sets}

Let $P, Q \from [0, 1] \to \R^d$ be two polygonal curves with $m$ and $n$ vertices, respectively.
In this section we show how to approximate $P$ and $Q$ with two curves $P^*$ and $Q^*$, where $P^*$ is quasi-piecewise $(>4\delta)$-monotone.
We can then apply Theorem~\ref{thm:monotone_decision_algorithm} on these curves to compute an exit set with respect to $P$ and $Q$.

The algorithm assumes that we are given a set of $(\ell, \delta)$-signatures $\Sigma^\ell(P)$ of $P$, for $\ell = 1, \dots, d$.
The approximation quality of the algorithm depends on the maximum size of the $\delta$-footprints of these signatures.
Recall from Section~\ref{sec:algorithm} that the $\delta$-footprint of an $(\ell, \delta)$-signature $\Sigma^\ell(P)$ is the region $R(\Sigma^\ell(P)) = \bigcup_i [\sigma_i^\ell - \delta, \sigma_i^\ell + \delta]$.
Throughout this section, we let $L$ be the size of the largest $\delta$-footprint of any of the given signatures.
Note that any signature contains at least one point, and hence we have $L \geq 2\delta$.

We construct $P^*$ and $Q^*$ by considering each dimension $\ell$ separately, and applying point-wise transformations to the $x^\ell$-coordinates of the points.
These transformations induce transformations on the original curves $P$ and $Q$ that are independent of each other.
We approximate $P^\ell$ and $Q^\ell$ with two curves ${P^*}^\ell$ and ${Q^*}^\ell$, which form the projections of $P^*$ and $Q^*$, such that ${P^*}^\ell$ has edges of length greater than $4\delta$ only.
If we have this property for all $\ell$, then $P^*$ is quasi-piecewise $(>4\delta)$-monotone.
Of course, we need that the distance between $P^*$ and $Q^*$ is an approximation for the distance between $P$ and $Q$.
Specifically, if $d_F(P, Q) \leq \delta$, then we require $d_F(P^*, Q^*) \leq \delta$, and if $d_F(P, Q) > cL\delta$ for some $c$, we require $d_F(P^*, Q^*) > \delta$.
Under the $L_\infty$ norm, this is achieved if it holds for all projections of $P^*$ and $Q^*$.

In the remainder of this section, we construct ${P^*}^\ell$ and ${Q^*}^\ell$ for a given dimension $\ell$.
For ease of notation, we write $P = P^\ell$ and $Q = Q^\ell$, and also write $P^* = {P^*}^\ell$ and $Q^* = {Q^*}^\ell$.
The $(\ell, \delta)$-signature $\Sigma^\ell(P)$ induces a $\delta$-signature $\Sigma$ for the curve $P^\ell$.
It follows from the definition of footprints that the $\delta$-footprint of $\Sigma$ is equal to that of $\Sigma^\ell(P)$.
Hence, the $\delta$-footprint of $\Sigma$ has size at most $L$.

\subsection{Splitting up the curves}

We transform $P$ and $Q$ by first partitioning $P$ and $Q$ into subcurves with favorable properties.
In particular, these subcurves are very restricted in how they can be matched to each other, allowing us to apply the transformations that turn the subcurves into line segments.
Since the subcurves have diameter greater than $4\delta$, these line segments will have length greater than $4\delta$ as well.
To partition $P$, after which we partition $Q$, we define a \emph{bridge}.

\begin{definition}[bridge]
    Given a continuous curve $P \from [0, 1] \to \R$ and a subcurve $P'$ ending at point $p$, an interval $I = \overline{p p'}$ is called a \emph{bridge} from $P'$ if $I \cap \Im(P') = \{p\}$.
    The \emph{size} of a bridge $I$ is $\norm{I}$.
\end{definition}

We iteratively split $P$ by identifying bridges in the signature curve $\Sigma$.
Let $\sigma_1, \dots, \sigma_{k'}$ be the vertices of $\Sigma$ and let $\sigma_j = P(s_j)$.
Iterate over the signature vertices $\sigma_j$.
Given $x_i \in [0, 1]$, where initially $x_1 = 0$, do the following.
For a signature vertex $\sigma_j$, let $x \in [x_i, s_j]$ be the largest value for which $P[x_i, x] \in \Im(P[x_i, s_{j-1}])$.
Note that $\overline{P(x) \sigma_j}$ is a bridge from $P[x_i, x]$, due to the range property of the signature.
If $\overline{P(x) \sigma_j}$ has size at most $8\delta$, then continue traversing the signature vertices.
Otherwise, let $x'$ be the largest value such that $P[x_i, x']$ lies within Hausdorff distance $4\delta$ of $P[x_i, x]$.
That is, if $\Im(P[x_i, x]) = [a, b]$, then $\Im(P[x_i, x']) \subseteq [a - 4\delta, b + 4\delta]$.
Define $\CP_i = P[x_i, x']$ and $\BP_i = P(x', s_j)$, and set $x_{i+1} = s_j$.
Once $j = k'$, set $\CP_i = P[x_i, 1]$ and do not define $\BP_i$.
The above results in a decomposition of $P$ into $P = \CP_1 \circ \bigcirc_{2 \leq i \leq k} (\BP_{i-1} \circ \CP_i)$ for some $k$.
For convenience, we write $P_{\leq i} = \bigcirc_{1 \leq j \leq i} (\CP_i \circ \BP_i)$ and $P_{\geq i} = \bigcirc_{i \leq j \leq k} (\CP_i \circ \BP_i)$.

Next we iteratively split up $Q$ based on the decomposition of $P$.
Given $y_i \in [0, 1]$, where initially $y_0 = 0$, do the following.
Let $y$ be the largest value for which $Q[y_i, y]$ is within Hausdorff distance $\delta$ of $\CP_i$.
Let $y' > y$ be the smallest value for which $|\CP_{i+1}(0) - Q(y')| \leq \delta$.
If $y'$ does not exist, set $y' = 1$.
Define $\CQ_i =  Q[y_i, y]$ and $\BQ_i = (y, y')$, and set $y_{i+1} = y'$.
When $y_{i+1} = 1$, set $\CQ_{i+1} = Q(1)$ and do not define $\BQ_{i+1}$.
The above results in a decomposition of $Q$ into $Q = \CQ_1 \circ \bigcirc_{2 \leq i \leq k} (\BQ_{i-1} \circ \CQ_i)$, where $\CQ_k = Q_k$.
For convenience, we write $Q_{\leq i} = \bigcirc_{1 \leq j \leq i} (\CQ_i \circ \BQ_i)$ and $Q_{\geq i} = \bigcirc_{i \leq j \leq k} (\CQ_i \circ \BQ_i)$.
Note that the number of curves $\CP_i$ and $\CQ_i$, as well as $\BP_i$ and $\BQ_i$, are equal.
See Figure~\ref{fig:subcurves} for an illustration of the curves $\CP_i$, $\BP_i$, $\CQ_i$ and $\BQ_i$.

\begin{figure}
    \centering
    \includegraphics{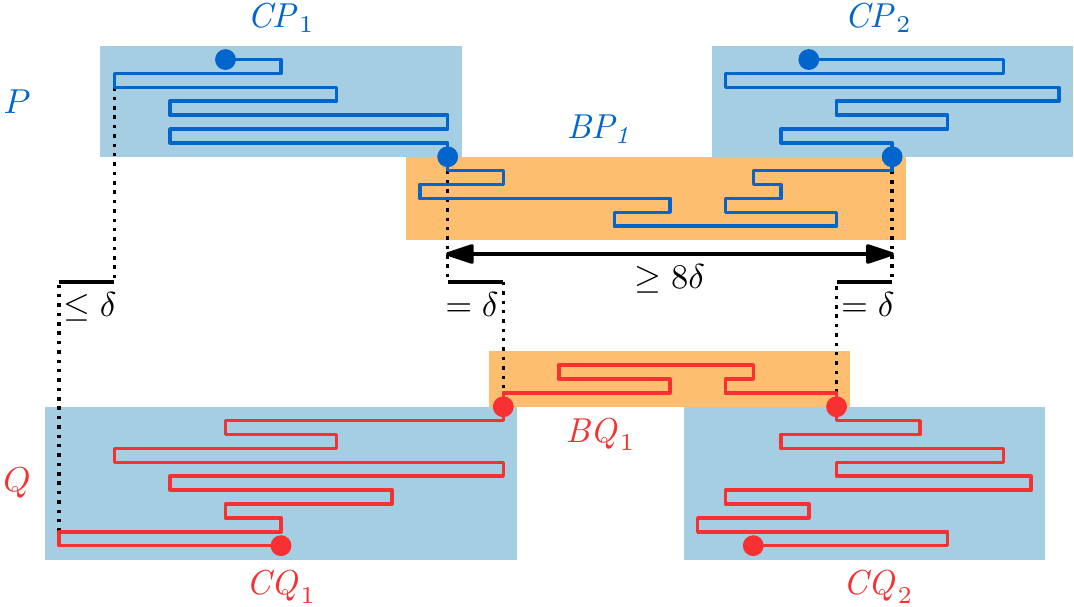}
    \caption{The different subcurves that $P$ (blue) and $Q$ (red) are split into.
    The curves move horizontally in one dimension, but the vertical segments are added for clarity.
    Curves in the blue areas are the compact curves $\CP_i$ and $\CQ_i$, and the curves in the orange areas are the bridge curves $\BP_i$ and $\BQ_i$.
    }
    \label{fig:subcurves}
\end{figure}

We can construct the curves $\CP_i$, $\BP_i$, $\CQ_i$ and $\BQ_i$ in a single scan of $P$ and $Q$, given a signature curve $\Sigma$ of $P$.
We obtain the following lemma:

\begin{lemma}
    Given a signature curve $\Sigma$ of $P$, we can decompose $P$ and $Q$ into the subcurves $\CP_i$, $\BP_i$, $\CQ_i$ and $\BQ_i$ in $O(m + n)$ time.
\end{lemma}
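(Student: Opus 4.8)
The plan is to show that the entire decomposition can be produced by a constant number of synchronized linear scans over the vertices of $P$ and $Q$, together with the precomputed signature curve $\Sigma$. The key observation is that every quantity used to define the breakpoints --- the largest $x \in [x_i, s_j]$ with $P[x_i, x] \subseteq \Im(P[x_i, s_{j-1}])$, the bridge $\overline{P(x)\sigma_j}$, the threshold tests at $8\delta$ and $4\delta$, the largest $x'$ with $P[x_i, x']$ within Hausdorff distance $4\delta$ of $P[x_i, x]$, and then the Hausdorff-distance-$\delta$ and $\delta$-proximity tests defining $\CQ_i$ and $\BQ_i$ --- depends only on running minima/maxima of coordinate values along already-traversed portions of $P$ (respectively $Q$), which are monotone to maintain as the scan advances. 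So the whole construction is a single left-to-right pass maintaining a few interval endpoints.

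First I would process $P$. While iterating over the signature vertices $\sigma_1, \dots, \sigma_{k'}$, I maintain the image interval $\Im(P[x_i, s_{j-1}])$ as a pair $[a,b]$ of running min/max; since signature vertices are already available (they come with the $s_j$ values), locating the largest $x \in [x_i, s_j]$ with $P[x_i,x] \subseteq [a,b]$ is just a matter of scanning the edges of $P$ between $s_{i}$'s current reference and $\sigma_j$ until the coordinate first leaves $[a,b]$, which happens on a single edge and can be solved in $O(1)$ per edge by linear interpolation. Each edge of $P$ is charged only a constant number of times across the whole process because once a breakpoint $x_{i+1} = s_j$ is set we never revisit edges to its left; the subcurve $P[x', s_j]$ forming $\BP_i$ and the search for $x'$ (the largest value keeping $P[x_i,x']$ within Hausdorff distance $4\delta$ of $[a,b]$, i.e. inside $[a-4\delta, b+4\delta]$) are again handled by advancing the same scan pointer and testing one edge at a time. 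Hence the $P$-side decomposition costs $O(m)$ time plus the $O(k') = O(m)$ cost of walking the signature. Then I process $Q$ in lockstep: having the $\CP_i$'s (each as an image interval) in hand, I scan the edges of $Q$ maintaining the largest $y$ with $Q[y_i,y]$ inside the $\delta$-neighborhood of $\Im(\CP_i)$, then advance to the smallest $y' > y$ with $Q(y')$ within $\delta$ of $\CP_{i+1}(0)$; both are single-edge tests via linear interpolation, and the scan pointer only moves forward, so this is $O(n)$ total. Summing gives $O(m+n)$.

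The main obstacle I anticipate is the charging argument proving that the total work is genuinely linear rather than quadratic: one must argue that the "backtracking" inherent in the definitions --- e.g. $\CP_i = P[x_i, x']$ may end before the signature vertex $\sigma_j$ at which the oversized bridge was detected, so that $\BP_i = P(x', s_j)$ and then $x_{i+1} = s_j$, meaning the scan pointer actually jumps \emph{forward} past some already-examined territory --- never causes an edge of $P$ or of $Q$ to be examined more than $O(1)$ times. The clean way to see this is that the reference points $x_i$ (and $y_i$) are strictly increasing, and between two consecutive compact curves $\CP_i, \CP_{i+1}$ the only shared edge is the one straddling $x_{i+1}$; an edge of $P$ can lie in at most one $\CP_i$ and at most one $\BP_i$ (the bridge interval $\BP_i = P(x', s_j)$ consists of edges strictly between the end of $\CP_i$ and $\sigma_j = x_{i+1}$, and these are disjoint across $i$), so each edge is touched a constant number of times; the identical reasoning applies to $Q$ with the intervals $\CQ_i$ and $\BQ_i$. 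Once that bookkeeping is in place, the lemma follows: one scan of $P$ guided by $\Sigma$ produces all $\CP_i, \BP_i$, and one synchronized scan of $Q$ produces all $\CQ_i, \BQ_i$, for a total of $O(m+n)$ time.
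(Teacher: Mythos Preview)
Your proposal is correct and takes essentially the same approach as the paper: a single left-to-right scan of $P$ (guided by $\Sigma$) followed by a synchronized scan of $Q$, with running min/max maintained to test the image, Hausdorff, and proximity conditions in $O(1)$ per edge. The paper's own proof is in fact just the one-sentence assertion that the construction can be done ``in a single scan of $P$ and $Q$, given a signature curve $\Sigma$ of $P$''; you have supplied the details (monotonicity of the pointers, the charging argument that each edge is touched $O(1)$ times) that the paper leaves implicit.
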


The constructed subcurves have several useful properties.
The \emph{compact} curves $\CP_i$ have small diameter, and the \emph{bridge} curves $\BP_i$ are \emph{$2\delta$-direction-preserving} and have sufficiently large diameter.
We define a direction-preserving curve based on the definition of $\delta$-signatures by Driemel~\etal~\cite{driemel15clustering}, who require $\delta$-signatures to be direction-preserving.

\begin{definition}[$\delta$-Direction-Preserving]
    Let $\delta > 0$.
    A curve $P \from [0, 1] \to \R$ is \emph{$\delta$-direction-preserving} in the positive direction if the following holds.
    Let $0 = x_1 < \dots < x_m = 1$ index the vertices of $P$.
    Then $P(x) - P(x') \leq \delta$ for all $x, x' \in [x_i, x_{i+1}]$ with $x < x'$.
    A $\delta$-direction-preserving curve in the negative direction is defined analogously.
\end{definition}

The following two lemmas prove the properties of the compact and bridge curves.
Lemma~\ref{lem:compact_properties} states that each compact curve has diameter at most $4L$.
Lemma~\ref{lem:bridge_properties} states that the bridge curves are all $2\delta$-direction-preserving, and have a minimum diameter greater than $4\delta$.
The small diameter of the compact curves means that we can essentially ignore the entire subcurve, without incurring too much error in the approximation.
The sufficiently large diameter of the bridge curves allow us to construct a curve $P^*$ that has minimum edge length greater than $4\delta$.

\begin{lemma}
\label{lem:compact_properties}
    Each $\CP_i$ has diameter at most $4L - 2\delta$.
\end{lemma}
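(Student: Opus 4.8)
The plan is to bound $\operatorname{diam}(\Im(\CP_i))$ through the \emph{accumulated image} of $P$ that the construction traverses before cutting, and to control that accumulated image using the small footprint of $\Sigma$ together with the fact that the construction cuts precisely when a bridge exceeds $8\delta$.

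Fix $i$ and let $\tau_0,\dots,\tau_t$ be the signature vertices of $\Sigma$ that are traversed while $\CP_i$ is being built (so $\tau_0 = P(x_i)$). In the main case $\CP_i = P[x_i,x']$ ends because of a cut at a signature vertex $\sigma = \tau_{t+1}$, and the value $x$ from the construction satisfies $\Im(P[x_i,x]) = \Im(P[x_i,s_{j-1}])$, which we call $[a,b]$, where $\sigma = \sigma_j$; moreover $[a,b]$ differs from $[\min_r\tau_r,\max_r\tau_r]$ by at most an additive $\delta$ (only when $i=1$, coming from the range-property slack on the first signature leg), and $\Im(\CP_i)\subseteq[a-4\delta,b+4\delta]$ by the choice of $x'$. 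So it suffices to bound $b-a$.

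The combinatorial core I would prove is $b-a\le 4\lVert U\rVert-8\delta$ (up to the harmless $\delta$ above), where $U=\bigcup_r[\tau_r-\delta,\tau_r+\delta]$. Call a pair of value-consecutive traversed vertices a \emph{big gap} if they differ by more than $2\delta$; if there are $k_+$ big gaps then $U$ splits into $k_+ + 1$ connected components, each of length at least $2\delta$, so $\lVert U\rVert\ge 2\delta(k_+ + 1)$. Also $\lVert[a,b]\cap U\rVert=\lVert U\rVert-2\delta$, and $[a,b]\setminus U$ is the union of the $k_+$ big gaps, each shrunk by $2\delta$. The key fact is that every big gap has length at most $8\delta$: such a gap is spanned by some signature leg $\overline{\tau_m\tau_{m+1}}$, and for the first such leg the accumulated image $\Im(P[x_i,s_m])$ lies entirely on one side of the gap, so — since the construction did not cut at $\tau_{m+1}$ — the bridge from that accumulated image to $\tau_{m+1}$ has size at most $8\delta$, which forces the gap to be at most $8\delta$. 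Putting these together, $b-a=(\lVert U\rVert-2\delta)+\lVert[a,b]\setminus U\rVert\le(\lVert U\rVert-2\delta)+6\delta k_+\le 4\lVert U\rVert-8\delta$, using $k_+\le\lVert U\rVert/(2\delta)-1$.

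To conclude, note $U\subseteq R(\Sigma)$, so $\lVert U\rVert\le L$; and in the cut case the triggering vertex $\sigma$ lies at distance more than $8\delta$ from $[a,b]$, hence its $\delta$-ball is disjoint from $U$ yet contained in $R(\Sigma)$, giving $\lVert U\rVert\le L-2\delta$. Thus $b-a\le 4L-16\delta$, and $\operatorname{diam}(\Im(\CP_i))\le(b-a)+8\delta+\delta\le 4L-7\delta\le 4L-2\delta$. The only other case is the last compact curve $\CP_i=P[x_i,1]$, where the traversed run should be taken to include the final signature vertex: there is no $4\delta$-enlargement but also no triggering vertex, so $\lVert U\rVert\le L$, $b-a\le 4L-8\delta$, and the $\le 2\delta$ range-property slack gives $\operatorname{diam}(\Im(\CP_i))\le 4L-6\delta$; degenerate runs (a single traversed vertex, $b-a=0$) and $k'=2$ are checked directly, using $L\ge 4\delta$ there. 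I expect the main obstacle to be the ``every big gap is at most $8\delta$'' step: it needs a short induction showing the accumulated intervals $\Im(P[x_i,s_m])$ grow by at most $8\delta$ per traversed vertex and always contain every previously traversed vertex, plus careful handling of the $\pm\delta$ slack in the range property at the first and last signature legs.
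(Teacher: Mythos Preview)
Your proposal is correct and follows essentially the same approach as the paper: both arguments bound the diameter by splitting the accumulated image into the part covered by the $\delta$-footprint and the gaps between components, use the $8\delta$ bridge condition to bound each gap, and count components via $L/(2\delta)$. Your version is more carefully quantified (decomposing $[a,b]$ explicitly into $[a,b]\cap U$ and $[a,b]\setminus U$, and using the triggering vertex to sharpen $\lVert U\rVert\le L-2\delta$), which is why you arrive at the tighter constant $4L-7\delta$; the paper's proof is terser and a bit looser in the bookkeeping but rests on the same two observations.
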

\begin{proof}
    We constructed $\CP_i$ by traversing $P_i$ over its signature vertices, and based on the distance between a signature vertex and the current curve, we either expand the curve to contain this signature vertex, or we stop expanding the curve.
    In both cases, the diameter of the curve grows by at most $8\delta$ (in the latter case, we grow the curve slightly before we stop expanding it).
    
    Let $R(\Sigma)$ be the $\delta$-footprint of $\Sigma$.
    Recall that the size of $R(\Sigma)$ is at most $L$.
    As each connected component of $R(\Sigma)$ has size at least $2\delta$, there can not be more than $L/(2\delta)$ connected components in $R(\Sigma)$.
    During construction of $\CP_i$, the curve always ends at a vertex of $\Sigma$ (except possibly when the construction is done).
    We therefore have that each expansion of the curve either keeps the endpoint of the curve in the same connected component of $R(\Sigma)$, or it jumps to another connected component.
    In the first case, the total expansion the curve can do is bounded by the size of $R(\Sigma)$, which is $L$.
    In the second case, the total expansion is bounded by $8\delta$ times the number of connected components of $R(\Sigma)$, which is at most $L/(2\delta)$.
    This gives $\CP_i$ a diameter of at most $4L$.
    Note that $\CP_i$ does not come strictly within distance $\delta$ of the leftmost and rightmost points of $R(\Sigma)$, as there are no signature vertices there.
    Thus we can subtract $2\delta$ from the bound, giving $\CP_i$ a diameter of at most $4L - 2\delta$.
\end{proof}

\begin{lemma}
\label{lem:bridge_properties}
    Each $\BP_i$ is $2\delta$-direction-preserving and has diameter greater than $4\delta$.
\end{lemma}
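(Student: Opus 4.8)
The plan is to fix an index $i$, look at the bridge curve $\BP_i = P(x', s_j)$ produced by the construction, and pin down its two endpoints precisely. Write $[c,d] = \Im(P[x_i, s_{j-1}])$ --- possibly a single point, in the degenerate case where $x_i = s_{j-1}$ (i.e.\ $\BP_{i-1}$ ended exactly at $s_{j-1}$); the argument below handles this with $c = d = \sigma_{j-1}$. Since $P[x_i, s_{j-1}] \subseteq [c,d]$, the value $x$ selected by the construction satisfies $x \ge s_{j-1}$, hence $\Im(P[x_i, x]) = [c,d]$ as well; and $x < s_j$ (otherwise the ``bridge'' $\overline{P(x)\sigma_j}$ is the single point $\{\sigma_j\}$, of size $0 \le 8\delta$, and no split happens). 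Continuity of $P$ together with the maximality of $x$ then forces $P(x)$ to be an endpoint of $[c,d]$; assume without loss of generality that $P$ leaves $[c,d]$ through the top, so $P(x) = d$ (the bottom case is symmetric, swapping $d \leftrightarrow c$ and $+4\delta \leftrightarrow -4\delta$). Because $\overline{P(x)\sigma_j}$ is a bridge, $\overline{P(x)\sigma_j} \cap [c,d] = \{d\}$, which forces $\sigma_j > d$; combined with the hypothesis that this bridge has size greater than $8\delta$, this yields $\sigma_j > d + 8\delta$.

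Next I would bring in the approximately-direction-preserving property of the $\delta$-signature on the single segment $[s_{j-1}, s_j]$. Since $\sigma_{j-1} \in [c,d]$ we have $\sigma_{j-1} \le d < \sigma_j$, so that property applies in the positive direction: $P(s) - P(s') \le 2\delta$ whenever $s < s'$ in $[s_{j-1}, s_j]$. In particular $P(s) \ge P(x) - 2\delta = d - 2\delta$ for all $s \in [x, s_j]$. Consequently $P$ stays strictly above $c - 4\delta$ on all of $[x_i, s_j]$ (it lies in $[c,d]$ on $[x_i, x]$ and is $\ge d - 2\delta > c - 4\delta$ on $[x, s_j]$), so the only way $P$ can escape the window $[c - 4\delta,\, d + 4\delta]$ used to define $x'$ is from above; and it does escape, since $P(s_j) = \sigma_j > d + 8\delta$. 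A standard hitting-time argument then gives $x' < s_j$ and $P(x') = d + 4\delta$. The diameter bound now drops out: $\BP_i = P(x', s_j)$ has the same image-diameter as $P[x', s_j]$, which is at least $|P(x') - P(s_j)| = \sigma_j - (d + 4\delta) > 4\delta$.

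For the direction-preserving claim I would simply note that $s_{j-1} \le x \le x' < s_j$, so the domain $(x', s_j)$ of $\BP_i$ lies inside the single signature segment $[s_{j-1}, s_j]$; hence the bound $P(s) - P(s') \le 2\delta$ for $s < s'$ holds on the whole of $\BP_i$, and a fortiori on each of its edges, which is exactly what it means for $\BP_i$ to be $2\delta$-direction-preserving in the positive direction. In the symmetric bottom case one gets the negative direction, with $P(x') = c - 4\delta$ and $\sigma_j < c - 8\delta$, and the same estimate.

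The only real subtlety --- and hence the step I would be most careful with --- is the interaction between the several ``largest value'' choices in the construction and the continuity of $P$: I need $x$ to land exactly on a boundary value $c$ or $d$ (so that the bridge condition can be invoked to locate $\sigma_j$), and I need $x'$ to land exactly on $d + 4\delta$ rather than somewhere further along where $P$ has briefly dipped back into $[c-4\delta, d+4\delta]$ after overshooting. The approximately-direction-preserving property of the signature is precisely the tool that resolves the second point: once $P$ has risen to height $d$ it can never again fall below $d - 2\delta$ within the segment $[s_{j-1}, s_j]$, so it cannot re-enter the window through the bottom, and the escape --- and hence $x'$ --- is through the top at exactly $d + 4\delta$.
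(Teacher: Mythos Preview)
Your proof is correct and follows the same approach as the paper: both observe that $\BP_i$ lies inside a single signature segment $[s_{j-1},s_j]$ (giving the $2\delta$-direction-preserving property) and that the bridge $\overline{P(x)\sigma_j}$ has size greater than $8\delta$ while the part cut off to form $\CP_i$ accounts for only $4\delta$ of that (giving the diameter bound). The paper's own proof is a two-line sketch that simply asserts these facts; your version is considerably more careful, in particular pinning down $P(x')=d+4\delta$ exactly via the hitting-time argument and explicitly using the approximately-direction-preserving property to rule out an escape through the bottom of the window---details the paper leaves implicit.
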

\begin{proof}
    The curve $\BP_i$ is a subcurve of $P(s_j, s_{j+1})$, where $P(s_j)$ and $P(s_{j+1})$ are two consecutive vertices of $\Sigma$.
    This directly implies that $\BP_i$ is $2\delta$-direction-preserving.
    Because $\BP_i$ is constructed by taking a subcurve $P[x, s_{j+1}]$ of $P[s_j, s_{j+1}]$ that has diameter greater than $8\delta$, of which we set $\BP_i = P[x', s_{j+1}]$ such that $P[x, x']$ has diameter $4\delta$, it follows that the diameter of $\BP_i$ is greater than $4\delta$.
\end{proof}

Transforming two curves $P$ and $Q$ is usually a global process; if nothing is known about what a point $p \in P$ can and cannot be matched to, we cannot move $p$ without potentially having to transform all of $Q$ (and all of $P$ because of this).
The subcurves $\CP_i$, $\BP_i$, $\CQ_i$ and $\BQ_i$ are very restricted in what they can be matched to however, allowing us to make local changes to the curves.
In Theorem~\ref{thm:matching_restrictions} we give the various restrictions on matchings.
We use these restrictions in Section~\ref{subsub:transforming_subcurves} to apply local transformations to $P$ that allow us to construct the transformations based on local properties.
This is key in computing a curve $P^*$ that has long edges.

\begin{theorem}
\label{thm:matching_restrictions}
    Assume that $d_F(P, Q[0, y]) \leq \delta$ for some $y \in [0, 1]$ and let $(f, g)$ be a $\delta$-matching between $P$ and $Q[0, y]$.
    Then $(f, g)$ matches the various subcurves to each other with the following restrictions:
    \begin{itemize}
        \item \label{item:CP_matching} $\CP_i$ is matched to a subcurve of $\CQ_i$, for all $1 \leq i \leq k$.
        \item \label{item:BP_matching} $\BP_i$ is matched to a subcurve of $\CQ_i \circ \BQ_i \circ \CQ_{i+1}$ for all $1 \leq i \leq k-1$.
    \end{itemize}
\end{theorem}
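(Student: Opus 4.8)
The plan is to prove the two matching restrictions essentially in lockstep, by induction on $i$, exploiting the defining properties of the decomposition: each $\CP_i$ has small diameter (at most $4L-2\delta$ by Lemma~\ref{lem:compact_properties}), each $\BP_i$ has diameter greater than $4\delta$ and is $2\delta$-direction-preserving (Lemma~\ref{lem:bridge_properties}), and—crucially—the way $\CQ_i$ and $\BQ_i$ were carved out of $Q$. Recall that $\CQ_i = Q[y_i,y]$ was chosen maximal within Hausdorff distance $\delta$ of $\CP_i$, and $\BQ_i = (y,y')$ where $y'$ is the first parameter after $y$ with $|\CP_{i+1}(0) - Q(y')| \le \delta$. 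So on the open interval $\BQ_i$, $Q$ stays strictly farther than $\delta$ from $\CP_{i+1}(0)$, and $Q(y)$ is the last point within $\delta$ of $\Im(\CP_i)$ (more precisely, just past $y$ the curve $Q$ leaves the $\delta$-neighborhood of $\Im(\CP_i)$). These are the two ``walls'' that confine the matching.

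First I would establish the base of a monotonicity chain: a $\delta$-matching $(f,g)$ must match the start of $\CP_{i+1}$—equivalently the end of $\BP_i$, the signature vertex $\sigma$ at parameter $s_{j+1}$—to a point of $Q$ within $\delta$ of $\sigma = \CP_{i+1}(0)$. By construction of $\BQ_i$, the $\delta$-neighborhood of $\CP_{i+1}(0)$ is disjoint from $\Im(Q|_{\BQ_i})$ except possibly at its endpoints, so the image of the start of $\CP_{i+1}$ under $(f,g)$ lands in $\CQ_i$ at-or-before $y$, or in $\CQ_{i+1}$ at-or-after $y'$. I would argue it cannot land in $\CQ_i$ before the ``exit'' point, because then monotonicity plus the large diameter ($>4\delta$) of $\BP_i$, together with the fact that $\BP_i$ must be traversed while $Q$ is confined near $\Im(\CP_i)\cup\Im(\CP_{i+1})$ (a set of diameter at most roughly $8L$), would force some point of $\BP_i$ to be matched at distance exceeding $\delta$. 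The clean way to see this: because $\BP_i$ is $2\delta$-direction-preserving and has diameter $>4\delta$, its endpoints differ by more than $4\delta$ in the (say) positive direction, so whatever is matched to the two endpoints of $\BP_i$ must differ by more than $2\delta$, forcing $Q$ to make genuine progress; combined with $Q$'s confinement this pins down that $\BP_i$ is matched inside $\CQ_i \circ \BQ_i \circ \CQ_{i+1}$, which is the second bullet.

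For the first bullet, once I know $\CP_i$ starts being matched at-or-after the end of $\BP_{i-1}$'s matched portion (by the induction hypothesis and the monotonicity of $(f,g)$, this is inside $\CQ_i$ or later) and ends being matched at-or-before the start of $\BP_i$'s matched portion (which by the second bullet is inside $\CQ_i \circ \BQ_i \circ \CQ_{i+1}$), I would rule out that $\CP_i$ spills into $\BQ_i$ or beyond. The key is the maximality in the definition of $\CQ_i$: $\CQ_i$ is the longest prefix-from-$y_i$ within Hausdorff distance $\delta$ of $\CP_i$, so just past $y$ the curve $Q$ is at distance $>\delta$ from $\Im(\CP_i)$; since every point of $\CP_i$ is matched to some point of $Q$ at distance $\le \delta$, the matched image of $\CP_i$ must stay within $\Im(\CP_i)$'s $\delta$-neighborhood, hence within $\CQ_i$ (the open bridge $\BQ_i$ leaves that neighborhood by construction, and re-entry would contradict either maximality of $\CQ_i$ or, on the far side, the placement of $\CP_{i+1}(0)$'s matched point). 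I should also handle the degenerate last piece where $\CQ_k = Q_k = Q(1)$ is a single point, which is immediate.

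The main obstacle will be the geometric ``confinement'' argument for $\BP_i$: carefully showing that while $(f,g)$ traverses $\BP_i$, the partner curve $Q$ cannot already have entered $\CQ_{i+1}$ ``too early'' nor lag ``too late'' in $\CQ_i$, i.e. that the whole of $\BP_i$'s large-diameter excursion is absorbed by $\CQ_i \circ \BQ_i \circ \CQ_{i+1}$ and not by $\CP$-neighbors further away. This needs the interplay between the diameter bound $\le 4L-2\delta$ on the compact pieces (so the neighboring $\CQ$'s are ``small''), the $>4\delta$ diameter and $2\delta$-direction-preservation of $\BP_i$, and the explicit endpoints of $\BQ_i$; I expect a short case analysis on whether $\CP_{i+1}(0)$ lies to the left or right of $\Im(\CP_i)$ (i.e. the sign of the bridge direction), handled symmetrically. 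Everything else is bookkeeping with monotonicity of $(f,g)$ and the definitions.
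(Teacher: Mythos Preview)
Your inductive skeleton and your argument for the first bullet (confining $\CP_i$ to $\CQ_i$ via the maximality of $\CQ_i$ and the fact that $Q$ leaves the $\delta$-neighborhood of $\Im(\CP_i)$ just past $\CQ_i$) are essentially the paper's argument. The gap is in the second bullet, specifically the \emph{upper} bound: why $\BP_i$ cannot be matched past $\CQ_{i+1}$ into $\BQ_{i+1}$ (or further).

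Your plan is to track where the \emph{endpoint} $\CP_{i+1}(0)$ of $\BP_i$ is matched. You correctly observe that it cannot land in the open $\BQ_i$ (by definition of $y'$), and you then assert it lands ``in $\CQ_i$ \ldots\ or in $\CQ_{i+1}$''. That dichotomy is not justified: a point of $Q$ within $\delta$ of $\CP_{i+1}(0)$ could just as well lie in $\BQ_{i+1}$ or in some later $\CQ_j$, since $Q$ may revisit that value. Your subsequent use of $2\delta$-direction-preservation (``the two endpoints of $\BP_i$ differ by more than $2\delta$, forcing $Q$ to make genuine progress'') only pushes the matching \emph{forward}; it does nothing to stop it from overshooting $\CQ_{i+1}$. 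You also spend effort ruling out that $\CP_{i+1}(0)$ lands ``in $\CQ_i$ before the exit point'', but that case is harmless for the second bullet (and is in fact vacuous: $\CP_{i+1}(0)$ is at distance $>4\delta$ from $\Im(\CP_i)$, hence $>3\delta$ from every point of $\CQ_i$).

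The paper closes this gap by looking in the opposite direction: instead of asking where the endpoint of $\BP_i$ goes, it asks what happens once some point $p = \BP_i(x)$ is matched to $\CQ_{i+1}(0)$ (which must occur if $\BP_i$ is not entirely inside $\CQ_i$, since the matching is continuous and the endpoint cannot sit in the open $\BQ_i$). Because $\BP_i$ is $2\delta$-direction-preserving and ends at $\CP_{i+1}(0)$, the tail $\BP_i[x,1)$ is confined to $[\CP_{i+1}(0)-4\delta,\,\CP_{i+1}(0))$. On the other hand, just past the end of $\CQ_{i+1}$ the curve $Q$ leaves the $\delta$-neighborhood of $\Im(\CP_{i+1})$ and is in particular more than $\delta$ from that $4\delta$-interval; hence no point of $\BP_i[x,1)$ can be matched into $\BQ_{i+1}(0,\eps]$, and by monotonicity $\BP_i$ stays inside $\CQ_i\circ\BQ_i\circ\CQ_{i+1}$. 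This is the missing idea in your proposal: the $2\delta$-direction-preservation must be used to confine the \emph{tail} of $\BP_i$ after it first meets $\CQ_{i+1}(0)$, not merely to separate the two endpoints.
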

\begin{proof}
    Without loss of generality, we assume that $\BP_i$ and $\BQ_i$ exist.
    We give a proof by induction over $i$.
    First we prove the base case for $\CP_i$.
    By construction, there is an $\eps > 0$ such that the distance between any point on $\CP_1$ and any point on $\BQ_1(0, \eps]$ is greater than $\delta$.
    Since $P(0) = \CP_1(0)$ must be matched to $Q(0) = \CQ_1(0)$, it follows that $\CP_1$ must be matched to a subcurve of $\CQ_1$, proving the base case for $\CP_i$.
    
    We now prove the base case for $\BP_i$.
    Say there is a point $p = \BP_1(x)$ that is matched to $\CQ_2(0)$.
    Assume without loss of generality that $\BP_1$ moves to the right.
    Then because $\BP_1$ is $2\delta$-direction-preserving (Lemma~\ref{lem:bridge_properties}), we have that $\BP_1[x, 1) \subseteq [p - 2\delta, \CP_2(0)) \subseteq[\CP_2(0) - 4\delta, \CP_2(0))$.
    Note that $| \CQ_2(1) - \CP_2(0) | \geq 5\delta$, and therefore that there is an $\eps > 0$ such that all points on $\BQ_2(0, \eps]$ have distance greater than $5\delta$ to $\CP_2(0)$.
    We obtain that no point on $\BQ_2(0, \eps]$ can be matched to points on $\BP_1[x, 1)$.
    It follows that points on $\BP_1$ cannot be matched to points on $\BQ_2$, and thus that $\BP_1$ is matched to a subcurve of $\CQ_1 \circ \BQ_1 \circ \CQ_2$.
    This proves the base case for $\BP_i$.
    
    We now prove the induction step, starting with $\CP_i$.
    Similar to the base case, we have that there is an $\eps > 0$ such that the distance between any point on $\CP_i$ and any point on $\BQ_i(0, \eps]$ is greater than $\delta$.
    It follows that $\CP_i$ must either be matched to a subcurve of $Q_{\leq i-1} \circ \CQ_i$, or a subcurve of $\BQ_i(\eps, 1] \circ Q_{\geq i+1}$.
    By the induction hypothesis, $\BP_{i-1}$ must be matched to a subcurve of $\CQ_{i-1} \circ \BQ_{i-1} \circ \CQ_i$.
    Hence, $\CP_i$ must be matched to a subcurve of $\CQ_{i-1} \circ \BQ_{i-1} \circ \CQ_i$, as otherwise $\BQ_i(0, \eps]$ would not be matched to anything.
    Note that $\CP_i(0)$ does not lie close to any point on $\CQ_{i-1} \circ \BQ_{i-1}$.
    We get that $\CP_i$ must be matched to a subcurve of $\CQ_i$, proving the induction step for $\CP_i$.
    
    We conclude the proof by proving the induction step for $\BP_i$.
    Through similar reasoning as for the base case, it can be shown that if a point on $\BP_i$ is matched to $\CQ_{i+1}(0)$, then no point on $\BP_i$ can be matched to points on $\BQ_{i+1}$.
    By the induction hypothesis, $\CP_i$ is matched to a subcurve of $\CQ_i$.
    This means that either $\BP_i$ is matched to a subcurve of $\CQ_i$, or there is a point on $\BP_i$ that is matched to $\CQ_{i+1}(0)$.
    In both cases, it holds that points on $\BP_i$ cannot be matched to points on $\BQ_{i+1}$, from which it follows that $\BP_i$ is matched to a subcurve of $\CQ_i \circ \BQ_i \circ \CQ_{i+1}$.
    This proves the induction step for $\BP_i$.
\end{proof}

\subsection{Transforming the subcurves}
\label{subsub:transforming_subcurves}

We alter the curves $\CP_i$, $\BP_i$, $\CQ_i$ and $\BQ_i$, so that we can concatenate them into two curves $P^*$ and $Q^*$ with the desired properties.
Recall that we want $P^*$ to have edges that all have length greater than $4\delta$.
Also, we want the \f distance between $P^*$ and $Q^*$ to be approximately equal to the \f distance between $P$ and $Q$, so that we can use $P^*$ and $Q^*$ to compute an exit set with respect to $P$ and $Q$.

We make the assumption that $\BQ_i$ is $4\delta$-direction-preserving.
Observe that if it is not, then no subcurve $Q[0, y]$ of $Q$ exists with $d_F(P, Q[0, y]) \leq \delta$, meaning that the empty set is a $(1, \delta)$-exit set.
Indeed, by Theorem~\ref{thm:matching_restrictions}, any $\delta$-matching matches $\BQ_i$ to a subcurve of $\BP_i$.
Because $\BP_i$ is $2\delta$-direction-preserving (Lemma~\ref{lem:bridge_properties}), it follows that $\BQ_i$ must be $4\delta$-direction-preserving for a subcurve $Q[0, y]$ with $d_F(P, Q[0, y]) \leq \delta$ to exist.

We describe how to construct $P^*$ and $Q^*$.
We call the transformations that we apply to the curves \emph{straightenings}.
As the name implies, it straightens a curve, in the sense that it makes the curve move in only a single direction.
Straightenings are formally defined as follows:

\begin{definition}
    Given a curve $P \from [0, 1] \to \R$ and interval $I = [a, b]$, the \emph{$I$-straightening} of $P$ in the positive direction is the curve $P' \from [0, 1] \to \R$, where 
    \[
        P'(x) = \begin{cases}
            \min\{\max_{x' \leq x} P(x'), b\} & \text{if $P(x) \in I$.} \\
            P(x) & \text{otherwise.}
        \end{cases}
    \]
    The $I$-straightening in the negative direction is defined analogously, with the roles of $\min$ and $\max$, as well as the roles of $a$ and $b$, switched.
\end{definition}
\begin{figure}[bh]
    \centering
    \includegraphics{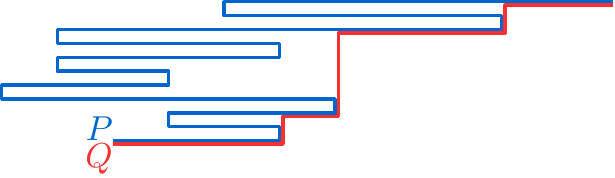}
    \caption{The $\R$-straightening $Q$ (red) of a curve $P$ (blue) in the positive direction.
    The curve $P$ moves horizontally in one dimension, but the vertical segments are added for clarity.
    Points on the blue curve are mapped to points on the red curve $Q$ at the same height.}
    \label{fig:straightening}
\end{figure}
See Figure~\ref{fig:straightening} for an illustration of a straightening.
Note that the straightening of a curve in one dimension yields merely a line segment.
However, due to its point-wise definition, straightenings can be applied to curves in higher dimensions as well, which is required for our case.

The straightening of a curve can be computed in time linear in the number of vertices of the curve by performing the straightening operation over the vertices and connecting the new vertices with line segments.
We transform the curves $\CP_i$, $\BP_i$, $\CQ_i$ and $\BQ_i$ using straightenings.
We iterate over the indices $i = 1, \dots, k$.
Assume that $\CP_{i+1}(0) > \CP_i(1)$.
The other case is handled symmetrically.
Let $\Im(\CP_i) = [a, b]$ and set $I = [a - \delta, \CP_{i+1}(0) - 4\delta]$.
We construct the $I$-straightenings $\CP_i' \circ \BP_i'$ and $\CQ_i' \circ \BQ_i'$ of $\CP_i \circ \BP_i$ and $\CQ_i \circ \BQ_i$, respectively, in the positive direction.
Also, if $i \leq k-1$, let $I' = [\CP_{i+1}(0) - 4\delta, \CP_{i+1}(0)]$.
If $\CP_{i+1}(1) > \CP_{i+1}(0)$, then we construct the $I'$-straightenings $\BP_i''$ and $\BQ_i''$ of $\BP_i'$ and $\BQ_i'$, respectively, in the positive direction.
Else we construct the $I'$-straightenings in the negative direction.

We choose the straightening intervals in such a way that both the compact curves $\CP_i$ and bridge curves $\BP_i$ are straightened into line segments.
(For $\CP_i$, this requires only the $I$-straightening, but for $\BP_i$ this requires both the $I$- and $I'$-straightenings.)
The properties of $\CP_i$ and $\BP_i$ give all resulting line segments lengths greater than $4\delta$ (except for the last).
Hence, the curve $P^* = \CP_1' \circ \bigcirc_{2 \leq i \leq k} (\BP_{i-1}'' \circ \CP_i')$ has long edges only, except for the last.
This is proved in the following lemma.

\begin{lemma}
\label{lem:min_edge_length}
    All edges of $P^*$, except possibly the last edge, have length greater than $4\delta$.
\end{lemma}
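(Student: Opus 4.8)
The plan is to decompose the edges of $P^* = \CP_1'\circ\bigcirc_{2\le i\le k}(\BP_{i-1}''\circ\CP_i')$ blockwise and show that for every $i\le k-1$ the block $\CP_i'\circ\BP_i''$ collapses to a single line segment of length greater than $4\delta$; the trailing block $\CP_k'$ is handled separately and accounts for the ``except possibly the last edge'' clause. No other segments of $P^*$ exist, so this suffices.

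First I would make precise what the two straightenings do to a block $\CP_i\circ\BP_i$, in the representative case $\CP_{i+1}(0)>\CP_i(1)$ (the case $\CP_{i+1}(0)<\CP_i(1)$ being mirror-symmetric). Writing $\Im(\CP_i)=[a,b]$, the interval $I=[a-\delta,\CP_{i+1}(0)-4\delta]$ was chosen precisely so that $[a,b]\subseteq I$, hence the positive $I$-straightening turns $\CP_i$ together with the initial part of $\BP_i$ whose values lie in $I$ into the running maximum of $\CP_i\circ\BP_i$ clamped at $\CP_{i+1}(0)-4\delta$: a monotone increasing curve that rises from $\CP_i(0)$ up to $\CP_{i+1}(0)-4\delta$. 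Because $\BP_i$ is $2\delta$-direction-preserving (Lemma~\ref{lem:bridge_properties}) and terminates at $\CP_{i+1}(0)$, the remaining part of $\BP_i'$ (values in $I'=[\CP_{i+1}(0)-4\delta,\CP_{i+1}(0)]$) is sent by the $I'$-straightening to a monotone segment spanning all of $I'$, so $\BP_i''$ ends at $\CP_{i+1}(0)$. Concatenating, $\CP_i'\circ\BP_i''$ is a monotone curve whose image is the whole interval between $\CP_i(0)$ and $\CP_{i+1}(0)$; since $\CP_i(0)\in[a,b]$, its length is $\lvert \CP_{i+1}(0)-\CP_i(0)\rvert\ge \CP_{i+1}(0)-b\ge 4\delta$, and I would upgrade this to a strict inequality using the strict $>8\delta$ threshold imposed on the triggering bridge $\overline{P(x)\,\CP_{i+1}(0)}$ in the construction, which forces $b$ strictly below $\CP_{i+1}(0)-4\delta$ (Lemma~\ref{lem:compact_properties} controls how far $\CP_i$ can wander from the departure point $P(x)$ of that bridge). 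I would also note that at a joint $\BP_i''\!-\!\CP_{i+1}'$ of $P^*$ the two monotone runs either continue in the same direction — so the joint vertex is collinear and disappears — or meet at a genuine vertex, but in either case the joint is flanked on both sides by segments of length greater than $4\delta$, so no short interior edge is introduced.

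The main obstacle is the bookkeeping of the two straightening orientations: the $I$-straightening direction is dictated by $\mathrm{sign}(\CP_{i+1}(0)-\CP_i(1))$ while the $I'$-straightening direction is dictated by $\mathrm{sign}(\CP_{i+1}(1)-\CP_{i+1}(0))$, and these need not agree. For each combination I must check that (i) the straightened blocks still share the endpoint value required for the concatenation defining $P^*$ to be well posed, and (ii) the monotone run carrying $\CP_i'\circ\BP_i''$ genuinely reaches $\CP_{i+1}(0)$ rather than stalling at $\CP_{i+1}(0)-4\delta$ — pushing it the last $4\delta$ is exactly the role of the $I'$-straightening, and this is the step that needs the most care. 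Once these cases are settled, the length bound is immediate from Lemmas~\ref{lem:compact_properties} and~\ref{lem:bridge_properties} together with the $>8\delta$ bridge threshold. Finally, $\CP_k'$ has no following bridge, so nothing extends it beyond $\mathrm{diam}(\CP_k)$, which can be arbitrarily small; this is precisely the promised exception, and it is the only edge of $P^*$ that may fail the bound.
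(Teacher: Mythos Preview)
Your approach is essentially the same as the paper's: both arguments show that each block $\CP_i'\circ\BP_i''$ collapses to a single monotone segment, and that these segments (for $i\le k-1$) have length greater than $4\delta$, leaving $\CP_k'$ as the one possible short edge. The paper's length bound is slightly different bookkeeping: rather than measuring the full span $|\CP_{i+1}(0)-\CP_i(0)|$ and invoking the $>8\delta$ bridge threshold as you do, the paper observes that already $\CP_i'=\overline{\CP_i(0)\,\CP_i(1)}$ has length $|\CP_i(0)-\CP_i(1)|\ge 4\delta$, which follows directly from the $4\delta$-extension step in the construction of $\CP_i$ (the endpoint $\CP_i(1)=P(x')$ sits exactly $4\delta$ outside the range $\Im(P[x_i,x])$ that contains $\CP_i(0)$). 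Your route via the bridge threshold gives the strict inequality more transparently. The paper is also quite terse about the orientation case analysis you correctly flag as the main obstacle --- in particular the question of whether $\BP_i''$ actually terminates at $\CP_{i+1}'(0)$ when the $I$- and $I'$-directions disagree --- so your caution there is well placed.
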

\begin{proof}
    First we show that $\CP_i' = \overline{\CP_i(0) \CP_i(1)}$ for all $i$.
    Let $\Im(\CP_i) = [a, b]$, and assume without loss of generality that $\CP_{i+1}(0) > \CP_i(1)$.
    Note that this also means that $\CP_i(1) \geq \CP_i(x)$ for all $x \in [0, 1]$.
    Let $I = [a - \delta, \CP_{i+1}(0) - 4\delta]$.
    Then $\CP_i'$ is the $I$-straightening of $\CP_i$.
    Since $\CP_i \subseteq I$, we have that $\CP_i'(x) = \max_{x' \leq x} \CP_i(x)$ for all $x \in [0, 1]$, and thus $\CP_i'(0) \leq \CP_i'(x) \leq \CP_i'(1)$.
    Note that $\CP_i'(0) = \CP_i(0)$ and $\CP_i'(1) = \CP_i(1)$.
    It follows that $\CP_i' = \overline{\CP_i(0) \CP_i(1)}$.
    
    Similar reasoning as above shows that $\BP_i' \cap I$ forms a single line segment, oriented in the same direction as $\CP_i'$.
    Because $\BP_i' \setminus I \subseteq I'$ for $I' = [\CP_{i+1}(0) - 4\delta, \CP_{i+1}(0)]$, similar reasoning as above again shows that either $\BP_i'' = \BP_i'$, or $\BP_i'' \setminus \BP_i'$ is a single line segment.
    In both cases, $\BP_i''$ is a single line segment.
    Thus, $\CP_i' \circ \BP_i''$ is a single line segment.
    
    Because $P^* = \CP_1' \circ \bigcirc_{2 \leq i \leq k} (\BP_{i-1}'' \circ \CP_i')$, it follows that if every $\CP_i' \circ \BP_i''$ has length greater than $4\delta$, except for when $i = k$, then all but the last edge of $P^*$ has length greater than $4\delta$.
    That $\CP_i' \circ \BP_i''$ has length greater than $4\delta$ follows from the fact that $\CP_i' = \overline{\CP_i(0) \CP_i(1)}$, and because we constructed $\CP_i$ such that $|\CP_i(0) - \CP_i(1)| \geq 4\delta$.
\end{proof}

Let $Q^* = \CQ_1' \circ \bigcirc_{2 \leq i \leq k} (\BQ_{i-1}'' \circ \CQ_i')$.
In Lemma~\ref{lem:exit_sets} we show that a $(c, \delta)$-exit set with respect to $P^*$ and $Q^*$, with $c = O(1)$, is also a $(4L / \delta, \delta)$-exit set with respect to $P$ and $Q$.
This theorem follows from the following two lemmas, which show that the \f distance between $P$ and subcurves of $Q$ is approximately preserved between $P^*$ and subcurves of $Q^*$.

\begin{lemma}
\label{lem:distances_preserved}
    If $d_F(P, Q[0, y^*]) \leq \delta$ for some $y^* \in [0, 1]$, then $d_F(P^*, Q^*[0, y^*]) \leq \delta$.
\end{lemma}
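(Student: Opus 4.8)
The plan is to take a $\delta$-matching $(f,g)$ between $P$ and $Q[0,y^*]$ and argue that, after the transformation, $(f,g)$ still realizes a $\delta$-matching between $P^*$ and $Q^*[0,y^*]$. The key structural point is that $P\mapsto P^*$ and $Q\mapsto Q^*$ apply the \emph{same} straightenings to corresponding pieces: for each $i$ one straightens $\CP_i\circ\BP_i$ and $\CQ_i\circ\BQ_i$ with the same interval $I_i=[a_i-\delta,\CP_{i+1}(0)-4\delta]$ and the same direction, and then straightens $\BP_i'$ and $\BQ_i'$ with the same interval $I_i'=[\CP_{i+1}(0)-4\delta,\CP_{i+1}(0)]$ (writing $\Im(\CP_i)=[a_i,b_i]$). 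Thus the combinatorial structure of the decomposition is preserved, every piece keeps its parameter interval, and $y^*$ continues to name the corresponding point of $Q^*$; it remains to verify that no pair of matched points is pushed more than $\delta$ apart.

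First I would apply Theorem~\ref{thm:matching_restrictions}: $(f,g)$ matches $\CP_i$ into $\CQ_i$, and $\BP_i$ into $\CQ_i\circ\BQ_i\circ\CQ_{i+1}$ (hence $\BQ_i$ into $\BP_i$). Combined with the diameter bound on $\CP_i$ (Lemma~\ref{lem:compact_properties}), the $>4\delta$ diameter and $2\delta$-direction-preservation of $\BP_i$ (Lemma~\ref{lem:bridge_properties}), the $4\delta$-direction-preservation of $\BQ_i$ assumed at the start of the section, and the $8\delta$ bridge threshold from the construction, this controls the ranges occupied by the matched subcurves: each $\CP_i$ and the matched part of $\CQ_i$ lie in $I_i$, and the tail of $\BP_i$ together with the head of $\CQ_{i+1}$ lies in the narrow band around $\CP_{i+1}(0)$ covered by $I_i'$. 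Consequently, on the parts of the curves that are actually matched, each straightening reduces to a running maximum (in the appropriate direction) composed with a truncation at an endpoint of the band, and the two-stage straightening of each bridge covers the whole interval it sweeps through.

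The analytic core is that a running maximum preserves $\delta$-closeness along a common matching. If $A,B\from[0,1]\to\R$ admit the $\delta$-matching $(f,g)$ and $A^{\uparrow}(x)=\max_{x'\le x}A(x')$, $B^{\uparrow}$ likewise, then for each $t$ the maximum defining $A^{\uparrow}(f(t))$ is attained at some $x_0\le f(t)$; by continuity and monotonicity of $f$ there is $t_0\le t$ with $f(t_0)=x_0$, so $B^{\uparrow}(g(t))\ge B(g(t_0))\ge A(x_0)-\delta=A^{\uparrow}(f(t))-\delta$, and symmetrically, whence $\normInf{A^{\uparrow}(f(t))-B^{\uparrow}(g(t))}\le\delta$. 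Since $x\mapsto\min\{x,b\}$ (and the mirrored maps used for the negative direction) are non-expansive, composing the truncation with the running maximum still preserves $\delta$-closeness. Applying this piece by piece --- first the $I_i$-straightening to $\CP_i\circ\BP_i$ and $\CQ_i\circ\BQ_i$, then the $I_i'$-straightening to $\BP_i'$ and $\BQ_i'$ --- and reassembling $P^*=\CP_1'\circ\bigcirc_{2\le i\le k}(\BP_{i-1}''\circ\CP_i')$ and $Q^*$, the pair $(f,g)$ remains a $\delta$-matching between $P^*$ and $Q^*[0,y^*]$.

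The hardest part is the range bookkeeping in the second paragraph, and in particular the points of $\BP_i$ that are matched into $\CQ_{i+1}$: there the $\BP_i$-side is straightened with $I_i'$ while the $\CQ_{i+1}$-side is straightened with the next interval $I_{i+1}$, so one must check that these two bands overlap in the sliver around $\CP_{i+1}(0)$ and that the induced running-max/truncation behaviors agree up to the allowed $\delta$ slack. This is precisely where the compact-curve diameter bound, the $>4\delta$ diameter and direction-preservation of the bridges, and the assumed direction-preservation of $\BQ_i$ are consumed, and where the constants must be checked to close at $\delta$ and not at a larger multiple. The gluing at the junctions --- e.g.\ that $\CP_i'\circ\BP_i''$ is a single line segment, so the reassembled $P^*$ is well behaved --- is then forced by Lemma~\ref{lem:min_edge_length}.
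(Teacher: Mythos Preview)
Your approach is essentially the paper's: keep the same matching $(f,g)$, invoke Theorem~\ref{thm:matching_restrictions} to localize matched pairs to corresponding pieces, and verify that the per-piece straightenings do not push any matched pair more than $\delta$ apart. Your running-maximum lemma is a clean reformulation of exactly what the paper proves by contradiction (``if $|p'-q'|>\delta$ then some point on one prefix is not $\delta$-close to any point on the other prefix'').

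One boundary detail you underplay: your running-maximum lemma is stated for curves $A,B$ that are \emph{fully} matched by $(f,g)$, but when you apply it with $A=\CP_i\circ\BP_i$ and $B=\CQ_i\circ\BQ_i$ the restriction of the global matching is not end-to-end. You handle the far end (points of $\BP_i$ matched into $\CQ_{i+1}$, straightened with different intervals), but the near end is symmetric: the initial segment of $\CQ_i$ may be matched to the tail of $\BP_{i-1}$, so the running maximum $q'=\max_{y'\le y}\CQ_i(y')$ can pick up values that your lemma does not directly control. The paper closes this not via the matching but via the construction (``all points on $\CP_i$ are close to points on $\CQ_i$ and vice versa''). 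Equivalently, you can observe that the tail of $\BP_{i-1}$ lies in $\overline{\sigma_{j-1}\sigma_j}$ by the range property and hence does not overshoot $\CP_i(0)$, so the offending initial part of $\CQ_i$ is bounded by $\CP_i(0)+\delta$, which is already dominated by $p'\ge\CP_i(0)$. This is a small patch, but without it your ``piece by piece'' application of the lemma does not literally go through.
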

\begin{proof}
    We show that the matching $(f, g)$ between $P^*$ and $Q^*[0, y^*]$ induced by (the same) $\delta$-matching $(f, g)$ between $P$ and $Q[0, y^*]$ has cost at most $\delta$.
    Let a point $p$ on $P$ be matched to a point $q$ on $Q$ by $(f, g)$, and let $p'$ and $q'$ be the respective points after straightening.
    Without loss of generality, assume that $p \in \CP_i \circ \BP_i$ for some $i$ (that is, $\BP_i$ exists).
    By Theorem~\ref{thm:matching_restrictions}, we have that $q \in \CQ_i \circ \BQ_i \circ \CQ_{i+1}$.
    
    Let $I$ and $I'$ be the regions used for straightening $\CP_i \circ \BP_i$ and $\CQ_i \circ \BQ_i$.
    We first prove the statement for when $q \in \CQ_i \circ \BQ_i$.
    Note that if the $I$- and $I'$-straightenings happen in opposite directions, then the $I$-straightening happens in the direction of $I'$ and vice versa.
    So if $p$ and $q$ move in opposite directions during straightening, they move towards each other and we have $|p' - q'| \leq |p - q| \leq \delta$.

    We argue that if $p$ and $q$ move in the same direction, then $|p' - q'| \leq \delta$.
    Without loss of generality, assume that both the $I$- and $I'$-straightenings happen in the positive direction.
    Since $I \cap I' \neq \emptyset$, performing the two straightenings separately is equivalent to performing the $(I \cup I')$-straightening of the curves.
    Let $p = (\CP_i \circ \BP_i)(x)$ and $q = (\CQ_i \circ \BQ_i)(y)$.
    We have that $p' = \min\{ \max_{x' \leq x} (\CP_i \circ \BP_i)(x'), \CP_{i+1}(0) \}$ and $q' = \min\{ \max_{y' \leq y} (\CQ_i \circ \BQ_i)(y'), \CP_{i+1}(0) \}$.
    If $|p' - q'| > \delta$, then there is either a point on $(\CP_i \circ \BP_i)[0, x]$ that is not close to any point on $(\CQ_i \circ \BQ_i)[0, y]$, or vice versa.
    Our construction of the subcurves gives us that all points on $\CP_i$ are close to points on $\CQ_i$ and vice versa, and symmetrically for $\BP_i$ and $\BQ_i$.
    Thus we have a contradiction, from which it follows that $|p' - q'| \leq \delta$ when $q \in \CQ_i \circ \BQ_i$.
    
    Now assume that $q \in \CQ_{i+1}$.
    Theorem~\ref{thm:matching_restrictions} then gives us that $p \in \BP_i$.
    Note that by the same theorem, there must now be a point $p'' \in \BP_i$ before $p$, that is matched to $\CQ_{i+1}(0)$.
    We again assume without loss of generality that $\CP_{i+1}(0) > \CP_i(1)$.
    This means that $\BP_i$ is $2\delta$-direction-preserving in the positive direction, which implies that $p \geq p'' - 2\delta \geq \CQ_{i+1}(0) - 3\delta \geq \CP_{i+1}(0) - 4\delta$.
    By construction we also have $p \leq \CP_{i+1}(0)$, and thus $p \in I'$.
    
    Let $p = \BP'_i(x)$ and $q = \CQ_{i+1}(y)$.
    Let $I''$ be the interval $I'$ expanded on both sides by $\delta$.
    Then $p' = \min\{ \max_{x' \leq x} \BP'_i(x'), \CP_{i+1}(0) \}$ and $q' = \min\{ \max_{y' \leq y} \CQ_{i+1}(y'), \CP_{i+2}(0) - 4\delta \} \geq \min\{ \max_{y' \leq y} \CQ_{i+1}(y'), \CP_{i+1}(0) \}$.
    Like before, if $|p' - q'| > \delta$, then there is either a point on $\BP'_i[0, x] \cap I'$ that is not close to any point on $\CQ_{i+1}[0, y] \subseteq \CQ_{i+1} \cap I''$, or vice versa.
    Note that $\BP'_i[0, x] \cap I' = \BP_i[0, x] \cap I'$, as $\BP'_i$ is the $I$-straightening of $\BP_i$, where $I \cap I' = \emptyset$.
    By our construction, we have that all points in $\BP_i \cap I'$ are close to points on $\CP_{i+1} \cap I''$ and vice versa.
    This gives a contradiction, from which it follows that $|p' - q'| \leq \delta$ when $q \in \CQ_{i+1}$.
\end{proof}

\begin{lemma}
\label{lem:distances_preserved_backwards}
    If $d_F(P^*, Q^*[0, y^*]) \leq \delta$ for some $y^* \in [0, 1]$, then $d_F(P, Q[0, y^*]) \leq 4L$.
\end{lemma}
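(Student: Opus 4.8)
The plan is to fix a $\delta$-matching $(f,g)$ between $P^*$ and $Q^*[0,y^*]$ and show that the matching between $P$ and $Q[0,y^*]$ obtained from the same pair of reparameterizations has cost at most $4L$. Since straightening acts coordinate-wise and leaves the domain $[0,1]$ unchanged, $(f,g)$ is a valid matching between $P$ and $Q[0,y^*]$, so only the cost bound needs proof. I would first pin down how $(f,g)$ aligns the two decompositions by establishing the analogue of Theorem~\ref{thm:matching_restrictions} for the straightened curves: each $\CP_i'$ is a line segment inside a $\delta$-neighborhood of the range of $\CP_i$, each $\BP_i''$ is a line segment of length greater than $4\delta$ (Lemma~\ref{lem:min_edge_length}), and the construction of $\CQ_i'$ and $\BQ_i''$ keeps $\CQ_i'$ within Hausdorff distance $O(\delta)$ of $\CP_i'$ while preserving the $\Omega(\delta)$ separations between consecutive pieces, so the same induction shows that $(f,g)$ matches $\CP_i'$ into $\CQ_i'$ and $\BP_i''$ into $\CQ_i' \circ \BQ_i'' \circ \CQ_{i+1}'$. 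In particular, if $(f,g)$ matches a point $P(x)$ lying on $\CP_i$ to $Q(y)$ then $Q(y)$ lies on $\CQ_i$, and if $P(x)$ lies on $\BP_i$ then $Q(y)$ lies on $\CQ_i \circ \BQ_i \circ \CQ_{i+1}$.

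I would then bound $\normInf{P(x) - Q(y)}$ for each matched pair by cases. The compact-to-compact case is the easy one and does not involve the straightened curves at all: if $P(x)$ lies on $\CP_i$ and $Q(y)$ on $\CQ_i$, then since $\CQ_i$ is within Hausdorff distance $\delta$ of $\CP_i$ there is $p' \in \CP_i$ with $\normInf{p' - Q(y)} \le \delta$, so $\normInf{P(x) - Q(y)} \le \mathrm{diam}(\CP_i) + \delta \le 4L - \delta$ by Lemma~\ref{lem:compact_properties}. The remaining cases all involve a bridge, and there the key point is that straightening barely moves bridge points: since $\BP_i$ is a subcurve of $P$ between two consecutive signature vertices it satisfies $\BP_i(s) - \BP_i(s') \le 2\delta$ for all $s < s'$, so it differs from its running maximum by at most $2\delta$; an analogous bound survives the second straightening (the cap keeps the extra displacement below $4\delta$) and holds for $\BQ_i$ by the same reasoning, so on bridges the straightening displacement is $O(\delta)$, whereas on compact curves it is at most the diameter $4L - 2\delta$. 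Combining these displacement bounds with $\normInf{P^*(x) - Q^*(y)} \le \delta$, and using the explicit running-maximum form of the straightenings to localize $P(x)$ to within $O(\delta)$ of the range of the relevant compact curve, I would derive $\normInf{P(x) - Q(y)} \le 4L$ in the bridge-to-$\CQ_i$, bridge-to-$\BQ_i$, and bridge-to-$\CQ_{i+1}$ sub-cases.

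The main obstacle I expect is exactly this bridge bookkeeping. The straightening intervals $I$ and $I'$ of consecutive pieces overlap and may run in opposite directions, so one must track which of the two (positive or negative) straightenings applies to each point and how the caps at $\CP_{i+1}(0) - 4\delta$ and $\CP_{i+1}(0)$ interact with the direction-preserving property; one also has to keep the additive $O(\delta)$ terms from pushing the bound past $4L$. A clean way to organize this is to prove, piece by piece, that any point of $P$ on $\CP_i$ (resp.\ on $\BP_i$) is within $\mathrm{diam}(\CP_i) + \delta \le 4L - \delta$ (resp.\ within $4L$) of its matched point on $Q$, handling $\BP_i$ through the trichotomy above and using in each sub-case that $\BP_i$ stays within $2\delta$ of the segment $\overline{\CP_i(1)\,\CP_{i+1}(0)}$ and $\BQ_i$ within $O(\delta)$ of it. Once every case gives at most $4L$, the induced matching witnesses $d_F(P, Q[0,y^*]) \le 4L$, as required.
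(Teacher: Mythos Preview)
Your route is plausible but far more elaborate than what the paper does. The paper never looks at which pieces $(f,g)$ aligns with which and never needs any analogue of Theorem~\ref{thm:matching_restrictions} for the straightened curves. It simply bounds the point-wise displacement caused by straightening: for every $x$ one has $|P(x)-P^*(x)|\le 4L-2\delta$ (on a compact piece $\CP_i$ the displacement is at most $\mathrm{diam}(\CP_i)\le 4L-2\delta$ by Lemma~\ref{lem:compact_properties}; on a bridge $\BP_i$ it is at most $4\delta$, using that $\BP_i$ is $2\delta$-direction-preserving and that $I'$ has width $4\delta$), and analogously $|Q(y)-Q^*(y)|\le 4L$ (using $\mathrm{diam}(\CQ_i)\le 4L$ and the standing assumption that each $\BQ_i$ is $4\delta$-direction-preserving). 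One triangle inequality applied to any matched pair then gives $|P(f(t))-Q(g(t))|=O(L)$, and the proof is finished.

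You in fact stated both displacement bounds yourself (``on bridges the straightening displacement is $O(\delta)$, whereas on compact curves it is at most the diameter $4L-2\delta$''), but did not notice that these two facts alone, together with the triangle inequality, already prove the lemma with no information about how the matching pairs the pieces. The case analysis, the Theorem~\ref{thm:matching_restrictions} analogue for $P^*$ and $Q^*$, and the ``bridge bookkeeping'' you flag as the main obstacle are all unnecessary detours. Your approach might extract a slightly sharper constant in the compact-to-compact case, but since the lemma is only ever used to produce an $O(L/\delta)$ approximation factor, that buys nothing.
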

\begin{proof}
    We prove that $|P(x) - P^*(x)| \leq 4L - 2\delta$, and similarly that $|Q(y) - Q^*(y)| \leq 4L$, for any $x \in [0, 1]$ and $y \in [0, t]$.
    It then follows that any $\delta$-matching between $P^*$ and $Q^*$ is also an $4L$-matching between $P$ and $Q$.
    
    Let $x \in [0, 1]$, and assume without loss of generality that $P(x) \in \CP_i \circ \BP_i$ for some $i$.
    Let $I$ be the interval such that $\CP_i'$ and $\BP_i'$ are the $I$-straightenings of $\CP_i$ and $\BP_i$, respectively.
    Also, let $I'$ be the interval such that $\BP_i''$ is the $I$-straightening of $\BP_i'$.
    
    Without loss of generality, assume that the $I$-straightenings happen in the positive direction.
    First assume that $P(x) = \CP_i(x')$ for some $x' \in [0, 1]$.
    Then because $\CP_i \subseteq I$, we have that $P^*(x) = \CP_i'(x') = \max_{x'' \leq x} \CP_i(x'')$.
    By Lemma~\ref{lem:compact_properties}, the diameter of $\CP_i$ is at most $4L - 2\delta$.
    It follows that $|P(x) - P^*(x)| \leq 4L - 2\delta$ if $P(x) \in \CP_i$.
    
    Now assume that $P(x) = \BP_i(x')$ for some $x' \in (0, 1)$.
    Note that because the $I$-straightening happens in the positive direction, we have that $\BP_i$ is $2\delta$-direction-preserving in the positive direction.
    Thus, if $P(x) \in I$, we have that $\BP_i'(x') = \max_{x'' \leq x} \BP_i(x) \geq \BP_i(x) - 2\delta$, and therefore that $|P(x) - P^*(x)| \leq 2\delta < 4L - 2\delta$.
    If $P(x) \in I'$ instead, then because $I'$ has diameter $4\delta$, it follows that $|P(x) - P^*(x)| \leq 4\delta < 4L - 2\delta$.
    We therefore have that $|P(x) - P^*(x)| \leq 4L - 2\delta$ if $P(x) \in \BP_i$.
    
    Now, to see that $|Q(y) - Q^*(y)| \leq 4L$, note that because the diameter of $\CP_i$ is at most $4L - 2\delta$, our construction leads to $\CQ_i$ having a diameter of at most $4L$.
    Also, recall that we assumed $\BQ_i$ to be $4\delta$-direction-preserving, as otherwise $d_F(P, Q) > \delta$ and we terminate the algorithm immediately.
    Using these two properties, we can apply the above proof for $|P(x) - P^*(x)|$ here as well.
\end{proof}

From Lemma~\ref{lem:distances_preserved} we obtain that any $\delta$-reachable point in $\F_{\leq \delta}(P, Q)$ is also a $\delta$-reachable point in $\F_{\leq \delta}(P^*, Q^*)$, and hence must be contained in any exit set.
Lemma~\ref{lem:distances_preserved_backwards} gives us that any $c\delta$-reachable point in $\F_{\leq \delta}(P^*, Q^*)$ is also $4cL$-reachable in $\F_{\leq \delta}(P, Q)$.
The above directly implies the following theorem:

\begin{lemma}
\label{lem:exit_sets}
    Let $P, Q \from [0, 1] \to \R^d$ be two curves with $m$ and $n$ vertices, respectively.
    Let $\delta > 0$ and $\alpha \in [1, n]$, and let $c \geq 1$ be a constant.
    In $O(m + n)$ time, we can construct two curves $P^*$ and $Q^*$, where $P^*$ is quasi-piecewise $(>4\delta)$-monotone, such that a $(c, \delta)$-exit set with respect to $P^*$ and $Q^*$ is an $(O(L / \delta), \delta)$-exit set with respect to $P$ and $Q$.
\end{lemma}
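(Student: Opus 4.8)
\textit{Proof proposal.}
The plan is to read the lemma off from the construction of the preceding two subsections together with the distortion bounds already in hand. Concretely, $P^*$ and $Q^*$ are built coordinate by coordinate: for each $\ell$ we take the $\delta$-signature $\Sigma$ of $P^\ell$ induced by the given $(\ell,\delta)$-signature, decompose $P^\ell$ and $Q^\ell$ into the subcurves $\CP_i,\BP_i,\CQ_i,\BQ_i$, and then apply the straightenings of Section~\ref{subsub:transforming_subcurves} to obtain ${P^*}^\ell$ and ${Q^*}^\ell$; the curve $P^*$ (resp.\ $Q^*$) is the one whose $\ell$-th projection is ${P^*}^\ell$ (resp.\ ${Q^*}^\ell$). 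If some $\BQ_i$ is not $4\delta$-direction-preserving we instead return the empty set, which is a valid $(1,\delta)$-exit set by Theorem~\ref{thm:matching_restrictions}. For the time bound: decomposing one coordinate takes $O(m+n)$ time, each straightening is computed by a single scan over the vertices of the curve, and $d$ is constant, so the construction runs in $O(m+n)$ time.

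Next I would verify that $P^*$ is quasi-piecewise $(>4\delta)$-monotone. Applying Lemma~\ref{lem:min_edge_length} in each coordinate, every edge of ${P^*}^\ell$ except possibly the last has length exceeding $4\delta$. Merging the per-coordinate vertex parameters makes $P^*$ a polygonal curve, and I would group its edges into maximal runs over which all $d$ coordinates are simultaneously monotone: each such run is a monotone curve, and I claim every run other than the final one has $\normInf{\,\cdot\,}$-diameter exceeding $4\delta$, so that $P^*$ is a composition of $(>4\delta)$-monotone pieces followed by one monotone tail. Establishing this claim is the step I expect to be the main obstacle, since it concerns the $d$-dimensional curve rather than merely its projections; the argument should exploit how the straightening intervals $I=[a-\delta,\CP_{i+1}(0)-4\delta]$ and $I'$ sit relative to the compact/bridge structure (Lemmas~\ref{lem:compact_properties} and~\ref{lem:bridge_properties}), forcing the coordinate whose reversal terminates a run to have already swept more than $4\delta$ inside that run.

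Finally I would transfer exit sets. Let $E$ be a $(c,\delta)$-exit set with respect to $P^*$ and $Q^*$; by Theorem~\ref{thm:monotone_decision_algorithm} it is an interval or the empty set, hence of constant complexity, and (inspecting Lemma~\ref{lem:qpm_exit}) we may assume every point of $E$ is $c\delta$-reachable with respect to $P^*,Q^*$. To see that $E$ is an $(O(L/\delta),\delta)$-exit set with respect to $P$ and $Q$, I check the two defining properties. For property~(1): if $(1,y)$ is $\delta$-reachable with respect to $P,Q$ then $d_F(P,Q[0,y])\leq\delta$, so Lemma~\ref{lem:distances_preserved} gives $d_F(P^*,Q^*[0,y])\leq\delta$; thus $(1,y)$ is $\delta$-reachable with respect to $P^*,Q^*$ and hence lies in $E$. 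For property~(2): the proof of Lemma~\ref{lem:distances_preserved_backwards} yields the pointwise bounds $\normInf{P(x)-P^*(x)}\leq 4L$ and $\normInf{Q(y)-Q^*(y)}\leq 4L$, hence $d_F(P,Q[0,y])\leq d_F(P^*,Q^*[0,y])+8L$ for every $y$; so any point $(1,y)\in E$ (being $c\delta$-reachable with respect to $P^*,Q^*$) satisfies $d_F(P,Q[0,y])\leq c\delta+8L=O(L)$, using $L\geq 2\delta$, and is therefore $O(L/\delta)\cdot\delta$-reachable with respect to $P,Q$ — in particular this holds for the $\delta$-close points of $E$. Setting $\alpha=O(L/\delta)$ completes the argument; the one point that needs care in the write-up is confirming, via Lemma~\ref{lem:qpm_exit}, that the exit set handed back really does consist only of $c\delta$-reachable points, so that property~(2) transfers without any leftover points to worry about.
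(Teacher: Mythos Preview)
Your proposal is correct and follows essentially the same route as the paper: the paper's proof is just the two sentences immediately preceding the lemma, invoking Lemma~\ref{lem:distances_preserved} for property~(1) and Lemma~\ref{lem:distances_preserved_backwards} for property~(2), with the construction and running time already handled in Sections~4.1--4.2 and the monotonicity claimed via Lemma~\ref{lem:min_edge_length}. The one place you go beyond the paper is in flagging that deducing \emph{$d$-dimensional} quasi-piecewise $(>4\delta)$-monotonicity of $P^*$ from the per-coordinate statement of Lemma~\ref{lem:min_edge_length} is not automatic; the paper simply asserts this implication in the opening paragraph of Section~\ref{sec:computing_exit_sets} without further argument, so your instinct that this step deserves justification is well placed, even though your sketch of it is not yet complete.
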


Together with the algorithm of Section~\ref{sec:piecewise_monotone} for computing exit sets in linear time if one of the curve is quasi-piecewise $(>4\delta)$-monotone, the above lemma immediately implies an algorithm for computing an $(O(L / \delta), \delta)$-exit set with respect to $P$ and $Q$ in $O(m + n)$ time.
This algorithm is used in the black box procedure of Theorem~\ref{thm:black_box_decision} in Section~\ref{sec:algorithm} to obtain an approximate decision algorithm for the \f distance.

\bibliographystyle{plainurl}
\bibliography{bibliography}

\end{document}